\documentclass[11pt,letterpaper]{article}

\usepackage[margin=1in]{geometry}
\usepackage{amsmath,amssymb,amsthm,mathtools}
\usepackage{microtype}
\usepackage{graphicx}
\usepackage{booktabs}
\usepackage{aliascnt}
\usepackage{xcolor}
\usepackage[
  colorlinks=true,
  linkcolor=teal,
  citecolor=teal,
  urlcolor=teal
]{hyperref}
\usepackage{cleveref}
\usepackage{enumitem}
\usepackage{authblk}

\newcommand{\E}{\mathbb E}
\newcommand{\D}{\mathcal{D}}
\newcommand{\ind}{\mathbf{1}}
\newcommand{\ALG}{\mathsf{ALG}}
\newcommand{\OPT}{\mathsf{OPT}}
\newcommand{\CR}{\mathsf{CR}}

\newcommand{\eqdef}{\vcentcolon=}
\newcommand{\eps}{\varepsilon}

\newcommand{\K}{\mathcal K}
\newcommand{\C}{\mathcal C}
\newcommand{\I}{\mathcal I}
\newcommand{\F}{\mathcal{F}}

\newif\ifshowcomments
\showcommentstrue          

\ifshowcomments
  \newcommand{\comment}[1]{\textcolor{red}{[#1]}}
  \newcommand{\daniel}[1]{\textcolor{red}{[Daniel: #1]}}
\else
  \newcommand{\comment}[1]{}
  \newcommand{\daniel}[1]{}
\fi

\title{Prophet Inequalities Beyond Utilitarian Social Welfare}

\author[1]{Daniel Halpern}
\author[2]{Abhiram Manohara}
\author[2]{Alexandros Psomas}

\affil[1]{Google Research}
\affil[2]{Purdue University}

\date{}

\newtheorem{theorem}{Theorem}[section]
\newaliascnt{lemma}{theorem}
\newtheorem{lemma}[lemma]{Lemma}
\aliascntresetthe{lemma}
\newaliascnt{proposition}{theorem}
\newtheorem{proposition}[proposition]{Proposition}
\aliascntresetthe{proposition}
\newaliascnt{corollary}{theorem}
\newtheorem{corollary}[corollary]{Corollary}
\aliascntresetthe{corollary}
\theoremstyle{definition}
\newaliascnt{definition}{theorem}

\aliascntresetthe{definition}
\newaliascnt{example}{theorem}

\aliascntresetthe{example}
\theoremstyle{remark}
\newaliascnt{remark}{theorem}

\aliascntresetthe{remark}

\begin{document}

\maketitle

\begin{abstract}
In the classical i.i.d. prophet-inequality problem, a single item must be allocated to one of $n$ agents who arrive sequentially, with values drawn independently from a known distribution. When an agent arrives, their value is revealed, and the algorithm must immediately decide whether to allocate the item or continue. The usual objective is to maximize utilitarian welfare: the expected value of the recipient. Guarantees for this objective immediately extend to the problem of allocating $m$ indivisible items to sequentially arriving agents with i.i.d.\ additive values. Utilitarian welfare, however, ignores how expected utility is distributed across agents. Motivated by a rich literature in fair division, we instead evaluate an online rule by its generalized $p$-mean welfare. This family includes utilitarian welfare at $p=1$, Nash welfare (the geometric mean of utilities) at $p=0$, and egalitarian welfare (the minimum utility) as $p$ tends to $-\infty$.

When the number of items is large, we show that our many-item fair-division problem is captured exactly by a single-item prophet problem in which an online rule is evaluated by the $p$-mean of the agents' expected utilities. We then characterize this single-item problem; every online rule is weakly Pareto dominated by a quantile-threshold rule, and an optimal egalitarian rule equalizes the agents' expected utilities. We prove that, for every $n$, the online optimum is at least $\Gamma \approx 0.7059$ times the prophet's egalitarian welfare; by monotonicity of the generalized means, the same guarantee extends to every $p\le1$ in the single-item problem. We further prove that, for egalitarian welfare, the optimal ratio converges to $\Gamma$ as $n\to\infty$. Thus, asymptotically, optimizing egalitarian rather than utilitarian welfare costs only about four percentage points, relative to the classical utilitarian ratio of $0.7451$. Finally, returning to fair division, we prove that the large-item assumption is necessary: when $m=n$, the worst-case competitive ratio converges to zero as $n \to \infty$ for every $p\le0$.

\end{abstract}

\section{Introduction}
\label{sec:introduction}

Consider the following online allocation problem. There are $m$ indivisible items and $n$ agents with additive preferences. Agents arrive sequentially; when an agent arrives, their values for the remaining items are revealed, and we must irrevocably choose a bundle for them. Item values are drawn i.i.d.\ from a known distribution. When our objective is to maximize utilitarian social welfare (the sum of agents' utilities) this problem decomposes item by item. Each item gives rise to a classical i.i.d.\ prophet-inequality problem: a single item must be allocated to one of $n$ sequentially arriving agents, and the algorithm must decide whether to allocate after observing each value. Running a prophet algorithm independently on every item therefore gives the same competitive guarantee in the many-item problem. The optimal ratio for the single-item problem converges to approximately $0.7451$ as the number of agents grows~\cite{hillkertz1982,kertz1986,correa2017posted}.

Utilitarian welfare, however, ignores how utility is distributed across agents: an allocation may generate high total value, but leave some agents with very little. A rich literature on fair division asks how to balance efficiency with meaningful utility guarantees for individual agents. A standard way to capture this tradeoff is generalized $p$-mean welfare~\cite{moulin2003fair,barman2020tight,barman2022universal}. This family includes utilitarian welfare at $p=1$ (the average utility), \emph{Nash} welfare (the geometric mean of agents' utilities) at $p=0$, and egalitarian welfare (the minimum utility across agents) at $p=-\infty$; as $p$ decreases, the objective places increasing weight on agents with lower utility.

The picture becomes particularly clean when the number of items is large. Fixing $n$ and the value distribution, independently applying a single-item rule to every item makes each agent’s utility per item converge to its expected utility under that rule as the number of items grows large. We show that, in this limit, our fair-division problem is captured exactly by a single-item prophet problem in which an online rule is evaluated by the $p$-mean of the agents' expected utilities.

For the prophet, fairness over expected utilities comes at no cost: allocating to an agent of maximum value and breaking ties uniformly maximizes total value while giving every agent the same expected utility. Online, however, these goals can conflict. Allocating more readily benefits early agents, while waiting preserves opportunities for later agents. Our main question is how much of the prophet's welfare an online rule can retain when the objective places greater weight on agents with lower expected utility.

\subsection{Our Contributions}

For a distribution $\D$, let $\ALG_n(p;\D)$ denote the optimal online
$p$-mean welfare. As shown in \Cref{sec: model}, the prophet's optimal
value is the same for every $p \in [-\infty, 1]$; we denote this common value by
$\OPT_n(\D)$.
We study the worst-case competitive ratio
\[
  \CR_n(p)
  =
  \inf_{\D}
  \frac{\ALG_n(p;\D)}{\OPT_n(\D)}.
\]

Our main result gives a tight answer for egalitarian welfare as the number of
agents grows. Let
\(
  \Gamma
  \approx0.70591334.
\)
We prove that for every $n \ge 2$, $
  \CR_n(-\infty)\ge\Gamma$,
and that this guarantee becomes tight as the number of agents grows: 
  $\lim_{n\to\infty}\CR_n(-\infty)=\Gamma$ (\Cref{thm:main}).
Thus, even the least-favored agent can be guaranteed more than $70\%$ of their
share of the prophet's value. Compared with the classical asymptotic ratio of
approximately $0.7451$, the price of requiring an equal guarantee across
arrival positions is only about four percentage points.

This result also gives guarantees across the entire spectrum of $p$-mean
objectives. The generalized mean of a fixed utility profile is nondecreasing in
$p$, while the offline benchmark (the prophet's performance) is the same for all $p \in [-\infty, 1]$. Hence $\CR_n(p)$ is
nondecreasing in $p$, and we can conclude that $\CR_n(p)\ge\Gamma$ for every $n\ge2$ and every
$p\in[-\infty,1]$. In fact, the same optimal egalitarian rule achieves this guarantee simultaneously for every $p \in [-\infty, 1]$. Together with the classical upper bound at $p=1$, this gives,
for every fixed $p\in[-\infty,1]$,
\[
  0.7059 \approx \Gamma
  \le \liminf_{n\to\infty}\CR_n(p)
  \le \limsup_{n\to\infty}\CR_n(p)
  \le \beta_{\mathrm{iid}}
  \approx 0.7451,
\]
where $\beta_{\mathrm{iid}}=\lim_{n\to\infty}\CR_n(1)$ is the classical
i.i.d.\ prophet constant~\cite{hillkertz1982,kertz1986,correa2017posted}.

Returning to the many-item fair division problem, we show that applying the $p$-mean to expected utilities gives an exact reduction to the single-item problem: for every $m$, the optimal online and offline welfares are $m$ times their single-item counterparts, and hence the competitive ratios coincide (\Cref{prop:fair-division-ex-ante}). When the $p$-mean is instead applied to realized (ex-post) utilities, the same connection holds asymptotically: for every fixed $n$, $p \in [-\infty, 1]$, and distribution $\D$, the competitive ratio converges to the corresponding single-item ratio as $m\to\infty$ (\Cref{prop:fair-division-limit}). Finally, we prove that this large-item
assumption is necessary: when $m=n$, the worst-case ex-post ratio tends to zero for every $p\le0$ (\Cref{prop:fair-division-impossibility}).

\subsection{Related Work}

Our work lies at the intersection of prophet inequalities and fair division. The classical prophet inequality, originating with Krengel and Sucheston~\cite{krengelsucheston1977}, guarantees one half of the prophet's expected value for independent, nonnegative random variables; this guarantee can be achieved by a single-threshold rule~\cite{samuelcahn1984}.  The i.i.d.\ case admits a strictly better constant.  Hill and Kertz characterized the finite-horizon extremal problem~\cite{hillkertz1982}, Kertz identified its asymptotic value~\cite{kertz1986}, and Correa et al.\ gave a matching algorithm, establishing the tight asymptotic ratio of approximately $0.7451$~\cite{correa2017posted}. This line of work asks how much total value an online rule can guarantee.  We instead ask what guarantees are possible for the entire vector of expected
utilities across arrival positions.

Our work is closest to Hill and Kennedy~\cite{hillkennedy1990}, who developed a framework for optimal stopping with generalized objective functions.  In particular, they explicitly considered $\max_\tau\min_j\E[X_j\mathbf 1\{\tau=j\}]$, exactly our egalitarian objective, for general sequences that need not be identically distributed.  Their work establishes universal prophet bounds for this generalized objective, and the ordinary prophet inequality, applied to a weighted scalarization of the utility vector, gives a $1/2$ guarantee in the general independent setting.  Our work focuses on what the additional i.i.d.\ structure buys: it leads to a strictly larger sharp asymptotic constant and gives guarantees for the entire family of $p$-mean objectives.  Arsenis and Kleinberg~\cite{arseniskleinberg2022} pursue a different notion of fairness, requiring identity-independent and time-independent treatment of candidates, and recover a $1/2$ prophet inequality even when both constraints are imposed.  Thus, their fairness requirements constrain the stopping rule, whereas ours directly evaluates the resulting vector of agents' utilities.

This objective-based view of fairness is motivated by a rich literature on fair division, where generalized $p$-mean welfare is used to interpolate between utilitarian, Nash, and egalitarian welfare~\cite{moulin2003fair,barman2020tight,barman2022universal}. Many works on online fair division fix the set of agents and allocate items that arrive online, focusing on various objectives, e.g., minimizing maximum envy~\cite{benade2024fair,benade2025dynamic,halpern2025online,mertzanidis2024automating,gkatzelis2021fair,he2019achieving,procaccia2024honor,schiffer2025improved,amanatidis2025online,kulkarni2026online}, maximizing Nash welfare~\cite{banerjee2022online,gao2021online,liao2022nonstationary,yang2024greedy,huang2023online} and egalitarian welfare~\cite{kawase2022online,springer2022online}, or, more generally, maximizing the generalized $p$-mean of utilities~\cite{huang2025long,barman2022universal,yang2026online,balseiro2023uniformly}. Our model reverses the direction of arrival: i.i.d.\ agents arrive against a fixed collection of items. Online arrival of agents has been primarily studied for (static) divisible items/resources~\cite{kash2014no,li2018dynamic,im2020dynamic,vardi2022dynamic,sinclair2023sequential,banerjee2026online}. The closest work to ours is that of Kulkarni et al.~\cite{kulkarni2025towards}, who study ex-post maximin-share guarantees for a fixed set of indivisible items when agents from a known set of additive valuation types arrive adversarially or stochastically. Finally, a related literature fixes the agents and studies repeated resource allocation under dynamic demands or resource availability~\cite{vuppalapati2023karma,fikioris2024incentives,banerjee2023robust,fikioris2025beyond}.

\subsection{Organization.}
\Cref{sec: model} defines the single-item model.
\Cref{sec: results} states our main results for this model and
characterizes the structure of optimal online rules.
\Cref{sec:fair-division} develops the connection to online fair
division. Finally, \Cref{sec:proof} presents the main proof ideas and
proves our central result. The appendices contain the remaining proofs
and technical details.

\section{Model}\label{sec: model}

Throughout the paper, all value distributions are supported on $\mathbb R_{\ge0}$ and have finite, positive mean. Fix an integer $n\ge2$, write $[n]\eqdef{1,\dots,n}$, and let $\D$ be such a distribution with CDF $F$.
The agents' values $X_1,\dots,X_n$ are drawn independently from $\D$.

There is one item. The agents arrive in the order $1,\dots,n$. When
agent $i$ arrives, we observe $X_i$ and must immediately decide whether
to allocate the item to $i$ or continue. The item may be allocated at
most once.

\paragraph{Allocation Rules and Feasible Utilities.}

An \emph{allocation rule} $\pi$ is a random variable taking values in
$[n]\cup\{\varnothing\}$, where $\pi=i$ means that agent $i$
receives the item and $\pi=\varnothing$ means that the item is never
allocated. A rule is \emph{online} if its decision at step $i$ may use
only $X_1,\dots,X_i$ and internal randomness. An \emph{offline} rule
(or a \emph{prophet}) may depend on the entire vector
$(X_1,\dots,X_n)$.

Agent $i$'s expected utility under $\pi$ is
\[
  v_i(\pi,\D)\eqdef\E\bigl[X_i\ind\{\pi=i\}\bigr].
\]
When $\D$ is clear from context, we abbreviate this to $v_i(\pi)$. The resulting utility profile is
$\mathbf v(\pi,\D)=(v_1(\pi,\D),\dots,v_n(\pi,\D))$. Define
\begin{align*}
  \mathcal V^{\mathrm{on}}(\D)
  &\eqdef
  \{\mathbf v(\pi,\D):\text{$\pi$ is an online rule}\},\\
  \mathcal V^{\mathrm{off}}(\D)
  &\eqdef
  \{\mathbf v(\pi,\D):\text{$\pi$ is an offline rule}\}.
\end{align*}

\paragraph{Social Welfare Objectives.}

For finite $p\le1$ with $p\ne0$, define the generalized $p$-mean
welfare of a utility profile $\mathbf v$ by
\[
  w_p(\mathbf v)
  \eqdef
  \left(\frac1n\sum_{i=1}^n v_i^p\right)^{1/p},
\]
where this value is taken to be $0$ for $p \le 0$ when any $v_i = 0$. 
When $p=1$, this is utilitarian social welfare, measured as average
expected utility: $w_1(\mathbf v)=\frac1n\sum_{i=1}^n v_i$.
It can also be continuously extended to $p = 0$ and $p = -\infty$. The limiting cases are Nash welfare (geometric mean),
$w_0(\mathbf v)=(\prod_{i=1}^n v_i)^{1/n}$, and egalitarian welfare
(minimum utility), $w_{-\infty}(\mathbf v)=\min_{i\in[n]}v_i$.
For every $p \in [-\infty, 1]$, $w_p$ is coordinatewise nondecreasing, concave, and positively homogeneous: $w_p(\lambda\mathbf v)=\lambda w_p(\mathbf v)$ for $\lambda\ge0$. Moreover, the power-mean inequality gives $w_{p_1}(\mathbf v)\ge w_{p_2}(\mathbf v)$ whenever $p_1\ge p_2$.

\paragraph{Online and Offline Optima.}
For each $p \in [-\infty, 1]$, define the optimal online value
\[
  \ALG_n(p;\D)
  \eqdef
  \sup_{\mathbf v\in\mathcal V^{\mathrm{on}}(\D)}
  w_p(\mathbf v).
\]
Observe that the offline optimum is the same for every $p \in [-\infty, 1]$: for any offline rule, the power-mean inequality gives
\[
  w_p(\mathbf v)
  \le w_1(\mathbf v)
  =\frac1n\sum_{i=1}^n v_i \le\frac1n\E\left[\max_{i\in[n]}X_i\right].
\]
Conversely, allocating to an agent of maximum value and breaking ties uniformly at random gives every agent expected utility $\E[\max_{i \in [n]} X_i]/n$, achieving the previous upper bound. We denote this common offline value by 
\[
   \OPT_n(\D) \eqdef  \sup_{\mathbf v\in\mathcal V^{\mathrm{off}}(\D)}
  w_p(\mathbf v)
  =
 \frac1n\E\left[\max_{i\in[n]}X_i\right]
  \qquad\text{for every }p \in [-\infty, 1].
\]
Thus the worst-case competitive ratio is
\[
  \CR_n(p)
  \eqdef
  \inf_{\D}
  \frac{\ALG_n(p;\D)}{\OPT_n(\D)}.
\]
Here and throughout, the infimum ranges over distributions on
$\mathbb R_{\ge0}$ with finite, positive mean.
When $p=1$, $\CR_n(1)$ reduces to the classic i.i.d.\ prophet inequality, where it is known that $\lim_{n\to\infty}\CR_n(1) \approx0.7451$~\cite{correa2017posted}.

\paragraph{Quantile Representation and Threshold Rules.}

It is convenient to parameterize values by their quantiles. Let
$F^{-1}(q) = \inf\{t \in \mathbb{R} \mid F(t) \ge q\}$ be the quantile
function of $\D$. We may assume without loss of generality that the
online algorithm observes i.i.d.\ uniform quantiles
$U_1, \dots, U_n \sim \operatorname{Unif}[0, 1]$, where
$X_i = F^{-1}(U_i)$ almost surely.\footnote{This introduces no extra
informational power: upon observing $X_i$, the algorithm can draw an
independent uniform variable $R_i \sim \operatorname{Unif}[0, 1]$ and set
$U_i = F(X_i^-) + R_i(F(X_i) - F(X_i^-))$, where
$F(t^-) \eqdef \lim_{s \to t^-} F(s)$. The resulting variables
$U_i$ are i.i.d.\ $\operatorname{Unif}[0,1]$ and satisfy
$F^{-1}(U_i) = X_i$ almost surely.}

A \emph{quantile-threshold rule} is specified by thresholds
$q_1,\ldots,q_{n-1}\in[0,1]$. At step $i<n$, it allocates the item to
agent $i$ if the item is still available and $U_i\ge q_i$. If the item
remains available at step $n$, it is allocated to agent $n$. For
notational convenience, we may set $q_n=0$, so that the same description
applies to every step.

For a quantile-threshold rule with threshold vector
$\mathbf q=(q_1,\ldots,q_{n-1})$, let $r_i(\mathbf q)$ denote the
probability that the item reaches agent $i$. Then
$r_1(\mathbf q)=1$ and
$r_{i+1}(\mathbf q)=r_i(\mathbf q)q_i$. Equivalently,
$r_i(\mathbf q)=\prod_{j=1}^{i-1}q_j$. When the threshold rule is
clear from context, we abbreviate $r_i(\mathbf q)$ to $r_i$.

For a distribution $\D$ with CDF $F$,  define
$S_\D(q)\eqdef\E[F^{-1}(U)\ind\{U\ge q\}] = \int_q^1F^{-1}(u) \, du$ where $U\sim\operatorname{Unif}[0,1]$. When $\D$ is clear from context,
we abbreviate $S_\D$ to $S$. Conditional on the item reaching an agent,
threshold $q$ gives that agent expected utility $S(q)$ and
leaves the item available for the next agent with probability $q$. Consequently, for a quantile-threshold rule
$\pi$,  $v_i(\pi, \D) = r_i S_\D(q_i)$.

\section{Results}\label{sec: results}

For \(a,b>0\), define
\[
    \rho(a,b)
    \eqdef
    \frac{1}{
        \left(1-\dfrac{b}{a+b}e^{-a}\right)
        \left(1+\dfrac{1-e^{-b}}a\right)
    }.
\]
and let
\(
  \Gamma
  \eqdef
  \inf_{a,b>0}\rho(a,b).
\)
Numerical minimization gives $\Gamma\approx0.70591334$, with a minimizer near $a\approx0.81833$ and $b\approx2.42237$. Our main theorem states that the worst-case competitive ratio for egalitarian welfare is bounded below by \(\Gamma\) for every \(n\) and converges to \(\Gamma\) as \(n\to\infty\).

\begin{theorem}\label{thm:main}
For every \(n\ge2\), $\CR_n(-\infty)\ge\Gamma$. Moreover, $\lim_{n\to\infty}\CR_n(-\infty)=\Gamma$.
\end{theorem}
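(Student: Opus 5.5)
We reduce the statement to an analysis of quantile-threshold rules that equalize the agents' expected utilities, and then prove a lower bound and a matching asymptotic upper bound. The abstract notes that every online rule is weakly Pareto dominated by a quantile-threshold rule. Because $w_{-\infty}$ is monotone, it therefore suffices to optimize over threshold vectors $\mathbf q$, with $v_i=r_iS(q_i)$. Using concavity of $S$ together with a continuity argument, we will show that an egalitarian optimum can be taken to equalize all $v_i$. Concretely, we fix a target $c$ and define thresholds greedily by $r_iS(q_i)=c$ for $i<n$, which gives $q_i=S^{-1}(c/r_i)$. The last agent receives $r_nS(0)=r_n\E[X]$. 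The largest feasible $c$, meaning the largest $c$ for which $r_n\E[X]\ge c$ still holds, equals $\ALG_n(-\infty;\D)$. The goal is to compare this $c$ with $\OPT_n=\frac1n\E[\max X_i]=\frac1n\int_0^1 F^{-1}(u)\,n u^{n-1}\,du$.

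\textbf{Lower bound.} We normalize so that $\OPT_n=1/n$ and pass to the continuum scaling $t=i/n$, with the remaining probability mass $r(t)$ decreasing. The quantity $n c/\OPT$ is set to equal the candidate ratio $\rho$. The plan is a worst-case reduction in the style of Hill--Kertz and Correa et al. For fixed $c$, feasibility depends on $S$ only through the constraint that $\int F^{-1}\,d(u^n)$ is fixed. Since $S$ is concave with $S(1)=0$, the worst distributions should be extremal. Our guess is a two-piece form: a point mass at a large value of probability about $a/n$, together with a bulk whose tail is exponential-like. In the limit this produces the two parameters $(a,b)$. Here $e^{-a}$ is the probability that no agent sees the top atom, and the factor $1-\tfrac{b}{a+b}e^{-a}$ together with $1+\tfrac{1-e^{-b}}{a}$ come from integrating the equalizing ODE $r'(t)=-\,n(1-q(t))\,r(t)$ along the two regimes.

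To make this rigorous for every finite $n$, we intend to write the equalizing recursion as a discrete inequality. We then show that the finite-$n$ ratio is at least its continuum analogue, using convexity of $x\mapsto(1-x/n)^n\le e^{-x}$, and verify that the continuum analogue is at least $\inf_{a,b}\rho(a,b)$. The dual step is to show that, for fixed ratio target $\Gamma$, the equalizing rule never runs out of probability before agent $n$. We plan to argue this by contradiction: if it ran out, the constraint would be linear in $F^{-1}$, so by a Carathéodory-type argument the adversary could be taken to be a mixture of at most two point masses plus an atom at $0$. This reduces the problem to the two-parameter family.

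\textbf{Upper bound.} For $\lim\CR_n(-\infty)\le\Gamma$, we take a near-minimizer $(a,b)$ and construct $\D_n$ with a high atom of mass about $a/n$ and a lower part tuned so that $b$ emerges as the scaled length of the second phase. Since the egalitarian optimum is the equalizing threshold rule, we compute its value exactly, take $n\to\infty$, and obtain $\rho(a,b)+o(1)$.

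The main obstacle is the extremal reduction in the lower bound. We must show that the worst case for the nonlinear equalizing recursion, in which $q_i$ depends on $r_i$, is attained within the two-parameter family uniformly in $n$, and not only in the limit. We will first attempt a monotone coupling showing that replacing $S$ by its extremal piecewise-linear majorant with the same $\OPT$ can only lower the feasible $c$. If that fails, we will fall back on a direct finite-$n$ comparison to the ODE via a supermartingale-type potential.
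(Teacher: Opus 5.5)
\textbf{The lower bound has a genuine gap.} Your reduction to equalizing threshold rules is sound and matches \Cref{prop:equalizing-rule}. The problem is the lower bound $\CR_n(-\infty)\ge\Gamma$, which is the real content of the theorem: the proposal gives no working mechanism for it. The Carath\'eodory step fails as stated. The equalizing thresholds depend nonlinearly on $F$, and linearity appears only after you freeze them. Once frozen, the data you must preserve are $S(q_1),\dots,S(q_{n-1})$ and $\mu=S(0)$, which are $n$ linear constraints on the nondecreasing function $F^{-1}$. A Carath\'eodory-type argument therefore yields step functions with up to $n$ jumps (about $n$ atoms), not ``two point masses plus an atom at $0$.'' Nothing in the proposal shows that the finite-$n$ worst case lies in a two-parameter family, and the paper never needs or claims this. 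Likewise, ``verify that the continuum analogue is at least $\inf_{a,b}\rho(a,b)$'' is not a verification step. It is an optimization over all concave $S$, and it is exactly the hard part.

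\textbf{The missing idea.} The paper uses the substitution $H(z)=-(n-1)\log S^{-1}(cz)$ for $z\le 1/c$, and $+\infty$ beyond, where $c=\ALG_n(-\infty;\D)$ and $\E[X]=1$. This gives three facts:
\begin{itemize}
\item Concavity of $S$ makes $S^{-1}$ concave, so $H$ is convex.
\item The tail formula gives $\OPT_n(\D)/c=\int_0^\infty e^{-H}$.
\item Since $H(1/r_i)=(n-1)\log(r_i/r_{i+1})$ and $H$ is nondecreasing, $\int_{1/r_i}^{1/r_{i+1}}dz/(zH(z))\le 1/(n-1)$ on each step, so $\int_1^\infty dz/(zH(z))\le1$.
\end{itemize}
This step-by-step monotonicity bound, not $(1-x/n)^n\le e^{-x}$, is the discrete-to-continuum comparison that works uniformly over distributions. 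It eliminates $n$, which is why the bound holds for every $n\ge2$ and not just asymptotically.

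One must then still solve $\sup\{\int_0^\infty e^{-G}:G\text{ convex, nondecreasing, }G(0)=0,\ \int_1^\infty dz/(zG)\le1\}$. The paper does this in four steps: existence of a maximizer via Helly; first variations along renormalized hinges $(z-a)_+$; the averaged condition $\int\Psi\,d\mu^\star=0$, which forces $\Psi=0$ on the support; and concavity and log-concavity arguments that leave only a single hinge $\sigma(z-J)_+$ truncated at $M$. That hinge has value exactly $1/\rho(a,b)$. Your proposal contains neither this reduction nor a substitute for this optimization.

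\textbf{The upper bound is underspecified.} Your sketch does not fix a distribution, and its heuristics are off. In the working family, the top atom has vanishing probability ($1/\eps$ with probability $\eps\to0$). It hands each reached agent a unit of utility at no cost in reach probability. The middle atom ($n/a$ with probability $b/n$) produces both factors of $\rho$. The factor $e^{-b}$ is the chance that no agent draws it. The factor $e^{-a}$ comes from the recurrence $c-r_{i+1}\ge(1+a/n)(c-r_i)$, not from missing the top atom. A single top atom of probability $a/n$, with the remaining mass at $0$, by itself only brings the asymptotic ratio down to about $0.77$.
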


\Cref{sec:proof} is devoted to the proof of \Cref{thm:main}. It begins
with a detailed overview of the main ideas, followed by proofs of the supporting lemmas and the theorem itself.

Because $w_p(\mathbf v)$ is nondecreasing in $p$, both $\ALG_n(p;\D)$ and $\CR_n(p)$ are nondecreasing in $p$. The following corollary is immediate.

\begin{corollary}\label{cor:all-p-lower-bound}
For every $n\ge 2$ and every $p\in[-\infty,1]$, $\Gamma \le \CR_n(p)$, and
\[
  0.7059 \approx \Gamma
  \le \liminf_{n\to\infty}\CR_n(p)
  \le \limsup_{n\to\infty}\CR_n(p)
  \le \beta_{\mathrm{iid}}
  \approx 0.7451.
\]
\end{corollary}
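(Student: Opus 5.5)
The corollary follows from \Cref{thm:main} together with the monotonicity of $w_p$ in $p$. The only ingredient not already in the theorem is the classical i.i.d.\ prophet constant, which supplies the upper bound.

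\textbf{Step 1: $\CR_n(p)$ is nondecreasing in $p$.} Fix $n\ge2$, a distribution $\D$, and $p\in[-\infty,1]$. For any online rule $\pi$, the power-mean inequality gives
\[
  w_p(\mathbf v(\pi,\D))\ge w_{-\infty}(\mathbf v(\pi,\D)).
\]
Taking the supremum over $\mathbf v\in\mathcal V^{\mathrm{on}}(\D)$ yields $\ALG_n(p;\D)\ge\ALG_n(-\infty;\D)$. The denominator $\OPT_n(\D)$ does not depend on $p$, as shown in \Cref{sec: model}. Dividing by it and taking the infimum over $\D$ gives $\CR_n(p)\ge\CR_n(-\infty)$. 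The same argument, applied to any pair $p_1\ge p_2$, shows more generally that $\CR_n(p_1)\ge\CR_n(p_2)$.

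\textbf{Step 2: the lower bounds.} By \Cref{thm:main}, $\CR_n(-\infty)\ge\Gamma$ for every $n\ge2$. Combined with Step 1, this gives $\CR_n(p)\ge\Gamma$ for every $n\ge 2$ and every $p\in[-\infty,1]$. Taking $\liminf_{n\to\infty}$ gives $\liminf_{n}\CR_n(p)\ge\Gamma$. The inequality $\liminf\le\limsup$ is trivial.

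\textbf{Step 3: the upper bound.} Applying Step 1 with $p_1=1$ and $p_2=p$ gives $\CR_n(p)\le\CR_n(1)$ for all $n$. At $p=1$, $w_1(\mathbf v)=\frac1n\sum_i v_i=\frac1n\E[X_\pi]$, where $X_\varnothing\eqdef0$. So $\ALG_n(1;\D)$ is exactly $\frac1n$ times the optimal expected value of an online stopping rule, and $\OPT_n(\D)=\frac1n\E[\max_i X_i]$. The factors of $\frac1n$ cancel, so $\CR_n(1)$ is precisely the classical i.i.d.\ prophet ratio. By~\cite{hillkertz1982,kertz1986,correa2017posted}, this ratio converges to $\beta_{\mathrm{iid}}\approx0.7451$. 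Hence
\[
  \limsup_{n}\CR_n(p)\le\lim_{n}\CR_n(1)=\beta_{\mathrm{iid}}.
\]

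There is no real obstacle here. The only point needing care is the identification in Step 3 of $\CR_n(1)$ with the classical ratio. This requires noting that the classical literature's online rules, possibly randomized stopping times, coincide with the paper's online rules, including the option of never allocating. It also requires that the worst case there is taken over the same class of nonnegative distributions with finite positive mean. Both hold directly from the definitions in \Cref{sec: model}.
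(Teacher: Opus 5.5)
Your proof is correct and matches the paper's own short argument: $w_p$ is nondecreasing in $p$ while $\OPT_n(\D)$ does not depend on $p$, so $\CR_n(p)$ is nondecreasing in $p$, with \Cref{thm:main} giving the lower bounds and the classical constant $\beta_{\mathrm{iid}}=\lim_{n\to\infty}\CR_n(1)$ giving the upper bound. One small nit: the classical i.i.d.\ results are stated for bounded variables, so matching them to the paper's finite-mean class takes a routine scaling and truncation step rather than following directly from the definitions, though the paper also takes this identification for granted.
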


Finally, it is worth noting that for two agents, the egalitarian competitive ratio coincides with the classical utilitarian ratio~\cite{hillkertz1982}. By monotonicity, the same is then true for every intermediate $p$-mean objective. However, this coincidence already disappears with three agents. The proof, given in
Appendix~\ref{subsec: proof of gap for 3 agents}, combines a direct
two-agent analysis with a three-agent instantiation of the limiting hard distribution from \Cref{thm:main}.

\begin{proposition}\label{prop:small-n}
    For every $p \in [-\infty, 1]$,
    \(
        \CR_2(p) = \frac{2 + \sqrt{2}}{4} \approx 0.8536.
    \)
    By contrast,
    \( \CR_3(-\infty) \leq \frac{128}{159} \approx 0.805 < 0.818 \approx \CR_3(1). \)
\end{proposition}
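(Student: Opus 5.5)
The proof has two parts: a matching lower bound for $n=2$, and an explicit hard instance for $n=3$.

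\emph{Two agents.} Since $\CR_2(p)$ is nondecreasing in $p$, we have $\CR_2(-\infty)\le\CR_2(p)\le\CR_2(1)$. The classical Hill--Kertz analysis~\cite{hillkertz1982} gives $\CR_2(1)=\frac{2+\sqrt2}{4}$. It therefore suffices to show $\CR_2(-\infty)\ge\frac{2+\sqrt2}{4}$, and I would do this with an explicit threshold rule.

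Write $\mu=S(0)=\E[X]$. Two facts drive the argument:
\begin{itemize}[nosep]
\item a threshold $q$ at step $1$ yields $v_1=S(q)$ and $v_2=q\mu$;
\item $\OPT_2=\frac12\E[\max(X_1,X_2)]=\int_0^1 F^{-1}(u)\,u\,du=\int_0^1 S(q)\,dq$.
\end{itemize}
The function $S$ is continuous, concave (since $F^{-1}$ is nondecreasing) and nonincreasing, with $S(0)=\mu$ and $S(1)=0$. So there is a $t\in(0,1)$ with $S(t)=t\mu$, and the rule with $q_1=t$ gives egalitarian value $t\mu$.

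It remains to upper bound $\int_0^1 S$ subject to these constraints. The bounds I would use are:
\begin{itemize}[nosep]
\item on $[0,t]$, $S\le\mu$ by monotonicity;
\item on $[t,1]$, concavity forces $S$ below the extension of the chord through $(0,\mu)$ and $(t,t\mu)$, i.e.\ $S(q)\le\mu\bigl(1-\frac{(1-t)q}{t}\bigr)^+$.
\end{itemize}
This gives a one-variable upper bound $B(t)$ on $\OPT_2/\mu$, and the ratio is at least $t/B(t)$. The two regimes are $t\le\frac12$, where the chord hits zero before $1$, and $t>\frac12$.

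I then minimize over $t$ and check that the minimum is $\frac{2+\sqrt2}{4}$. If the crude bound $S\le\mu$ on $[0,t]$ turns out too lossy, I would refine it. The extremal $S$ is piecewise linear, corresponding to a two-point distribution, so I would parametrize that family directly and optimize, as Hill--Kertz do. Either way the extremal instance should match the utilitarian one. This agrees with the fact that for $n=2$ the utilitarian optimum is itself a single threshold at which the two agents can be equalized.

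\emph{Three agents.} The claim $\CR_3(1)\approx0.818$ is again Hill--Kertz. For the upper bound $\CR_3(-\infty)\le\frac{128}{159}$ I would take the limiting hard distribution from the proof of \Cref{thm:main}, which should be a discrete distribution with an atom at $0$ or at a low value, a mid value, and a rare high atom. I would instantiate its parameters for $n=3$ with rational choices.

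By the structural results promised in \Cref{sec: results}, it suffices to optimize over quantile-threshold rules $(q_1,q_2)$. An optimal egalitarian rule equalizes $S(q_1)=q_1S(q_2)=q_1q_2\mu$. Because $S$ is piecewise linear, this reduces to a small algebraic system, solved case by case according to which linear piece each $q_i$ lies on. I would then compare the best equalized value with $\frac13\E[\max(X_1,X_2,X_3)]$ to obtain $\frac{128}{159}$.

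The main obstacle is this three-agent step. One must choose the parameters so the ratio comes out exactly $\frac{128}{159}$, and, more delicately, check that no threshold pair on any other linear piece does better. I would handle the latter by exhausting the finitely many piece-combinations and bounding each.
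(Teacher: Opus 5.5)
\paragraph{Two agents: the concavity bounds are too weak.} Your two-agent framework matches the paper's: a threshold $t$ with $S(t)=t\mu$, $\OPT_2=\int_0^1 S$, a concavity bound on this integral, then minimization over $t$. The problem is that the bounds you propose cannot reach $\frac{2+\sqrt2}{4}$.

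Normalize $\mu=1$. Since $S(q)\ge 1-q$, you always have $t\ge\frac12$, so your first regime is vacuous. Your bound is then $B(t)=t+\frac{(1-t)(t^2+2t-1)}{2t}$, and
\[
\frac{t}{B(t)}=\frac{2t^2}{-t^3+t^2+3t-1}.
\]
This bottoms out near $0.763$, at the root $t\approx0.596$ of $t^3+3t-2=0$.

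The loss is structural. The bound $S\le1$ on $[0,t]$ is nearly tight only if $S$ drops almost vertically into $(t,t)$. The chord bound on $[t,1]$ is tight only if $S$ leaves $(t,t)$ with the gentlest admissible slope. Concavity forbids both at once.

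Refining the $[0,t]$ piece alone does not help enough. Since $S(1)\ge0$, the slope into $t$ is at most $t/(1-t)$, which gives the best pointwise bound there. Combined with your chord bound, this yields $\widetilde B(t)=t+\frac{(1-t)(2t-1)}{t}$, and still only $\min_t t/\widetilde B(t)=\frac45$.

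\paragraph{The missing idea.} Bound $S$ on all of $[0,1]$ by a \emph{single} supporting line at $t$. If $-\sigma$ is a supergradient of $S$ at $t$, then $S(0)=1$ and $S(1)=0$ force $\frac{1-t}{t}\le\sigma\le\frac{t}{1-t}$. Then $S(u)\le\min\{1,\,t-\sigma(u-t)\}$ for every $u$, so
\[
\int_0^1 S\le 1-\frac{(1-t)^2}{2}\Bigl(\sigma+2+\frac1\sigma\Bigr)\le 1-2(1-t)^2 .
\]
The ratio is therefore at least $\min_{t\in[1/2,1]} t/(1-2(1-t)^2)=\frac{2+\sqrt2}{4}$, attained at $t=1/\sqrt2$. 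The paper phrases the same argument with a supergradient of $S^{-1}$ at the fixed point, using $\int_0^1 S=\int_0^1 S^{-1}$.

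Your fallback of optimizing over a parametrized family of piecewise-linear $S$ only exhibits bad instances, i.e.\ upper bounds on $\CR_2$. It gives a lower bound only if you first reduce every $\D$ to that family, and that reduction is exactly what the supporting-line bound supplies.

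\paragraph{Three agents: the parameters are missing.} Your plan is the paper's, but it leaves unspecified the one thing that must be supplied, namely the parameters. The paper takes $a=1$, $b=\frac32$ in the hard construction: values $1/\eps$, $3$, $0$ with probabilities $\eps$, $\frac12$, $\frac12-\eps$.

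No case exhaustion is needed. By \Cref{prop:equalizing-rule} the equalizing rule is optimal. The recurrence bound from the proof of \Cref{thm:main} (if $c>1$, every threshold lies on the middle linear piece) gives $c\le\frac{80}{53}$; as $\eps\to0$ the thresholds are $q_2=\frac{8}{11}$ and $q_1=\frac{44}{53}$. Meanwhile $\OPT_3\to\frac{15}{8}$ as $\eps\to0$, so the ratio is at most $\frac{128}{159}$.
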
 

\subsection{Structure of Optimal Online Rules}

Our proof relies on two structural results about optimal rules. The
first, proved in Appendix~\ref{app:pareto}, shows that
quantile-threshold rules dominate arbitrary online rules and
characterize the Pareto frontier of online expected-utility profiles. This is analogous to the utilitarian setting~\cite{hillkertz1982}.

\begin{proposition}\label{prop:pareto}
    For every distribution $\mathcal{D}$ with CDF $F$, the following hold:
    \begin{enumerate}[leftmargin=*]
        \item For every online rule $\pi$, there is a threshold rule $\pi'$ such that $v_i(\pi', \mathcal{D}) \ge v_i(\pi, \mathcal{D})$ for all $i \in [n]$.
        \item For every threshold rule $\pi$ parameterized by $q_1, \ldots, q_{n-1}$ such that $q_i \ge F(0)=\Pr[X=0]$ for all $i \in [n-1]$, $\mathbf{v}(\pi, \mathcal{D})$ is Pareto optimal, i.e., there is no $\mathbf{u} \in \mathcal{V}^{\mathrm{on}}(\mathcal{D})$ such that $u_i \ge v_i(\pi, \mathcal{D})$ for all $i \in [n]$, with at least one inequality strict.
    \end{enumerate}
\end{proposition}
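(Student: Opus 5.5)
Both parts rest on one observation. Conditional on the item reaching agent $i$, any online rule's behavior at step $i$ amounts to an acceptance probability function of $U_i$. Among all acceptance functions with the same overall acceptance probability, accepting the top quantiles maximizes agent $i$'s expected utility, and leaves the continuation distribution of future values unchanged.

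\textbf{Part 1.} I will proceed by backward surgery, or equivalently by a direct reduction.

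Given an online rule $\pi$, let $r_i=\Pr[\pi\text{ reaches } i]$ (i.e.\ $\pi\notin\{1,\dots,i-1\}$) and let $a_i=\Pr[\pi=i]$. Because $U_i$ is independent of $X_1,\dots,X_{i-1}$ and the internal randomness, the event ``reach $i$'' is independent of $U_i$. Hence
\[
v_i(\pi)=\E\bigl[F^{-1}(U_i)\ind\{\text{reach }i\}\,\phi_i\bigr],
\]
where $\phi_i\in[0,1]$ is the conditional acceptance probability. Since $F^{-1}$ is nondecreasing, the bathtub (Neyman--Pearson) principle gives $v_i(\pi)\le r_i\,S(q_i)$, where $q_i$ is chosen so that $r_i(1-q_i)=a_i$ whenever $r_i>0$.

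Now define $\pi'$ as the threshold rule with these $q_i$ for $i<n$; if $r_i=0$, set $q_i$ arbitrarily, say $q_i=1$. Then $r_{i+1}=r_i-a_i=r_iq_i$, so $\pi'$ reaches each agent with the same probability $r_i$ and
\[
v_i(\pi')=r_iS(q_i)\ge v_i(\pi)\quad\text{for } i<n.
\]
For $i=n$, $\pi'$ always allocates, so $v_n(\pi')=r_nS(0)\ge v_n(\pi)$. The only care needed is the independence claim and ties at atoms, which the uniform-quantile representation handles.

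\textbf{Part 2.} Suppose $\mathbf u\in\mathcal V^{\mathrm{on}}$ weakly dominates $\mathbf v(\pi)$ with some strict inequality. By Part 1 I may take $\mathbf u$ to come from a threshold rule $\mathbf q'$, with reach probabilities $r'_i$. I argue by induction on $i$ that $r'_i=r_i$ and that $u_i=v_i$ for $i<n$; the key step is the forward induction.

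Suppose $r'_j=r_j$ for $j\le i$. The function $g(q)=S(q)$ is strictly decreasing on $[F(0),1]$ when restricted to quantiles where $F^{-1}>0$; more precisely, $S(q')>S(q)$ iff $q'<q$ and $F^{-1}>0$ on part of $(q',q)$. Since $q_i\ge F(0)$, the inequality $u_i\ge v_i$ forces either $q'_i\le q_i$, or $q'_i>q_i$ with $S(q'_i)=S(q_i)$. The latter would require $F^{-1}=0$ on $(q_i,q'_i)$, which is impossible because $q_i\ge F(0)$ gives $F^{-1}>0$ above $q_i$ (up to measure zero).

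So $q'_i\le q_i$, which gives $r'_{i+1}\le r_{i+1}$. If the inequality is strict at some first index, then from then on the dominating rule carries less reach, and I compare totals. In every threshold rule with thresholds at least $F(0)$,
\[
\sum_i v_i=\E[X_\tau],
\]
and this cannot increase by lowering early thresholds and then meeting later agents' utilities. The cleaner route is a potential argument: for each $k$,
\[
\sum_{i\ge k}u_i\le r'_k\,V_k,
\]
where $V_k$ is the optimal utilitarian continuation value. To achieve the tail utilities with strictly smaller reach would need $u_n=r'_nS(0)<r_nS(0)=v_n$ unless every later agent also loses.

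I would make this precise by backward reasoning from agent $n$. We have $u_n\ge v_n$, so $r'_n\ge r_n$. Then $u_{n-1}=r'_{n-1}S(q'_{n-1})\ge r_{n-1}S(q_{n-1})$ together with $r'_{n-1}q'_{n-1}\ge r_{n-1}q_{n-1}$ should force $r'_{n-1}\ge r_{n-1}$, by the strict concavity or monotonicity of $S$ on $[F(0),1]$ combined with the forward inequality $r'_{n-1}\le r_{n-1}$. Matching the forward bound $r'\le r$ against the backward bound $r'\ge r$ gives equality of all reaches and thresholds, and hence $\mathbf u=\mathbf v$.

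\textbf{The main obstacle} is this backward step, because a lower threshold earlier can trade off against later agents. The tool is the forward monotonicity $r'_{i+1}\le r_{i+1}$ combined with $u_n\ge v_n\Rightarrow r'_n\ge r_n$; this pins $r'_n=r_n$, and then every $q'_i=q_i$ wherever $S$ is strictly monotone. That holds on $[F(0),1]$ modulo flat regions, which are handled via atoms of $F^{-1}$ at identical utilities.
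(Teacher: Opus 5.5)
Your Part 1 is the paper's argument. Your Part 2 also uses the paper's ingredients: a forward comparison giving $q'_i\le q_i$ and $r'_i\le r_i$ from strict monotonicity of $S$ on $[F(0),1]$, followed by $u_n\ge v_n$. Two side remarks:
\begin{itemize}
\item That forward induction must carry the hypothesis $r'_i\le r_i$, not $r'_i=r_i$. It then goes through via $r'_iS(q'_i)\ge r_iS(q_i)\ge r'_iS(q_i)$, with $r'_i>0$ whenever $v_i>0$.
\item $S$ has no flat pieces on $[F(0),1]$, because $F^{-1}>0$ on $(F(0),1]$, so your closing worry about flat regions is moot.
\end{itemize}

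\textbf{The gap is in closing the argument.} Your final step says $u_n\ge v_n$ pins $r'_n=r_n$, and hence every $q'_i=q_i$. This needs $r_n=\prod_i q_i>0$. When $F(0)=0$, the hypothesis $q_i\ge F(0)$ allows $q_k=0$ for some $k$. Then $r_n=0$, and $r'_n=r_n$ says nothing about $q'_1,\dots,q'_k$. The paper treats this case separately. At the first such $k$, the inequality $u_k\ge v_k=r_k\mu$, together with $S\le\mu$ and $r'_k\le r_k$, forces $r'_k=r_k$. A strict inequality $r'_i<r_i$ would persist while $q_i>0$, so the two rules agree through agent $k$, and both have zero reach afterwards.

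Your ``backward'' route would avoid this case split, but you leave its only nontrivial step as something that ``should'' hold. You also appeal to strict concavity of $S$, which $S$ need not have (it is piecewise linear for discrete distributions) and which is not what is needed. The step is as follows. Suppose $r'_{i+1}\ge r_{i+1}$, $u_i\ge v_i$, and $r'_i<r_i$.
\begin{itemize}
\item If $q_i=0$, then $r_i\mu\le r'_iS(q'_i)\le r'_i\mu$, a contradiction.
\item If $q_i=1$, then $r_i=r_{i+1}\le r'_iq'_i\le r'_i$, a contradiction.
\item If $0<q_i<1$, then $v_i>0$ forces $r'_i>0$. The reaches give $q'_i\ge r_iq_i/r'_i>q_i$, while the utilities give $S(q'_i)\ge r_iS(q_i)/r'_i>S(q_i)$. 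This contradicts $S$ being nonincreasing.
\end{itemize}
Backward induction from $r'_n\ge r_n$ therefore gives $r'\ge r$ everywhere, using only that $S$ is nonincreasing. The hypothesis $q_i\ge F(0)$ enters only the forward bound $r'\le r$. Together the two bounds give $r'=r$, hence $q'_i=q_i$ wherever $r_i>0$, and so $\mathbf u=\mathbf v$.

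With this step supplied, your two-sided squeeze is a slightly more uniform version of the paper's proof. The vague ``potential argument'' via utilitarian continuation values is then unnecessary and can be dropped.
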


Second, as in the utilitarian setting, an optimal rule can be
constructed by working backward from the final agent. For utilitarian
welfare, the final agent receives the item whenever it remains
available, and agent $i$ receives it whenever their value exceeds the
expected value obtained by continuing with agents $i+1,\ldots,n$.
Our egalitarian construction is analogous: we choose the threshold for
agent $i$ so that, conditional on the item reaching that agent, their
expected utility equals that of the final agent. Proceeding backward
in this way produces an optimal rule that equalizes the agents'
expected utilities. 
\begin{proposition}\label{prop:equalizing-rule}
Fix a distribution $\D$ with CDF $F$ and let $\mu=\E[X]$. There is a sequence
of thresholds $q_1,\ldots,q_{n-1}\in(F(0),1)$ satisfying
\[
\frac{S(q_i)}{q_i}
=
\mu\prod_{j=i+1}^{n-1}q_j,
\]
where the empty product is $1$. Let $\pi^\star$ be the corresponding
quantile-threshold rule, and define its reach probabilities by
$r_1=1$ and $r_{i+1}=r_iq_i$. Then $\pi^\star$ is optimal for
egalitarian welfare and gives every agent expected utility
\(
v_i(\pi^\star,\D)
=
\mu r_n
\).
Consequently,
\(
\ALG_n(-\infty;\D)
=
\mu r_n.
\)
\end{proposition}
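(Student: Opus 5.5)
We prove \Cref{prop:equalizing-rule} in three steps: construct the thresholds backward, verify that they equalize utilities, and then prove optimality using \Cref{prop:pareto}.

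\textbf{Existence of the thresholds.} Set $c_{n-1}\eqdef\mu$ and, for $i<n-1$, $c_i\eqdef\mu\prod_{j=i+1}^{n-1}q_j$. We choose $q_{n-1},q_{n-2},\dots,q_1$ in that order, so each $c_i$ is known when $q_i$ is chosen. The function $g(q)=S(q)/q$ is continuous on $(0,1]$, and $g(1)=0$. As $q\downarrow F(0)$ (or $q\downarrow0$ if $F(0)=0$), we have $S(q)\to\mu$. Hence $g(q)\to\mu/F(0)>\mu\ge c_i$ when $F(0)>0$, and $g(q)\to\infty$ when $F(0)=0$. Here we use inductively that $c_i\le\mu$, since all $q_j<1$. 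On $(F(0),1)$ the function $S$ is strictly decreasing and positive, so $g$ is strictly decreasing. It is therefore a continuous bijection onto $(0,\mu/F(0))$, which contains $c_i\in(0,\mu]$. So a unique $q_i\in(F(0),1)$ with $g(q_i)=c_i$ exists. Positivity of $c_i$ uses $q_j>F(0)\ge0$ and $\mu>0$.

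\textbf{Equalization.} Use $v_i=r_iS(q_i)$ and $r_n=\prod_{j<n}q_j$. For $i<n$,
\[
v_i=r_i\,q_i\,\mu\prod_{j=i+1}^{n-1}q_j=\mu\prod_{j=1}^{n-1}q_j=\mu r_n .
\]
For agent $n$ we have $q_n=0$, so $v_n=r_nS(0)=\mu r_n$. Thus every agent receives exactly $\mu r_n$, and $w_{-\infty}(\mathbf v(\pi^\star))=\mu r_n$.

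\textbf{Optimality.} This is the main step. Suppose some online rule $\pi$ had $\min_i v_i(\pi)>\mu r_n$. By \Cref{prop:pareto}(1), replace $\pi$ with a threshold rule $\pi'$ whose utilities are coordinatewise at least as large, so every $v_i(\pi')>\mu r_n=v_i(\pi^\star)$. Since all $q_i>F(0)$, \Cref{prop:pareto}(2) says $\mathbf v(\pi^\star)$ is Pareto optimal in $\mathcal V^{\mathrm{on}}(\D)$. But $\mathbf v(\pi')$ strictly dominates it in every coordinate, which is a contradiction. Hence $\ALG_n(-\infty;\D)\le\mu r_n$, and $\pi^\star$ attains this value, so $\ALG_n(-\infty;\D)=\mu r_n$.

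The only delicate points are the boundary analysis of $g$ near $F(0)$ (atoms at $0$) and strict monotonicity of $S$ above $F(0)$. The latter holds because $F^{-1}(u)>0$ for $u>F(0)$, so $S'(q)=-F^{-1}(q)<0$ almost everywhere there.
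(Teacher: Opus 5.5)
Your proof is correct and follows the paper's argument: you construct the thresholds backward via the intermediate value theorem for $q\mapsto S(q)/q$, telescope to get $v_i=r_iS(q_i)=\mu r_n$, and derive optimality from the Pareto optimality in \Cref{prop:pareto}(2). The differences are cosmetic. You work directly on $(F(0),1)$ and use strict monotonicity of $S$ there, which also gives uniqueness of each $q_i$, whereas the paper applies the intermediate value theorem on $(0,1]$ and then separately rules out solutions in $[0,F(0)]$; your detour through \Cref{prop:pareto}(1) is harmless but unnecessary, since part (2) already excludes domination by any online profile.
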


We prove \Cref{prop:equalizing-rule} at the beginning of
\Cref{sec:proof}, since its recurrence is a key ingredient in the proof
of our main result.

\section{Connection to Online Fair Division}\label{sec:fair-division}

The fairness objective in our prophet problem measures how expected
utility is distributed across agents. Here, we show how the same
objective arises in an online fair-division problem with many indivisible items and sequentially arriving agents.

There are \(n\) agents and \(m\) indivisible items. Agent \(i\)'s value
for item \(j\) is \(X_{ij}\), and all values are drawn independently
from a distribution \(\D\). The agents arrive one at a time. When an
agent arrives, their values for the remaining items are revealed, and
an online algorithm must immediately decide which of these items to
allocate to them. An offline algorithm, by contrast, sees the entire
valuation matrix before choosing an allocation. If agent \(i\)
receives the items \(A_i \subseteq [m]\), their realized utility is
\(U_i=\sum_{j\in A_i}X_{ij}\).

There are two natural ways to evaluate welfare in this model. The
\emph{ex-ante} (ea) objective applies the \(p\)-mean to the agents'
expected utilities, giving
\(w_p(\E[U_1],\ldots,\E[U_n])\). The \emph{ex-post} (ep) objective instead
takes the expected \(p\)-mean of the realized utilities, giving
\(\E[w_p(U_1,\ldots,U_n)]\). For egalitarian welfare, these objectives
are \(\min_i\E[U_i]\) and \(\E[\min_iU_i]\), respectively.

For \(s\in\{\mathrm{ea},\mathrm{ep}\}\), let
\(\ALG^s_{n,m}(p;\D)\) and \(\OPT^s_{n,m}(p;\D)\) denote the optimal
online and offline values under objective \(s\), and let $\CR^s_{n, m}(p)$ be the corresponding competitive ratio.\footnote{For ex-post competitive ratios with $p\le0$, we restrict to $m\ge n$, since both optima are zero when $m<n$.} We first observe that the ex-ante problem is equivalent to the single-item problem, for every number of items. Proofs for this section can be found in Appendix~\ref{app:fair-division}.

\begin{proposition}
\label{prop:fair-division-ex-ante}
For every $n \ge 2$, $m \ge1$ and distribution $\D$, the sets of expected-utility profiles achievable by online and offline rules in the \(m\)-item problem are, respectively, exactly \(m\) times the corresponding sets in the single-item problem. Thus, for all $p \in [-\infty, 1]$,
$
    \ALG^{\mathrm{ea}}_{n,m}(p;\D)=m\ALG_n(p;\D)$ and 
    $\OPT^{\mathrm{ea}}_{n,m}(p;\D)=m\OPT_n(\D)$. Consequently, $\CR^{\mathrm{ea}}_{n, m}(p) = \CR_n(p)$. 
\end{proposition}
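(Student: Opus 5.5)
We prove the two set equalities $\mathcal V^{\mathrm{on}}_m(\D)=m\,\mathcal V^{\mathrm{on}}(\D)$ and $\mathcal V^{\mathrm{off}}_m(\D)=m\,\mathcal V^{\mathrm{off}}(\D)$ by showing both inclusions. Here $\mathcal V^{\mathrm{on}}_m$ and $\mathcal V^{\mathrm{off}}_m$ denote the expected-utility profiles achievable in the $m$-item problem. Once these hold, the welfare statements follow from positive homogeneity of $w_p$: $\sup_{\mathbf u\in m\mathcal V}w_p(\mathbf u)=m\sup_{\mathbf v\in\mathcal V}w_p(\mathbf v)$. This gives $\ALG^{\mathrm{ea}}_{n,m}=m\ALG_n$ and $\OPT^{\mathrm{ea}}_{n,m}=m\OPT_n$, and the factor $m$ cancels in every ratio, so $\CR^{\mathrm{ea}}_{n,m}(p)=\CR_n(p)$.

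\emph{Inclusion $m\mathcal V\subseteq\mathcal V_m$.} Given a single-item rule $\pi$ with profile $\mathbf v$, run an independent copy of $\pi$ on each item $j$. The copy for item $j$ uses the column $(X_{1j},\dots,X_{nj})$ and its own internal randomness. If $\pi$ is online, each copy decides at agent $i$'s arrival using only $X_{1j},\dots,X_{ij}$, so the combined rule is online. Each column is distributed like $(X_1,\dots,X_n)$, so agent $i$'s expected utility from item $j$ is $v_i$. Summing over items gives $\E[U_i]=m v_i$. The offline case is identical.

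\emph{Inclusion $\mathcal V_m\subseteq m\mathcal V$.} This is the main step, and we handle it by symmetrizing and projecting onto a random item. Take any $m$-item rule $A$ with profile $\mathbf u$. By linearity, $u_i=\sum_j \E[X_{ij}\ind\{j\in A_i\}]$. Build a single-item rule $\pi$ as follows.
\begin{enumerate}
\item Draw a uniformly random item index $J\in[m]$.
\item Embed the real values $X_1,\dots,X_n$ as column $J$.
\item Sample the other columns $X_{ij'}$, $j'\ne J$, internally from $\D$.
\item Simulate $A$, and give the item to agent $i$ exactly when $A$ assigns item $J$ to agent $i$.
\end{enumerate}
Then $\E[X_i\ind\{\pi=i\}]=\frac1m\sum_j\E[X_{ij}\ind\{j\in A_i\}]=u_i/m$, so $\mathbf u\in m\mathcal V$.

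The point to check is that $\pi$ is online when $A$ is. When agent $i$ arrives, $A$ needs agent $i$'s values for all remaining items. The simulated entries of row $i$ can be generated at that moment, and the real entry $X_{iJ}=X_i$ is revealed at that moment. So the decision about item $J$ at step $i$ depends only on $X_1,\dots,X_i$ plus internal randomness. The simulated matrix has the correct joint law because all entries are i.i.d.\ and independent of $J$. Each item is allocated at most once, so $\pi$ allocates the single item at most once. The offline case is the same argument without the timing check. Together the two inclusions give the claimed set equalities.
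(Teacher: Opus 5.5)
Your proposal is correct and follows essentially the same argument as the paper: independent repetition of a single-item rule on every item gives one inclusion, and projecting onto a uniformly random item, with the other columns simulated internally as agents arrive, gives the other. Positive homogeneity of $w_p$ then yields the welfare and competitive-ratio identities.
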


One direction follows by applying a single-item rule independently to
every item. For the other, start with an arbitrary many-item rule and
focus on a uniformly random item. By internally simulating the values
of all other items, we obtain a single-item rule whose expected-utility
profile is the average profile of the original rule. 

The same equivalence does not hold for ex-post welfare when the number of items is small. In fact, when there is exactly one item per agent, the prophet
may give every agent positive utility with high probability even
though an online algorithm has almost no chance of doing so.

\begin{proposition}
\label{prop:fair-division-impossibility}
For every \(p\le0\), the worst-case ex-post competitive ratio with
\(m=n\) converges to zero as \(n\) grows. More specifically, \(\CR^{\mathrm{ep}}_{n, n}(p) \in O(\log n/n)\).
\end{proposition}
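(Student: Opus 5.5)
We construct a two-point hard distribution: online, the last agent can be left at most one item, so all agents get positive utility with probability at most the chance that this item has positive value for the last agent. The prophet, by contrast, can find a perfect matching with high probability. Fix $p\le 0$ and $m=n$, and let $\D$ put mass $q_n\eqdef c\log n/n$ on value $1$ and mass $1-q_n$ on value $0$, for a suitable constant $c>1$. The ratio for this single $\D$ is an upper bound on $\CR^{\mathrm{ep}}_{n,n}(p)$. We will show:
\begin{itemize}
\item the offline optimum is $\Omega(1)$;
\item every online algorithm has ex-post welfare at most $q_n$.
\end{itemize}
Together these give $\CR^{\mathrm{ep}}_{n,n}(p)=O(\log n/n)$, which tends to $0$.

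\emph{Offline lower bound.} Consider the random bipartite graph between agents and items with an edge $(i,j)$ whenever $X_{ij}=1$. This is the balanced random bipartite graph $G(n,n,q_n)$. By the classical Erd\H{o}s--R\'enyi threshold, for $q_n=c\log n/n$ with $c>1$ it has a perfect matching with probability $1-o(1)$. Whenever it does, the prophet allocates along the matching, so every agent has realized utility at least $1$. Since $w_p$ is coordinatewise nondecreasing, $w_p(U_1,\dots,U_n)\ge w_p(1,\dots,1)=1$ on this event. Hence $\OPT^{\mathrm{ep}}_{n,n}(p;\D)\ge 1-o(1)$.

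\emph{Online upper bound.} For $p\le0$, $w_p$ vanishes whenever some $U_i=0$, so only the event $E=\{U_i>0\ \forall i\}$ contributes. On $E$ we use $w_p\le w_1=\frac1n\sum_i U_i\le \frac1n\cdot m=1$, since each item contributes at most $1$ to total utility. It therefore suffices to show $\Pr[E]\le q_n$.
\begin{itemize}
\item On $E$, each of agents $1,\dots,n-1$ receives at least one item. So when agent $n$ arrives, at most one item remains, and $E$ requires that exactly one item $j$ remains with $X_{nj}=1$.
\item The set of remaining items is determined by $X_{1\cdot},\dots,X_{n-1,\cdot}$ and the algorithm's internal randomness, all of which are independent of agent $n$'s values.
\item Conditioning on the history, $\Pr[E]\le\Pr[X_{nj}=1]=q_n$.
\end{itemize}
Thus $\ALG^{\mathrm{ep}}_{n,n}(p;\D)\le q_n$, and the ratio is at most $q_n/(1-o(1))=O(\log n/n)$.

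The only nontrivial ingredient is the perfect-matching threshold for random bipartite graphs, which we cite rather than reprove. If a self-contained argument were preferred, we would use Hall's condition with a union bound over violating sets, taking $c$ large (say $c=3$) so the bound is crude but sufficient. The other point needing care is the measurability claim: the online algorithm's choice of the leftover item must be independent of agent $n$'s values, and this is exactly what the online definition provides.
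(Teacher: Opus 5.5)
Your proof is correct and follows the paper's argument essentially step for step: the same Bernoulli$(\Theta(\log n/n))$ construction, the random bipartite perfect-matching threshold for the prophet, and the observation that positive online welfare requires the last agent to value the single leftover item, whose value is independent of the history. The differences are cosmetic: you allow a general constant $c>1$ where the paper uses $2$, and you bound $w_p\le w_1\le 1$ on the good event where the paper notes that each agent then receives exactly one value-one item.
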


The construction takes
\(\D=\operatorname{Bernoulli}(2\log(n)/n)\). The prophet can give every
agent a value-one item whenever the corresponding random bipartite
graph contains a perfect matching, which happens with probability
approaching one~\cite{frieze2015introduction}. By contrast, for an
online allocation to have positive \(p\)-mean welfare when \(p\le0\),
every agent must receive a value-one item. The final agent must
therefore value the one remaining item, an event of probability only
\(2\log(n)/n\).

This impossibility relies on the numbers of agents and items growing
together. If we instead fix the number of agents and the distribution,
ex-post welfare approaches ex-ante welfare as the number of items
grows.

\begin{proposition}
\label{prop:fair-division-limit}
For every \(n \ge 2\), \(p\in [-\infty, 1]\), and distribution \(\D\),
\[
    \lim_{m\to\infty}
    \frac{\ALG^{\mathrm{ep}}_{n,m}(p;\D)}
         {\OPT^{\mathrm{ep}}_{n,m}(p;\D)}
    =
    \frac{\ALG_n(p;\D)}{\OPT_n(\D)}.
\]

\end{proposition}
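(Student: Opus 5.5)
The plan is to sandwich the ex-post ratio between ex-ante quantities that differ by $1+o(1)$ as $m\to\infty$, and then invoke \Cref{prop:fair-division-ex-ante}, which already identifies $\ALG^{\mathrm{ea}}_{n,m}(p;\D)=m\ALG_n(p;\D)$ and $\OPT^{\mathrm{ea}}_{n,m}(p;\D)=m\OPT_n(\D)$.

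\textbf{Upper bounds.} Since $w_p$ is concave, Jensen's inequality gives $\E[w_p(U_1,\ldots,U_n)]\le w_p(\E[U_1],\ldots,\E[U_n])$ for every rule, online or offline. Hence
\[
\ALG^{\mathrm{ep}}_{n,m}(p;\D)\le m\ALG_n(p;\D)
\qquad\text{and}\qquad
\OPT^{\mathrm{ep}}_{n,m}(p;\D)\le m\OPT_n(\D).
\]

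\textbf{Lower bounds.} Fix a single-item rule $\pi$ and apply it independently to every item; this is an online rule, because agent $i$ sees all remaining values on arrival. Agent $i$'s utility is then $U_i=\sum_{j=1}^m Y_{ij}$, where the vectors $(Y_{1j},\ldots,Y_{nj})$ are i.i.d.\ across $j$ with $\E[Y_{ij}]=v_i(\pi)$. The strong law of large numbers gives $U_i/m\to v_i(\pi)$ almost surely for each $i$. Because $w_p$ is continuous on $\mathbb R^n_{\ge0}$, we get $w_p(\mathbf U/m)\to w_p(\mathbf v(\pi))$ almost surely; for $p\le0$ continuity at points with a zero coordinate still holds, since $w_p\to0$ there. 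Fatou's lemma, applied to the nonnegative variables $w_p(\mathbf U/m)$, then yields
\[
\liminf_{m\to\infty}\frac1m\E[w_p(\mathbf U)]\ge w_p(\mathbf v(\pi)).
\]
For an upper-bound-free statement, dominated convergence is also available, since $w_p(\mathbf U/m)\le\frac1m\sum_{i,j}X_{ij}$. We now choose $\pi$ to be within $\eps$ of $\ALG_n(p;\D)$ (by \Cref{prop:pareto} a threshold rule suffices). We also choose the offline rule that gives each item independently to its max-value agent with uniform tie-breaking, which attains $\OPT_n(\D)$. Letting $\eps\to0$ gives
\[
\liminf_{m\to\infty}\frac{\ALG^{\mathrm{ep}}_{n,m}(p;\D)}{m}\ge\ALG_n(p;\D)
\qquad\text{and}\qquad
\liminf_{m\to\infty}\frac{\OPT^{\mathrm{ep}}_{n,m}(p;\D)}{m}\ge\OPT_n(\D).
\]

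\textbf{Combining.} Together with the upper bounds, both normalized quantities converge to their single-item counterparts. Since $\OPT_n(\D)\ge\mu/n>0$, the ratio converges to $\ALG_n(p;\D)/\OPT_n(\D)$.

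\textbf{Main obstacle.} The delicate step is the lower bound for $p\le0$, where $w_p$ vanishes whenever some $U_i=0$. For $m\ge n$ (so both optima can be positive) this is handled by the almost-sure convergence: if $v_i(\pi)>0$ for all $i$, then every $U_i>0$ eventually almost surely. If instead the optimal online value is $0$, the lower bound is trivial. I would also verify that $\OPT_n(\D)$ requires no positivity issue, since the max-value rule gives each agent $\E[\max_i X_i]/n>0$.
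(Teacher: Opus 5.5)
Your proposal is correct and follows essentially the same route as the paper: Jensen plus \Cref{prop:fair-division-ex-ante} for the upper bounds, and, for the lower bounds, independent repetition of a near-optimal single-item rule (and of the max-value offline rule) combined with the strong law, continuity of $w_p$, and Fatou. The one blemish is the dominated-convergence aside: its bound $\frac1m\sum_{i,j}X_{ij}$ depends on $m$, so it would need the generalized version of the theorem, but you never rely on it, since Fatou already gives the needed $\liminf$.
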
 

The lower bound is obtained by applying an optimal single-item
rule  independently to every item. The law of large numbers then makes
each agent's average realized utility approach their expected utility
under that rule. In the other direction, concavity of the \(p\)-mean
bounds ex-post welfare by ex-ante welfare, and
\Cref{prop:fair-division-ex-ante} applies. A finite mean is enough for
this pointwise convergence, although it does not provide a uniform
rate over distributions.

Stronger quantitative guarantees follow when the distribution has lighter tails using concentration and moment bounds. For example, 
suppose \(X\le B \cdot \E[X]\) almost surely, then for $m \in \Omega\left(\frac{nB}{\delta^2} \log\left( \frac{n}{\delta}\right)\right)$, $\frac{\ALG^{\mathrm{ep}}_{n,m}(p;\D)}
         {\OPT^{\mathrm{ep}}_{n,m}(p;\D)}\ge \frac{\ALG_{n}(p;\D)}
         {\OPT_{n}(\D)}-\delta$.
Under the weaker assumption of a finite mean $\mu$ and finite variance $\sigma^2$, we can instead require $m \in \Omega\left( \frac{n^3}{\delta^2}\left( 1 + \frac{\sigma^2}{\mu^2} \right) \right)$. Details can be found in Appendix~\ref{app:bounds}.

These results explain why we focus on the single-item problem: this problem captures ex-ante fair division exactly for every number of items and captures ex-post fair division in the many-item limit. At the same time, the failure when \(m=n\) shows that this convergence is not uniform when $n$ and $m$ grow together. 

\section{Proof of Main Result}\label{sec:proof}
\subsection{Proof Overview}

We begin with the lower bound. Scaling all values by a positive constant scales both $\ALG_n(-\infty;\D)$ and $\OPT_n(\D)$ by that constant, and therefore does not change their ratio. Hence, it suffices to prove the lower bound for distributions satisfying $\E[X]=1$; we make this assumption throughout the lower-bound argument.

The lower bound has two main ingredients. First, we transform each
distribution into a convex function whose objective value is exactly
the prophet-to-online ratio. Second, we solve the resulting variational
optimization problem.

Let $\K_{\mathrm{fin}}$ denote the class of nonnegative,
nondecreasing, convex functions
$G\colon[0,\infty)\to[0,\infty]$ such that $G(0)=0$ and
$G(z)=+\infty$ for all sufficiently large $z$. For
$G\in\K_{\mathrm{fin}}$, define
\[
\I(G)\eqdef\int_0^\infty e^{-G(z)}\,dz,
\qquad
\C(G)\eqdef\int_1^\infty\frac{dz}{zG(z)},
\]
with the conventions $e^{-\infty}=0$, $1/(+\infty)=0$, and
$1/0=+\infty$.

For a distribution $\D$ with $\mu=\E[X]$, recall that
$S(q)=\E[F^{-1}(U)\ind{U\ge q}]$. For $0\le y\le\mu$, define its
generalized inverse by
\[
S^{-1}(y)\eqdef
\sup\{q\in[0,1]:S(q)\ge y\}.
\]

The following lemma gives the reduction from the prophet problem to
the variational problem.

\begin{lemma}
\label{lem:variational-reduction}
Fix a distribution $\D$ satisfying $\E[X]=1$, and let
$c=\ALG_n(-\infty;\D)$. Define $H\colon[0,\infty)\to[0,\infty]$ by
\[
H(z)\eqdef
\begin{cases}
-(n-1)\log\bigl(S^{-1}(cz)\bigr),
&0\le z\le 1/c,\\[1mm]
+\infty,
&z> 1/c.
\end{cases}
\]
Then $H\in\K_{\mathrm{fin}}$ and
\[
\I(H)
=
\frac{\OPT_n(\D)}{\ALG_n(-\infty;\D)}
\qquad\text{and}\qquad
\C(H)\le1.
\]
\end{lemma}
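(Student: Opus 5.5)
The plan is to first establish the basic shape of $S$, then read off membership in $\K_{\mathrm{fin}}$, compute $\I(H)$ by a change of variables, and finally bound $\C(H)$ using the equalizing recurrence of \Cref{prop:equalizing-rule}.

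\textbf{Shape of $S$ and membership in $\K_{\mathrm{fin}}$.} Since $S(q)=\int_q^1F^{-1}(u)\,du$ with $F^{-1}$ nondecreasing and nonnegative, $S$ is nonincreasing and concave on $[0,1]$, with $S(0)=\E[X]=1$ and $S(1)=0$. It is constant (equal to $1$) on $[0,F(0)]$ and strictly decreasing on $[F(0),1]$. Hence $S^{-1}$ is a genuine inverse of $S$ restricted to $[F(0),1]$, and $S^{-1}(0)=1$. The inverse of a decreasing concave function is again decreasing and concave: reflect the hypograph across the diagonal. So $y\mapsto S^{-1}(y)$ is concave, nonincreasing and positive on $[0,1)$. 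Then $\log S^{-1}$ is concave, so $H(z)=-(n-1)\log S^{-1}(cz)$ is convex and nondecreasing on $[0,1/c]$, with $H(0)=0$. Setting $H=+\infty$ beyond $1/c$ preserves convexity and monotonicity (at $z=1/c$ we have $S^{-1}(1)\ge F(0)$, possibly $0$, giving the value $+\infty$; either is consistent). Also $c=\mu r_n>0$ by \Cref{prop:equalizing-rule}. Therefore $H\in\K_{\mathrm{fin}}$.

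\textbf{Computing $\I(H)$.} We have
\[
\I(H)=\int_0^{1/c}S^{-1}(cz)^{n-1}\,dz=\frac1c\int_0^1 S^{-1}(y)^{n-1}\,dy .
\]
Substitute $y=S(q)$ for $q\in[F(0),1]$, so that $dy=-F^{-1}(q)\,dq$. This gives $\int_0^1 S^{-1}(y)^{n-1}dy=\int_{F(0)}^1 q^{n-1}F^{-1}(q)\,dq=\int_0^1 q^{n-1}F^{-1}(q)\,dq$, because $F^{-1}=0$ on $[0,F(0))$. On the other hand, $\OPT_n(\D)=\frac1n\E[\max_iX_i]=\frac1n\int_0^1 nq^{n-1}F^{-1}(q)\,dq$, since the maximum quantile has density $nq^{n-1}$. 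Hence $\I(H)=\OPT_n(\D)/c$, as claimed. I would make the substitution rigorous via Fubini or the layer-cake formula for the monotone function $S^{-1}$; flat pieces of $S$ occur only on $[0,F(0)]$, and atoms of $F$ only produce linear pieces, so no mass is lost.

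\textbf{Bounding $\C(H)$.} Take the equalizing thresholds from \Cref{prop:equalizing-rule} with $\mu=1$. The recurrence gives
\[
S(q_i)=\prod_{j=i}^{n-1}q_j=\frac{r_n}{r_i}=\frac{c}{r_i}.
\]
Since $q_i>F(0)$, this yields $q_i=S^{-1}(c/r_i)$. Set $z_i\eqdef 1/r_i$, so that $1=z_1\le z_2\le\cdots\le z_n=1/c$ and $q_i=z_i/z_{i+1}$. Then $H(z_i)=(n-1)\log(z_{i+1}/z_i)$. Split $\C(H)=\int_1^{1/c}\frac{dz}{zH(z)}$ over the intervals $[z_i,z_{i+1}]$; intervals with $z_i=z_{i+1}$ are empty, and the region $z>1/c$ contributes $0$. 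Because $H$ is nondecreasing, on $[z_i,z_{i+1}]$ we have
\[
\int_{z_i}^{z_{i+1}}\frac{dz}{zH(z)}\le\frac{\log(z_{i+1}/z_i)}{H(z_i)}=\frac1{n-1}.
\]
Summing over $i=1,\dots,n-1$ gives $\C(H)\le1$.

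\textbf{Main obstacle.} The delicate part is the bookkeeping with generalized inverses: justifying $S^{-1}(S(q_i))=q_i$, the concavity of $S^{-1}$ when $S$ has flat or linear pieces, and the change of variables in $\I(H)$. All of these reduce to the strict monotonicity of $S$ on $[F(0),1]$, so I would isolate that fact first.
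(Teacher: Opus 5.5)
Your proof is correct and follows the paper's argument. Membership in $\K_{\mathrm{fin}}$ via concavity of $S^{-1}$, and the bound $\C(H)\le 1$ via $q_i=S^{-1}(c/r_i)$ and monotonicity of $H$ on each interval $[1/r_i,1/r_{i+1}]$, are exactly as in the paper. The only variation is in $\I(H)$: you substitute $y=S(q)$ and use the density $nq^{n-1}$ of the maximum quantile, whereas the paper writes $\OPT_n(\D)=\E[S(\max\{U_2,\ldots,U_n\})]$ and applies the tail-integral formula with $\{S(U)\ge cz\}=\{U\le S^{-1}(cz)\}$, which avoids justifying the change of variables for the absolutely continuous $S$.
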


The analytic core of the proof is the following variational
inequality.

\begin{lemma}
\label{lem:variational-bound}
Every $G\in\K_{\mathrm{fin}}$ satisfying $\C(G)\le1$ also satisfies
\(
\I(G)\le\frac1\Gamma.
\)
\end{lemma}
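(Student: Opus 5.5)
We prove the bound by reducing to an explicit extremal family of convex functions, evaluating $\I$ and $\C$ on that family in closed form, and recognizing the resulting two-parameter optimization as $\inf_{a,b>0}\rho(a,b)$. Two features of $\K_{\mathrm{fin}}$ drive the argument. Since $G(0)=0$ and $G$ is convex, $G(z)/z$ is nondecreasing. A convex $G$ also lies above each of its supporting lines. The objective $\I$ rewards making $G$ small everywhere. The constraint $\C(G)\le1$ only involves $[1,\infty)$ and rewards making $G$ large there. Convexity is the only thing coupling the two regions.

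\emph{Step 1 (reduction to a three-parameter family).} Let $T=\sup\{z:G(z)<\infty\}$, and let $\ell(z)=\alpha(z-s)$ be a supporting line of $G$ at a suitably chosen point $z_1\ge1$. Define
\[
\widetilde G(z)=\max\{0,\ell(z)\}\ \text{ for } z\le \widetilde T,\qquad \widetilde G(z)=+\infty\ \text{ for } z>\widetilde T .
\]
Here the new cutoff $\widetilde T$ is chosen so that $\C(\widetilde G)=\C(G)\le1$. Since $\widetilde G\le G$ on $[0,\widetilde T]$, the function $\widetilde G$ gains in $\I$ there and loses $\int_{\widetilde T}^{T}e^{-G}$ beyond the cutoff. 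We must show the net change is nonnegative. The natural tool is a Lagrangian exchange argument. For the correct multiplier $\lambda$, at every $z\ge1$ the pointwise integrand $e^{-g}-\lambda/(zg)$ should be favored by moving $g$ toward $\ell(z)$ in the middle region and toward $+\infty$ in the tail. The point $z_1$ is chosen as the place where $\frac{d}{dg}\bigl[e^{-g}-\lambda/(zg)\bigr]$ changes sign along $G$. This is also where convexity is used: $G$ cannot drop below its tangent line. On $[0,1]$ the replacement only helps, since $\C$ ignores that interval and $\widetilde G\le G$. Hence it suffices to maximize $\I$ over functions of the form ``$0$ on $[0,s]$, linear with slope $\alpha$ on $[s,\widetilde T]$, and $+\infty$ afterwards''.

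\emph{Step 2 (explicit computation).} For this family, both functionals are elementary:
\[
\I=s+\frac{1-e^{-\alpha(\widetilde T-s)}}{\alpha},
\qquad
\C=\int_{\max(1,s)}^{\widetilde T}\frac{dz}{\alpha z(z-s)} .
\]
The second integral is a logarithm. I would set $\C=1$, since slack can always be spent to lower $\alpha$ or raise $\widetilde T$. This eliminates one parameter. I then reparametrize by $b=\alpha(\widetilde T-s)$, the height of $\widetilde G$ at the cutoff, and by an $a$ encoding the slope relative to $s$. With these, I expect $\I$ to take the form
\[
\I=\Bigl(1-\tfrac{b}{a+b}e^{-a}\Bigr)\Bigl(1+\tfrac{1-e^{-b}}{a}\Bigr),
\]
so that $\I\le 1/\inf_{a,b}\rho(a,b)=1/\Gamma$. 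The factor $e^{-a}$ and the ratio $b/(a+b)$ should come from exponentiating the logarithm in the constraint. If the algebra does not match exactly, I would instead derive the first-order conditions from the Lagrangian of Step 1 and read off the extremal $G$ from them. The Euler--Lagrange condition $e^{-G}=\lambda/(zG^2)$ on any region where $G$ is strictly convex should rule out curvature. That would confirm the piecewise-linear shape.

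\emph{Main obstacle.} I expect Step 1 to be the hard part: justifying that piecewise-linear-then-infinite functions are extremal. The concern is that the gain from lowering $G$ in the middle is not automatically larger than the loss from truncating the tail. A naive pointwise comparison fails because the constraint weight $1/(zG)$ depends on $z$. My first attempt would be a two-point exchange inequality. Moving a small amount of ``constraint mass'' $\int 1/(zG)$ from the tail to the middle should increase $\I$ by at least $\lambda$ times that mass. This would be proved using $G(z)/z$ nondecreasing and the convexity bound $G\ge\ell$. The edge cases also need care: $G$ vanishing on an interval extending past $1$, where $\C=+\infty$ unless $s<1$, and $T\le1$, where $\I\le T\le1<1/\Gamma$ trivially.
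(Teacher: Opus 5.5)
\textbf{Verdict: genuine gap.} Step 1 carries essentially the whole lemma and is asserted rather than proved.

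Your Step 2 is correct and matches the paper's final evaluation. With $a=\alpha s$ and $b=\alpha(\widetilde T-s)$, the constraint $\C=1$ gives $s=1-\frac{b}{a+b}e^{-a}$ and $\I=s\bigl(1+\frac{1-e^{-b}}{a}\bigr)=1/\rho(a,b)$. The case $s=0$ is recovered as $a\downarrow0$.

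The problems in Step 1 are concrete:
\begin{enumerate}
\item You never fix $\lambda$ or $z_1$. So nothing guarantees even that the zero $s$ of $\ell$ lies below $1$. A supporting line at $z_1\ge1$ with large slope can have $s>1$, and then $\C(\widetilde G)=+\infty$.
\item Nothing shows that the truncation loss $\int_{\widetilde T}^{T}e^{-G}$ is outweighed by the gain from lowering $G$.
\item The pointwise Lagrangian heuristic cannot supply this. For $z>1$, the pointwise maximizer of $e^{-g}-\lambda/(zg)$ over $g\in(0,\infty]$ is either $+\infty$ or the smaller root of $zg^2e^{-g}=\lambda$. That root \emph{decreases} in $z$. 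So monotonicity and convexity bind everywhere, and the extremal shape is dictated by the constraint set, not by pointwise comparison.
\item Since $\I$ is convex rather than concave in $G$, first-order conditions would not certify a global maximum anyway.
\end{enumerate}

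Your fallback has the same hole. First-order conditions presuppose that a maximizer exists. The paper restricts to cutoffs at most $R$ and extracts a maximizer via Helly's selection theorem, Fatou, and dominated convergence. Admissible variations must also preserve convexity, so the free Euler--Lagrange equation $e^{-G}=\lambda/(zG^2)$ is not a valid necessary condition. Even where it applies, it only constrains intervals of strict convexity. A piecewise-linear $G$ with several kinks has no such interval, so ``no curvature'' would not yield a single hinge.

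The missing idea is the paper's hinge-perturbation argument:
\begin{enumerate}
\item Write the maximizer as a mixture of hinges, $G^\star=\int h_a\,\mu^\star(da)$.
\item Perturb to $G^\star+\varepsilon h_a$ and renormalize. This gives $\Psi(a)=\lambda A(a)-B(a)\le0$ for every $a$.
\item Averaging the hinges against $\mu^\star$ merely rescales $G^\star$. Hence $\int\Psi\,d\mu^\star=0$, so $\Psi=0$ on $\operatorname{supp}\mu^\star$.
\item Exclude every support point other than $J$. On $(J,1]$ this follows from strict concavity of $\Psi$. On $(1,M)$ it follows from the unimodality of $m(z)=zG^\star(z)^2e^{-G^\star(z)}$ together with the boundary condition $m_M\ge\lambda$.
\end{enumerate}
This rules out isolated kinks as well as curved pieces, which is exactly what Step 1 needs and does not provide.
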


Together, these lemmas bound the prophet-to-online ratio by $1/\Gamma$, giving $\CR_n(-\infty)\ge\Gamma$.

We next explain where the function $H$, its objective $\I(H)$, and the
constraint $\C(H)\le1$ come from. We then describe the main idea behind
the variational inequality and, finally, the distributions giving the
matching upper bound.

\paragraph{Where the variational reduction comes from.}

Recall that the pair $(S(q),q)$ describes everything that matters about using threshold $q$ at a single agent. Conditional on the item still being available, the threshold gives the current agent expected utility $S(q)$ and preserves the item for the next agent with probability $q$. Lowering $q$ gives more value to the current agent, but leaves less chance for everyone who comes later. The generalized inverse has an equally useful interpretation. If the current agent must receive conditional expected utility $y$, then $S^{-1}(y)$ is the largest threshold that provides this utility, and hence the largest probability with which the item can be preserved for the next agent. Thus, $S^{-1}(y)$ describes how much reach probability can be retained while providing a specified amount of utility now.

Let $c=\ALG_n(-\infty;\D)$. The function $S^{-1}$ is useful in two
ways. First, a direct calculation expresses the prophet-to-online ratio
as
\[
\frac{\OPT_n(\D)}{c}
=
\int_0^{1/c}
\left(S^{-1}(cz)\right)^{n-1}\,dz.
\]

Second, $S^{-1}$ describes the constraints imposed by the equalizing
rule from \Cref{prop:equalizing-rule}. Let
$q_1,\ldots,q_{n-1}$ be its thresholds, and let $r_i$ be the probability
that agent $i$ is reached. Because every agent receives
expected utility $c$, we have
$
r_iS(q_i)=c$, so
$
q_i=S^{-1}(c/r_i)$, and hence
$
r_{i+1}=r_iS^{-1}(c/r_i)$.
Moreover, $r_1=1$ and $r_n=c$, since the final agent receives the item
whenever reached and $\E[X]=1$.

Our goal is therefore to bound the prophet-to-online ratio over all
functions $S^{-1}$ whose associated reach probabilities can move from
$r_1=1$ to $r_n=c$ in $n-1$ steps.
Working with these expressions directly is awkward because both have
a multiplicative structure: the objective contains an $(n-1)$st power,
while the recurrence repeatedly multiplies the reach probability by
the next threshold. Taking logarithms simplifies both.

In the
objective, we can write
$(S^{-1}(cz))^{n-1}
=\exp\bigl((n-1)(\log S^{-1}(cz))\bigr)$.
For the recurrence, set $x_i=-\log r_i$. Thus, $x_i$ records the
logarithmic decrease in reach probability by the time agent $i$
arrives. The sequence begins at $x_1=0$ and ends at
$x_n=-\log c$, and the recurrence becomes
\[
x_{i+1}-x_i
=
-\log q_i
=
-\log S^{-1}(ce^{x_i}).
\]

It is useful to regard the points $(i,x_i)$ as a trajectory mapping arrival
positions to log reach probabilities. For our purposes, it is
more convenient to consider the inverse trajectory, which records the
arrival position at which the reach probability falls to a given
level. While this trajectory is only defined on $n$ discrete points, we can create a more continuous trajectory (more amenable to constraints) by interpolating between these points in a piecewise linear fashion.

Formally, let $T\colon[0,-\log c]\to[1,n]$ be the piecewise-linear
function satisfying $T(x_i)=i$.
On each interval
$(x_i,x_{i+1})$,
the slope (and hence derivative) is
\[
T'(x)
=
\frac{1}{x_{i+1}-x_i}
=
\frac{1}{-\log S^{-1}(ce^{x_i})}.
\]
The fundamental theorem of calculus gives a clean invariant of the trajectory that
$\int_0^{-\log c}T'(x)\,dx=T(-\log c)-T(0)=n-1$. This essentially
captures the recurrence constraint: the inverse trajectory moves from agent $1$ to agent $n$ on the interval $[0, -\log c]$, so
its total increase is exactly $n-1$.

The next key idea is that we can approximate $T'(x)$ from below by a
continuous function. Consider
$1/[-\log S^{-1}(ce^x)]$. Since $S^{-1}$ is decreasing, for
$x\in(x_i,x_{i+1})$,
\[
\frac{1}{-\log S^{-1}(ce^x)}
\le
\frac{1}{-\log S^{-1}(ce^{x_i})}
=
T'(x).
\]
This bound is close to equality when consecutive reach probabilities
are close, which helps explain why the resulting continuous relaxation
can be asymptotically tight.

Integrating gives
\[
\int_0^{-\log c}
\frac{dx}{-\log S^{-1}(ce^x)}
\le \int_0^{-\log c} T'(x) \, dx = n-1.
\]
After the change of variables $z=e^x$, this becomes
\[
\int_1^{1/c}
\frac{dz}{-z\log S^{-1}(cz)}
\le n-1.
\]

At this point, the same expression $\log S^{-1}(cz)$ appears in both
the prophet's value and the constraint. In the objective, it is
multiplied by $n-1$ in the exponent; in the constraint, the right-hand
side is $n-1$. This motivates defining
\[
H(z)\coloneqq
\begin{cases}
-(n-1)\log S^{-1}(cz),&0\le z\le1/c,\\
+\infty,&z>1/c.
\end{cases}
\]
The prophet-to-online ratio and the constraint now take the simple
forms
\[
\I(H)
\coloneqq
\int_0^\infty e^{-H(z)}\,dz
=
\frac{\OPT_n(\D)}{c},
\qquad
\C(H)
\coloneqq
\int_1^\infty\frac{dz}{zH(z)}
\le1.
\]
To reemphasize, in the formal proof, we bypass the interpolated trajectory: we define
$H$ directly and prove $\C(H)\le1$ by summing over the intervals
$[1/r_i,1/r_{i+1}]$. The interpolated trajectory viewpoint is included only to
motivate where this particular integral constraint arises.

\paragraph{The variational inequality.}

The proof of \Cref{lem:variational-bound} primarily shows that any maximizer must take
the form
\[
G^\star(z)=
\begin{cases}
0,&0\le z\le J,\\
\sigma(z-J),&J<z<M,\\
+\infty,&z\ge M,
\end{cases}
\]
for some $0\le J<1<M$ and $\sigma>0$. The key observation is that any
such convex function can be represented as a mixture of
\emph{hinges}, functions of the form $h_a(z)=(z-a)_+$. We examine the first-order effect of adding a hinge and then rescaling to preserve the constraint. At a maximizer, each such perturbation has nonpositive first-order effect. However, the combined effect of the hinges already present in $G^\star$ is zero: adding them in their existing proportions simply rescales $G^\star$, which the normalization undoes. Thus, every hinge in the support of the representing measure must have zero first-order effect. Analyzing these conditions shows that the measure is supported only at $J$, giving the form above.

The expression for $\rho(a,b)$ arises from evaluating $\I(G^\star)$.
Writing $a=\sigma J$ and $b=\sigma(M-J)$ gives
$\I(G^\star)=1/\rho(a,b)$. Taking the infimum over $a,b>0$ therefore
gives $\I(G^\star)\le1/\Gamma$ and establishes the desired lower bound.

\paragraph{The matching upper bound.}

To construct a hard distribution, for arbitrary $a,b>0$, we consider
a distribution that takes value $1/\eps$ with probability $\eps$,
value $n/a$ with probability $b/n$, and value $0$ otherwise. The rare,
very large value and the more common intermediate value produce the
two factors appearing in $\rho(a,b)$.

A direct analysis of the equalizing threshold recurrence bounds the
online value, while the prophet's value can be calculated explicitly.
After first taking $\eps\downarrow0$ and then letting $n\to\infty$, the
resulting competitive ratio is at most
\[
\left[
\left(1-\frac{b}{a+b}e^{-a}\right)
\left(1+\frac{1-e^{-b}}a\right)
\right]^{-1}
=
\rho(a,b).
\]
Since $a,b>0$ are arbitrary, this gives
$\limsup_{n\to\infty}\CR_n(-\infty)\le\Gamma$. Together with the
variational lower bound, it follows that
$\lim_{n\to\infty}\CR_n(-\infty)=\Gamma$.

It is worth noting that the functions $H$ corresponding to these
distributions do indeed approach the form identified above. Thus, the
variational problem is not merely a relaxation: its optimal value is
exactly $1/\Gamma$, the inverse of the asymptotic competitive ratio.

\subsection{Proof of \Cref{prop:equalizing-rule}}\label{subsec:equalizing-rules-proof}

Fix a distribution $\D$  with mean $\mu$ and $n\ge2$.

Starting with $i=n-1$ and proceeding backward, choose $q_i\in(0,1)$
such that
\[
\frac{S(q_i)}{q_i}
= \mu
\prod_{j=i+1}^{n-1}q_j,
\]
where the empty product is $1$, and let $\pi^\star$ be the resulting
quantile-threshold rule.

We first verify that this recursion is well-defined. Suppose that
$q_{i+1},\ldots,q_{n-1}$ have already been chosen, and let
$t_i=\mu \prod_{j=i+1}^{n-1}q_j\in(0,\mu]$. The function $q\mapsto S(q)/q$
is continuous on $(0,1]$, tends to $+\infty$ as $q\downarrow0$, and
equals $0$ at $q=1$. The intermediate value theorem therefore gives
some $q_i\in(0,1)$ such that $S(q_i)/q_i=t_i$.

Moreover, every chosen threshold satisfies $q_i>F(0)$. This is
immediate if $F(0)=0$. Otherwise, for every $q\le F(0)$, we have
$S(q)=\mu$, and hence $S(q)/q=\mu/q\ge \mu/F(0)> \mu\ge t_i$. Thus no
solution to the defining equation can lie in $[0,F(0)]$.

Let $r_i=\prod_{j=1}^{i-1}q_j$ be the probability that the item
reaches agent $i$. For every $i<n$, the defining equation gives
$S(q_i)=\mu \prod_{j=i}^{n-1}q_j$, and therefore
\[
v_i(\pi^\star,\D)
=
r_iS(q_i)
=
\left(\prod_{j=1}^{i-1}q_j\right)
\left(\prod_{j=i}^{n-1}q_j\right) \cdot \mu 
=
\mu r_n.
\]
The final agent receives the item whenever reached, so
$v_n(\pi^\star,\D)=r_n\E[X]= \mu r_n$ as well. Thus $\pi^\star$ gives every agent
expected utility $ \mu r_n$.

Since $q_i>F(0)$ for every $i<n$, \Cref{prop:pareto} implies that
$\mathbf v(\pi^\star,\D)=(\mu r_n,\ldots,\mu r_n)$ is Pareto optimal. If
$\ALG_n(-\infty;\D)>\mu r_n$, then some online rule would give every agent
expected utility strictly greater than $\mu r_n$, contradicting this
Pareto optimality. Hence $\ALG_n(-\infty;\D)=\mu r_n$, and $\pi^\star$ is
optimal.\qed

\subsection{Proof of \Cref{lem:variational-reduction}}

Fix a distribution $\D$ satisfying $\E[X]=1$ and $n\ge 2$. Let
$q_1,\ldots,q_{n-1}$ be the thresholds of the optimal equalizing rule
from \Cref{prop:equalizing-rule}, and let $r_i$ be its reach
probabilities. Thus, $r_1=1$, $r_{i+1}=r_iq_i$, and
$c=\ALG_n(-\infty;\D)=r_n$. Because every $q_i\in(F(0),1)$, we have
$c\in(0,1)$.

Recall that $S(q)=\int_q^1F^{-1}(u)\,du$. Thus, $S$ is continuous and
nonincreasing on $[0,1]$, constant and equal to $1$ on $[0,F(0)]$,
and strictly decreasing from $1$ to $0$ on $[F(0),1]$. Consequently,
$S^{-1}$ is continuous and strictly decreasing on $[0,1]$, with
$S^{-1}(0)=1$ and $S^{-1}(1)=F(0)$.

We first verify that $H\in\K_{\mathrm{fin}}$. By construction, $H$ is
nonnegative, nondecreasing, satisfies $H(0)=0$, and equals $+\infty$
for every $z>1/c$. Moreover, $S$ is concave because $F^{-1}$ is
nondecreasing. Its inverse $S^{-1}$ is therefore concave and
decreasing. Since $\log$ is increasing and concave,
$\log S^{-1}$ is concave wherever $S^{-1}$ is positive. It follows
that $z\mapsto-(n-1)\log S^{-1}(cz)$ is convex on $[0,1/c]$.
Extending this function by $+\infty$ beyond $1/c$ preserves
convexity, so $H\in\K_{\mathrm{fin}}$.

We next compute $\I(H)$. Let
$U_{-1}^{\max}=\max\{U_2,\ldots,U_n\}$. Allocating the item to the
agent with the largest quantile implements the prophet's allocation
and treats the agents symmetrically. Conditioning on
$U_{-1}^{\max}$ gives
$\OPT_n(\D)=\E[S(U_{-1}^{\max})]$. Since
$0\le S(U_{-1}^{\max})\le1$, the tail-integral formula yields
\begin{align*}
\frac{\OPT_n(\D)}{c}
&=\int_0^{1/c}
  \Pr\left[S(U_{-1}^{\max})\ge cz\right]\,dz \\
&=\int_0^{1/c}
  \Pr\left[U_{-1}^{\max}\le S^{-1}(cz)\right]\,dz \\
&=\int_0^{1/c}\left(S^{-1}(cz)\right)^{n-1}\,dz \\
&= \int_0^{\infty} e^{-H(z)} \, dz
=\I(H).
\end{align*}

It remains to prove $\C(H) \le 1$. Because the equalizing rule gives
every agent expected utility $c$, we have $r_iS(q_i)=c$ for every
$i<n$. Since $q_i>F(0)$ and $S$ is strictly decreasing on
$[F(0),1]$, it follows that $q_i=S^{-1}(c/r_i)$ and hence
$H(1/r_i)=-(n-1)\log q_i$. Moreover,
$r_{i+1}=r_iq_i$, so
$\log(r_i/r_{i+1})=-\log q_i=H(1/r_i)/(n-1)$.

Because $H$ is nondecreasing, for every $i=1,\ldots,n-1$,
\[
\int_{1/r_i}^{1/r_{i+1}}\frac{dz}{zH(z)}
\le \frac{1}{H(1/r_i)} \int_{1/r_i}^{1/r_{i+1}}\frac{dz}{z}
= \frac{\log(r_i/r_{i+1})}{H(1/r_i)}
=
\frac1{n-1}.
\]
These intervals are consecutive and cover $[1,1/c]$, since $r_1=1$
and $r_n=c$. Summing over $i$, and using $H(z)=+\infty$ for
$z>1/c$, gives $\C(H)\le1$. This completes the proof of
\Cref{lem:variational-reduction}. \qed

\subsection{Proof of \Cref{lem:variational-bound}}\label{sec: proof of variational lemma}

Fix an arbitrary $G\in\K_{\mathrm{fin}}$ satisfying $\C(G)\le 1$. Choose $R>1$ sufficiently large, so that $G(z)=+\infty$ for every $z\ge R$.
Let $\F_R$ be the class of all nonnegative, nondecreasing, convex
functions $X\colon[0,\infty)\to[0,\infty]$ such that: (1) $X(0)=0$, (2) $X(z)=+\infty$  for every $z\ge R$, and $\C(X)\le 1$. In particular, $G\in\F_R$. Define
\[
    \Lambda_R\coloneqq\sup_{X\in\F_R}\I(X).
\]

The following lemma ensures that we may work with an actual maximizer
and records the basic properties needed for later arguments.
Its proof is deferred to \Cref{app:maximizer-existence}.

\begin{lemma}\label{lem:maximizer-existence}
For every \(R>1\), the supremum defining \(\Lambda_R\) is attained.
Moreover, \(\Lambda_R>1\), and every maximizer \(G^\star\in\F_R\)
satisfies
\(
\C(G^\star)=1.
\)
Defining
\[
M\coloneqq\sup\{z\ge0:G^\star(z)<+\infty\},
\qquad
J\coloneqq\sup\{z\ge0:G^\star(z)=0\},
\]
then
\[
0\le J<1<M\le R.
\]
\end{lemma}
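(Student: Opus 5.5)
I would prove \Cref{lem:maximizer-existence} in four steps: compactness for existence, an explicit competitor for $\Lambda_R>1$, a rescaling argument for $\C(G^\star)=1$, and short direct arguments for $J<1<M$.

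\textbf{Existence.} The class $\F_R$ consists of convex, nondecreasing functions with values in $[0,\infty]$, all equal to $+\infty$ on $[R,\infty)$. Take a maximizing sequence $X_k$. Each $X_k$ is finite on some interval $[0,M_k)$ with $M_k\le R$, convex there, and has nondecreasing right-derivatives. Passing to a subsequence, I may assume $M_k\to M_\infty$. By a Helly-type selection (diagonalizing over rationals, then using monotonicity), I may also assume $X_k\to X$ pointwise in $[0,\infty]$ at every point except possibly a countable set. It is cleanest to work with $\phi_k=e^{-X_k}$, which are log-concave, nonincreasing, $[0,1]$-valued functions supported on $[0,R]$. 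Helly's theorem gives pointwise convergence of $\phi_k$ to a limit $\phi$ that is nonincreasing, and pointwise limits of log-concave functions are log-concave. Then $X=-\log\phi$, with its values at the single discontinuity point adjusted so that $X$ is lower semicontinuous and convex, lies in the right class with $X(0)=0$ and $X=+\infty$ on $[R,\infty)$. Dominated convergence on $[0,R]$ gives $\I(X_k)\to\I(X)$. For the constraint, $1/(zX_k(z))\to 1/(zX(z))$ almost everywhere, and Fatou's lemma gives $\C(X)\le\liminf\C(X_k)\le1$. Hence $X$ is a maximizer. The conventions $1/0=+\infty$ need care: if $X=0$ on a set of positive measure in $[1,\infty)$, Fatou already forces $\C(X)=\infty$, which is excluded.

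\textbf{$\Lambda_R>1$.} I would exhibit one explicit member of $\F_R$ with $\I>1$. Take $X(z)=\sigma z$ on $[0,M)$ and $+\infty$ beyond, with $1<M\le R$. Then $\C(X)=\frac1\sigma\int_1^M z^{-2}\,dz=\frac{1-1/M}{\sigma}$, so $\sigma=1-1/M$ makes $\C=1$. The objective is $\I=(1-e^{-\sigma M})/\sigma=(1-e^{-(M-1)})M/(M-1)$. For $M$ near $1$ this is about $M>1$, so it exceeds $1$ for any $R>1$.

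\textbf{$\C(G^\star)=1$.} If $\C(G^\star)<1$, replace $G^\star$ by $\lambda G^\star$ with $\lambda=\C(G^\star)<1$. This function is still in $\F_R$ and has $\C=1$. Moreover, $\I(\lambda G^\star)\ge\I(G^\star)$, with strict inequality unless $e^{-\lambda G^\star}=e^{-G^\star}$ almost everywhere, that is, unless $G^\star\in\{0,\infty\}$ almost everywhere. In that degenerate step-function case, $\C<\infty$ forces $G^\star=+\infty$ on $(1,\infty)$, so $\I\le1<\Lambda_R$, a contradiction. If $\C(G^\star)=0$, then $G^\star=\infty$ almost everywhere on $[1,\infty)$, so again $\I\le1$. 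Either way the maximizer must satisfy $\C(G^\star)=1$.

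\textbf{$J<1<M\le R$.} The bound $M\le R$ is by definition. If $M\le1$, then $\I(G^\star)\le M\le1<\Lambda_R$, a contradiction. If $J\ge1$, then $G^\star=0$ on $[1,J)$, which makes $\C=\infty$ when $J>1$. When $J=1$ exactly, convexity gives $G^\star(z)\le K(z-1)$ near $1^+$, so $\int_1 dz/(zG^\star)$ diverges logarithmically, again giving $\C=\infty$. Hence $J<1$.

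I expect the main obstacle to be the existence step: ensuring that the limit is genuinely convex and lower semicontinuous (that is, of the form $G(0)=0$ with a finite/infinite switch point), and handling the possible jump to $+\infty$ at $M_\infty$ under the $[0,\infty]$ conventions.
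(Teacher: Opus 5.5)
Your proposal is correct and follows essentially the same route as the paper. The ingredients match: a linear competitor for $\Lambda_R>1$; Helly selection on a bounded monotone transform (you use $e^{-X_k}$, the paper uses $X_k/(1+X_k)$), with Fatou for $\C$ and dominated convergence for $\I$; rescaling to get $\C(G^\star)=1$; and divergence of $\int_1 dz/(zG^\star(z))$ to rule out $J\ge1$. The only structural difference is that you prove $\C(G^\star)=1$ before $J<1<M$, handling the degenerate cases directly. Your lower-semicontinuity adjustment is unnecessary, since Helly gives convergence everywhere and pointwise limits of convex functions are convex.

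One small tightening in the $\Lambda_R>1$ step. Saying $\I$ is ``about $M$'' does not by itself give $\I>1$, because the error is of the same order as $M-1$. The exact computation gives $\I>1\iff M<e^{M-1}$, which holds for every $M>1$, so you can simply take $M=R$ as the paper does.
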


Fix a maximizer $G^\star\in\F_R$, and let $J$ and $M$ be as in \Cref{lem:maximizer-existence}. We may assume that $G^\star(M)=+\infty$, since setting its value at $M$ to $+\infty$ preserves convexity and changes neither $\I(G^\star)$ nor $\C(G^\star)$.

\paragraph{Hinges and hinge perturbations.}

For $a\in[0,M)$, define the \emph{hinge} $h_a(z)\coloneqq(z-a)_+=\max\{z-a,0\}$. Hinges are the elementary building blocks of convex functions: their coefficients record where and by how much the slope increases. To characterize $G^\star$, we want to determine which hinges appear in its hinge representation. We begin by computing the first-order effect of adding a small hinge at each location $a$ (and then renormalize to preserve the constraint $\C=1$). The resulting first-order conditions will restrict which locations can have positive weight in the hinge representation of $G^\star$.

For $\varepsilon\ge0$, let $G_{a,\varepsilon} \coloneqq G^\star+\varepsilon h_a$. This function is nonnegative, nondecreasing, and convex, and it has the same finite cutoff $M$ as $G^\star$. Moreover, $G_{a,\varepsilon}\ge G^\star$, so $0<\C(G_{a,\varepsilon}) \le\C(G^\star) =1$.
Define the normalized perturbation
\[
    \widetilde G_{a,\varepsilon}
    \coloneqq
    \C(G_{a,\varepsilon})G_{a,\varepsilon}.
\]
Then $\widetilde G_{a,\varepsilon}\in\F_R$ and $\C(\widetilde G_{a,\varepsilon}) = \frac{\C(G_{a,\varepsilon})}
         {\C(G_{a,\varepsilon})}
    =1$. In particular, since $\C(G^\star)=1$, $\widetilde G_{a,0}=G^\star$.

We now compute the first-order effect of this perturbation. Since
$G^\star$ maximizes $\I$ over $\F_R$ and
$\widetilde G_{a,\varepsilon}\in\F_R$ for every $\varepsilon\ge0$, $\I(\widetilde G_{a,\varepsilon})
    \le
    \I(G^\star)$. Consequently, if the right derivative exists, then
\[
    \Psi(a)
    \coloneqq
    \left.
    \frac{d}{d\varepsilon}
    \I(\widetilde G_{a,\varepsilon})
    \right|_{\varepsilon=0+}
    \le0.
\]
The following lemma computes this derivative; its proof in
Appendix~\ref{app:first-variation} gives the formal details.

\begin{lemma}\label{lem:first-variation}
For $a\in[0,M)$, the right derivative $\Psi(a)$ exists. Define
\begin{align*}
    A(a)
    &\coloneqq
    -\left.
    \frac{d}{d\varepsilon}
    \C(G^\star+\varepsilon h_a)
    \right|_{\varepsilon=0+}
    =
    \int_1^M
    \frac{h_a(z)}{zG^\star(z)^2}\,dz,\\
    B(a)
    &\coloneqq
    -\left.
    \frac{d}{d\varepsilon}
    \I(G^\star+\varepsilon h_a)
    \right|_{\varepsilon=0+}
    =
    \int_0^M h_a(z)e^{-G^\star(z)}\,dz,\\
    \lambda
    &\coloneqq
    -\left.
    \frac{d}{d\varepsilon}
    \I\bigl((1+\varepsilon)G^\star\bigr)
    \right|_{\varepsilon=0}
    =
    \int_0^M G^\star(z)e^{-G^\star(z)}\,dz.
\end{align*}
Then $\Psi(a)=\lambda A(a)-B(a)$. All three quantities are finite and strictly positive.
\end{lemma}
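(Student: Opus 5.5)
Differentiate each functional separately along the perturbation and then combine by the quotient/product rule. Write $c(\varepsilon)\coloneqq\C(G^\star+\varepsilon h_a)$ and $\widetilde G_{a,\varepsilon}=c(\varepsilon)(G^\star+\varepsilon h_a)$. Since $c(0)=1$, we will show that $c$ is right-differentiable at $0$ with $c'(0^+)=-A(a)$. Then
\[
\I(\widetilde G_{a,\varepsilon})=\int_0^M \exp\bigl(-c(\varepsilon)G^\star(z)-c(\varepsilon)\varepsilon h_a(z)\bigr)\,dz,
\]
and the exponent has right derivative $-c'(0^+)G^\star(z)-h_a(z)=A(a)G^\star(z)-h_a(z)$ at $\varepsilon=0$. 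If we may differentiate under the integral, this gives $\Psi(a)=A(a)\int_0^M G^\star e^{-G^\star}\,dz-\int_0^M h_ae^{-G^\star}\,dz=\lambda A(a)-B(a)$. The same computation with $c\equiv1$ identifies $B(a)$ and $\lambda$ as the stated right derivatives of $\I(G^\star+\varepsilon h_a)$ and $\I((1+\varepsilon)G^\star)$.

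\emph{Justifying the interchanges.} For $\I$ this is easy. The integrand lives on the bounded interval $[0,M)$, and by the mean value theorem each difference quotient is dominated by $(C G^\star(z)+h_a(z))e^{-G^\star(z)/2}$ for small $\varepsilon$ (using $c(\varepsilon)\ge 1/2$ near $0$). Since $G^\star\ge0$ and $ue^{-u/2}$ is bounded, this dominating function is integrable. The delicate part is $c$. Its difference quotient is
\[
\frac{c(\varepsilon)-1}{\varepsilon}=-\int_1^M\frac{h_a(z)}{zG^\star(z)\bigl(G^\star(z)+\varepsilon h_a(z)\bigr)}\,dz ,
\]
which increases in absolute value monotonically to $A(a)$ as $\varepsilon\downarrow0$. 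Monotone convergence therefore yields $c'(0^+)=-A(a)$ whether or not $A(a)$ is finite. This makes finiteness of $A(a)$ the main obstacle.

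\emph{Finiteness and positivity.} Positivity of $A$, $B$, and $\lambda$ is immediate. By \Cref{lem:maximizer-existence}, $J<1<M$, so $h_a>0$ on $(\max(a,1),M)$, a set of positive measure on which $G^\star$ is finite. For $\lambda$, $G^\star>0$ on $(J,M)$. Finiteness of $B$ and $\lambda$ follows from boundedness of $ue^{-u}$ on the finite interval $[0,M]$. For $A(a)$, note that $G^\star$ is convex with $G^\star=0$ on $[0,J]$ and $G^\star>0$ on $(J,M)$, so on $[1,M)$ we have $G^\star(z)\ge G^\star(1)>0$ whenever $J<1$. Hence $h_a/(zG^{\star2})\le M/(G^\star(1)^2)$ on $[1,M)$, which is integrable. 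The degenerate worry is $G^\star(1)=0$. This is excluded because $J<1$ and convexity force $G^\star(1)>0$; otherwise $\C(G^\star)$ would itself involve a non-integrable $1/(zG^\star)$ near $J$, contradicting $\C(G^\star)=1$.

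\emph{Assembling.} With $A(a)<\infty$, the map $c$ is right-differentiable and, for small $\varepsilon$, $c(\varepsilon)$ stays in $[1/2,1]$. The dominated-convergence bound above then applies to $\I(\widetilde G_{a,\varepsilon})$, so $\Psi(a)$ exists and equals $\lambda A(a)-B(a)$.
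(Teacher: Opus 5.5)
Your proposal is correct and takes essentially the same route as the paper: you compute $c'(0^+)=-A(a)$ for the constraint, then pass the derivative of $e^{-c(\varepsilon)(G^\star+\varepsilon h_a)}$ through the integral via a mean-value bound of the form $(KG^\star+\mathrm{const})e^{-G^\star/2}$. The only differences are minor. You use monotone convergence plus the bound $h_a(z)/(zG^\star(z)^2)\le M/G^\star(1)^2$ on the bounded interval $[1,M)$, whereas the paper uses dominated convergence with the dominating function $(M/G^\star(1))\cdot 1/(zG^\star(z))$, which is integrable because $\C(G^\star)=1$.
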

We will also use the following derivatives of $\Psi$, derived in
Appendix~\ref{app:psi-derivatives}.
\begin{lemma}\label{lem:psi-derivatives}
The function $\Psi$ is continuously differentiable on $[0,M)$, with
\[
    \Psi'(a)
    =
    \begin{cases}
    \displaystyle
    -\lambda\int_1^M\frac{dz}{zG^\star(z)^2}
    +\int_a^M e^{-G^\star(z)}\,dz,
    & 0\le a\le1,\\[4mm]
    \displaystyle
    \int_a^M
    \left(
        e^{-G^\star(z)}
        -\frac{\lambda}{zG^\star(z)^2}
    \right)\,dz,
    & 1<a<M.
    \end{cases}
\]
Moreover, $\Psi$ is twice continuously differentiable away from
$a=1$, with
\[
    \Psi''(a)
    =
    \begin{cases}
        -e^{-G^\star(a)}, & 0<a<1,\\[2mm]
        \displaystyle
        \frac{\lambda}{aG^\star(a)^2}
        -e^{-G^\star(a)}, & 1<a<M.
    \end{cases}
\]
\end{lemma}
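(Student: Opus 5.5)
The plan is to differentiate the explicit formula for $\Psi$ from \Cref{lem:first-variation} under the integral sign. Write
\[
\Psi(a)=\lambda\int_1^M\frac{(z-a)_+}{zG^\star(z)^2}\,dz-\int_0^M (z-a)_+e^{-G^\star(z)}\,dz,
\]
where $\lambda$ does not depend on $a$. Both terms are of the form $\int (z-a)_+\,w(z)\,dz$ for a nonnegative weight $w$. If $w$ is integrable on the relevant range, the standard identity
\[
\int_0^M (z-a)_+w(z)\,dz=\int_a^M\!\!\int_s^M w(z)\,dz\,ds
\]
shows that this function of $a$ is $C^1$ with derivative $-\int_a^M w(z)\,dz$. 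Wherever $w$ is continuous, it is $C^2$ with second derivative $w(a)$.

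For the $B$ term, take $w_B(z)=e^{-G^\star(z)}$ on $[0,M)$. This weight is bounded by $1$ and continuous wherever $G^\star$ is finite, since a convex function is continuous on the interior of its domain. Differentiating $-B$ therefore contributes $+\int_a^M e^{-G^\star}$ to $\Psi'$ and $-e^{-G^\star(a)}$ to $\Psi''$, for $0<a<M$.

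For the $A$ term, take $w_A(z)=\ind\{z\ge1\}/(zG^\star(z)^2)$. The case split at $a=1$ comes from the indicator. For $a\le1$, the integrand of $A$ is $(z-a)/(zG^\star(z)^2)$ on all of $[1,M]$, so $A$ is affine in $a$ with slope $-\int_1^M dz/(zG^\star(z)^2)$. This gives the first formula for $\Psi'$, and the $A$ term contributes nothing to $\Psi''$ on $(0,1)$. For $a>1$, the $A$ term contributes $-\lambda\int_a^M w_A$ to $\Psi'$ and $+\lambda/(aG^\star(a)^2)$ to $\Psi''$. The two branches agree at $a=1$, so $\Psi'$ is continuous there. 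The kink in $\Psi''$ at $a=1$, where $w_A$ jumps, is exactly why twice differentiability is claimed only away from $1$.

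The main obstacle is integrability and continuity of $w_A$ near $z=1$, and near $M$ when $G^\star$ stays finite up to $M$.
\begin{itemize}
\item \emph{Near $z=1$.} By \Cref{lem:maximizer-existence}, $J<1$, so $G^\star(z)>0$ for $z>J$. By monotonicity, $G^\star(z)\ge G^\star(1)>0$ on $[1,M)$. Hence $w_A\le 1/G^\star(1)^2$ there, which is bounded.
\item \emph{Near $M$.} Integrability up to $M$ is automatic, since $w_A$ only decreases as $G^\star$ grows.
\item \emph{Continuity.} Continuity of $w_A$ on $(1,M)$ follows from continuity of the finite convex function $G^\star$ on the interior of its domain.
\end{itemize}
Finiteness of $\int_1^M dz/(zG^\star(z)^2)$ was already asserted in \Cref{lem:first-variation}; it also follows directly from the bound above. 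With these facts, dominated convergence justifies the interchange. One-sided continuity of the integrals handles the endpoint $a=0$, which completes the proof.
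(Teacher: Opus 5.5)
Your proposal is correct and follows essentially the same route as the paper. Both arguments write each term of $\Psi=\lambda A-B$ as $\int_a^M(z-a)w(z)\,dz$, turn it into an iterated tail integral via Tonelli, apply the fundamental theorem of calculus, and get the case split at $a=1$ from the restriction $z\ge1$ in $A$. The only cosmetic difference is that you justify integrability of $1/(zG^\star(z)^2)$ by the pointwise bound $1/G^\star(1)^2$ on the bounded interval $[1,M)$, whereas the paper uses $\int_1^M dz/(zG^\star(z)^2)\le \C(G^\star)/G^\star(1)$; both are valid.
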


\paragraph{The hinge representation and its support.} 

We next show that $G^\star$ is linear on $(J,M)$.
We use the standard hinge representation of a convex function.
Because $G^\star$ is finite and convex on $[0,M)$, we have
$G^\star(J)=0$. There is a nonnegative, locally finite measure
$\mu^\star$ on $[J,M)$ such that
\begin{equation}\label{eq:hinge-representation}
    G^\star(z)
    =
    \int_{[J,M)}h_a(z)\,\mu^\star(da)
    \qquad(0\le z<M).
\end{equation}

Here the atom of $\mu^\star$ at $J$ equals the initial right slope $G_+^{\star\prime}(J)$, while its mass after $J$ records subsequent increases in the slope. The following lemma, proved in
Appendix~\ref{app:lem:averaged-first-order-condition}, shows that the
aggregate first-order effect of the hinges appearing in $G^\star$ is
zero. Intuitively, if these hinges had a strictly negative aggregate
effect, slightly subtracting them from $G^\star$ would preserve
convexity and yield a first-order improvement. Furthermore, perturbing $G^\star$ in the aggregate direction of all its
constituent hinges simply rescales $G^\star$, an effect that is undone
by the normalization. Its aggregate first-order effect must therefore
be zero.

\begin{lemma}\label{lem:averaged-first-order-condition}
For the measure $\mu^\star$ in
\Cref{eq:hinge-representation}, $\int_{[J,M)}\Psi(a)\,\mu^\star(da)=0$.
\end{lemma}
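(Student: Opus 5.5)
The identity follows directly from the hinge representation \eqref{eq:hinge-representation} by exchanging the order of integration. No further perturbation argument is needed, because the aggregate of the hinge directions is exactly $G^\star$ itself. Write $\Psi(a)=\lambda A(a)-B(a)$ as in \Cref{lem:first-variation}. I will compute $\int A\,d\mu^\star$ and $\int B\,d\mu^\star$ separately and show that they equal $1$ and $\lambda$, respectively.

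For the $A$-term, substitute the integral formula for $A(a)$. Both $h_a(z)\ge0$ and $1/(zG^\star(z)^2)\ge0$, and $\mu^\star$ is a nonnegative, locally finite (hence $\sigma$-finite) measure on $[J,M)$. Tonelli's theorem therefore gives
\[
\int_{[J,M)}A(a)\,\mu^\star(da)
=\int_1^M\frac{1}{zG^\star(z)^2}\int_{[J,M)}h_a(z)\,\mu^\star(da)\,dz
=\int_1^M\frac{G^\star(z)}{zG^\star(z)^2}\,dz
=\int_1^M\frac{dz}{zG^\star(z)}.
\]
The last integral is $\C(G^\star)$, because $G^\star=+\infty$ on $[M,\infty)$. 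It equals $1$ by \Cref{lem:maximizer-existence}. The cancellation of one power of $G^\star$ is legitimate because $G^\star(z)>0$ for $z>J$, and $J<1$ by \Cref{lem:maximizer-existence}. So on the range of integration $z\in(1,M)$, $G^\star$ is finite and positive.

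For the $B$-term, the same exchange of integrals gives
\[
\int_{[J,M)}B(a)\,\mu^\star(da)
=\int_0^M e^{-G^\star(z)}\int_{[J,M)}h_a(z)\,\mu^\star(da)\,dz
=\int_0^M G^\star(z)e^{-G^\star(z)}\,dz=\lambda.
\]
Here I use that \eqref{eq:hinge-representation} holds for every $z\in[0,M)$, including $z\le J$, where both sides vanish.

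Both computations produce finite values: $\C(G^\star)=1$, and $\lambda<\infty$ by \Cref{lem:first-variation}. Hence $\lambda A$ and $B$ are separately $\mu^\star$-integrable, so the difference may be split, and
\[
\int_{[J,M)}\Psi\,d\mu^\star=\lambda\cdot1-\lambda=0.
\]
The only step that needs care is the justification of Tonelli. The measure $\mu^\star$ may have infinite total mass accumulating near $M$, since the slope of $G^\star$ can blow up there. Working with nonnegative integrands and a $\sigma$-finite measure avoids any integrability hypothesis before the exchange. Finiteness is then read off from the final answers rather than assumed, which is what makes the splitting of $\Psi$ into its two terms valid.
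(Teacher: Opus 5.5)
Your proof is correct and follows the paper's argument: apply Tonelli to the nonnegative integrands to get $\int A\,d\mu^\star=\C(G^\star)=1$ and $\int B\,d\mu^\star=\lambda$, then subtract. You also spell out two points the paper leaves implicit: that $\mu^\star$ is $\sigma$-finite, and that both integrals are finite, which is what justifies splitting $\Psi=\lambda A-B$.
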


Since $\Psi(a)\le0$ for every $a$, \Cref{lem:averaged-first-order-condition} implies that $\mu^\star\bigl(\{a\in[J,M):\Psi(a)<0\}\bigr)=0$. Recall that the support of $\mu^\star$ is the set
\[
    \operatorname{supp}(\mu^\star)
    \coloneqq
    \left\{
        a\in[J,M):
        \mu^\star\bigl(
            (a-\delta,a+\delta)\cap[J,M)
        \bigr)>0
        \text{ for every }\delta>0
    \right\}.
\]
In other words, a point belongs to the support if every neighborhood of that point receives positive $\mu^\star$-measure. We claim that $\Psi$ vanishes everywhere on this support. Indeed,
suppose that
$a_0\in\operatorname{supp}(\mu^\star)$ but $\Psi(a_0)<0$.
By continuity, there are $\delta,\eta>0$ such that $\Psi(a)\le-\eta$ whenever $a\in(a_0-\delta,a_0+\delta)\cap[J,M)$.
This neighborhood has positive $\mu^\star$-measure because
$a_0$ lies in the support. It would therefore make a strictly negative
contribution to the integral of $\Psi$. Since $\Psi\le0$ everywhere,
this would imply
\[
    \int_{[J,M)}\Psi(a)\,\mu^\star(da)<0,
\]
contradicting the equality above. Hence
\begin{equation}\label{eq:psi-zero-on-support}
    \Psi(a)=0
    \qquad
    \text{for every }a\in\operatorname{supp}(\mu^\star).
\end{equation}

Moreover, $G^\star(z)>0$ for every $z>J$. Thus
\Cref{eq:hinge-representation} implies $\mu^\star([J,z))>0$ for every $z>J$. Therefore, $J\in\operatorname{supp}(\mu^\star)$, and hence $\Psi(J)=0$.

We have shown that $J\in\operatorname{supp}(\mu^\star)$ and that $\Psi$ vanishes on $\operatorname{supp}(\mu^\star)$. To prove that $G^\star$ consists of a single hinge, it therefore remains to show that $\mu^\star$ has no other support points. For $(J, 1]$, this is relatively straightforward:  By
\Cref{lem:psi-derivatives}, $ \Psi''(a)=-e^{-G^\star(a)}<0$ for $J<a<1$. Thus $\Psi$ is strictly concave on $[J,1]$. Since $\Psi\le0$ and $\Psi(J)=0$, strict concavity gives $\Psi(a)<0$ for every $a\in(J,1]$: otherwise, $\Psi$ would be positive between $J$ and $a$. Together with \Cref{eq:psi-zero-on-support}, this implies $\operatorname{supp}(\mu^\star)\cap(J,1]=\varnothing$.
The range $(1, M)$ is more involved, and we defer the details to Appendix~\ref{app:empty-support}.

\begin{lemma}\label{lem:empty-support}
    $\mathrm{supp}(\mu^\star) \cap (J, M) = \varnothing$.
\end{lemma}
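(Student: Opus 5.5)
The plan is to argue by contradiction. Suppose $\operatorname{supp}(\mu^\star)\cap(1,M)\neq\varnothing$; the interval $(J,1]$ is already excluded by the concavity argument above. The engine is the sign structure of
\[
\varphi(a)\coloneqq\Psi''(a)=\frac{\lambda}{aG^\star(a)^2}-e^{-G^\star(a)},\qquad 1<a<M .
\]
On $(1,M)$ we have $\varphi(a)>0$ exactly when $g(a)\coloneqq \log\lambda+G^\star(a)-\log a-2\log G^\star(a)>0$.

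First I will record the facts to be used. $\Psi\le0$ on $[J,M)$, and $\Psi$ vanishes on the (closed) support by \eqref{eq:psi-zero-on-support} and continuity. At any support point $a_0\in(1,M)$, $\Psi$ attains its maximum value $0$; since $\Psi$ is $C^1$ there by \Cref{lem:psi-derivatives}, this gives $\Psi'(a_0)=0$. Next, from the formulas of \Cref{lem:first-variation}, $A(a),B(a)\to0$ as $a\uparrow M$, so $\Psi(a)\to0$. From \Cref{lem:psi-derivatives}, $\Psi'(a)\to0$ as well, using integrability of the integrand near $M$. Finally, on any open interval disjoint from $\operatorname{supp}(\mu^\star)$, the hinge representation \eqref{eq:hinge-representation} shows that $G^\star$ is affine, say $G^\star(z)=\alpha+s z$ with $s>0$, and $G^\star>0$ there.

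The key computation concerns such an affine piece. There
\[
g''(z)=\frac1{z^2}+\frac{2s^2}{G^\star(z)^2}>0,
\]
so $g$ is strictly convex. Hence the zero set of $\varphi$ on the piece is at most two points, and the sign pattern of $\varphi$ is a sub-pattern of $(+,-,+)$. In particular, $\varphi$ can never be negative on both sides of a positive stretch.

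Main case: let $a_1\coloneqq\sup\bigl(\operatorname{supp}(\mu^\star)\cap(1,M)\bigr)$ and suppose $a_1<M$. Then $\Psi(a_1)=\Psi'(a_1)=0$, $G^\star$ is affine on $(a_1,M)$, and $\Psi(M^-)=\Psi'(M^-)=0$. Taylor's formula from both ends gives
\[
\int_{a_1}^M\varphi=0,\qquad \int_{a_1}^M (z-a_1)\varphi(z)\,dz=0 .
\]
So $\varphi$ must change sign at least twice; with at most two zeros, it follows that $\varphi$ is positive near $a_1$. But then $\Psi$ is strictly convex just to the right of $a_1$, and together with $\Psi(a_1)=\Psi'(a_1)=0$ this forces $\Psi>0$ there, contradicting $\Psi\le0$. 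If $\varphi$ vanishes identically near $a_1$ (the degenerate case), then $g\equiv0$ on an interval, which is impossible for strictly convex $g$.

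The remaining case, which I expect to be the main obstacle, is when support points accumulate at $M$, so no final affine gap exists. My first attempt will be to determine the sign of $\varphi$ near $M$. If $G^\star(z)\to\infty$ as $z\uparrow M$, then $1/G^2$ dominates $e^{-G}$, so $\varphi>0$ near $M$. In that case $\Psi$ is strictly convex near $M$ with $\Psi\le0$, $\Psi(M^-)=0$, and $\Psi$ vanishing at points accumulating at $M$. This forces $\Psi\equiv0$ near $M$, hence $\varphi\equiv0$, contradicting $\varphi>0$. If instead $G^\star$ stays bounded, then $\varphi$ has a limit at $M$. When that limit is nonzero it has a fixed sign near $M$, and the same strict convexity or concavity argument between consecutive zeros of $\Psi$ and $\Psi'$ gives a contradiction. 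The delicate subcase is a zero limit, i.e.\ $\lambda e^{G^\star(M^-)}=M\,G^\star(M^-)^2$. There I would apply the convexity of $g$ on the small gaps between support points and, where needed, compare one-sided derivatives of $g$, which only increase because $G^\star$ is convex. The aim is to show that $\varphi$ is eventually of one sign, after which the previous argument closes the proof.
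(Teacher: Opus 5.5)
Your argument is sound for the case where $\operatorname{supp}(\mu^\star)\cap(1,M)$ has a largest point $a_1<M$. On the last gap $G^\star$ is affine, so $g$ is strictly convex there. The two moment identities then force the sign pattern $(+,-,+)$, and $\varphi>0$ just right of $a_1$ contradicts $\Psi\le0$. The accumulation subcases $G^\star(M^-)=+\infty$ and $\varphi(M^-)\neq0$ are also handled correctly.

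The proof is nevertheless incomplete. When the support accumulates at $M$ and $\lambda e^{L}=ML^2$ with $L=G^\star(M^-)$, you only state an aim, and the tools you name do not reach it.
\begin{itemize}
\item Near $M$, $G^\star$ need not be affine on any interval: it can have infinitely many kinks accumulating at $M$, or $\mu^\star$ can have a non-atomic part. So strict convexity of $g$ on affine pieces gives no control of the sign of $\varphi$ there.
\item The one-sided derivatives of $g$ do not only increase at kinks. Since $g'=G^{\star\prime}(1-2/G^\star)-1/z$, a slope increase $\Delta$ of $G^\star$ changes $g'$ by $\Delta(1-2/G^\star)$, which is negative wherever $G^\star<2$.
\item Your two-moment argument in fact works on every maximal gap $(b,b')$ of the support with $b\in(1,M)$, not just the last one. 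So the configuration that really remains is $[a_0,M)\subseteq\operatorname{supp}(\mu^\star)$. There $\Psi\equiv0$ forces $g\equiv0$ on an interval on which $G^\star$ is not affine, and nothing in your proposal excludes that.
\end{itemize}

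The missing idea is the shape of $g$ for an arbitrary convex $G^\star$, not only an affine one. Equivalently, this is the shape of $m(z)=zG^\star(z)^2e^{-G^\star(z)}=\lambda e^{-g(z)}$.
\begin{itemize}
\item On $\{G^\star\le2\}$, $g$ is strictly decreasing, because its one-sided derivatives $G^{\star\prime}_\pm(1-2/G^\star)-1/z$ are negative.
\item On $\{G^\star\ge2\}$, which is a final interval, $g$ is strictly convex. Indeed, $x\mapsto x-2\log x$ is convex and nondecreasing on $[2,\infty)$, so its composition with the convex $G^\star$ is convex, and $-\log z$ is strictly convex.
\end{itemize}
This settles your delicate subcase. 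It also makes the whole case split unnecessary, which is how the paper proceeds:
\begin{enumerate}
\item The identity $\Psi(a)=\int_a^M(z-a)\Psi''(z)\,dz$, together with $\Psi\le0$ near $M$, forces $g(M^-)\le0$. In particular $L<\infty$.
\item At any support point $a\in(1,M)$, $\Psi$ attains its maximum $0$, so $\Psi''(a)\le0$, that is, $g(a)\le0$.
\item The decreasing-then-strictly-convex shape of $g$ then gives $g<0$, hence $\Psi''<0$, on all of $(a,M)$.
\item The same identity yields $\Psi(a)<0$, a contradiction.
\end{enumerate}
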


The above arguments combined show that
$\operatorname{supp}(\mu^\star)=\{J\}$.
Thus $\mu^\star=\sigma\delta_J$ for some $\sigma>0$, and,
\begin{equation}\label{eq:optimal-hinge-form}
    G^\star(z)
    =
    \begin{cases}
        0, & 0\le z\le J,\\
        \sigma(z-J), & J<z<M,\\
        +\infty, & z\ge M.
    \end{cases}
\end{equation}

\paragraph{Evaluating the single-hinge maximizer.}

It remains to evaluate this function. First suppose that $J>0$, and
define $
    a\coloneqq\sigma J$ and
    $b\coloneqq\sigma(M-J)$.
Then $a,b>0$. Since $\C(G^\star)=1$,
\[
    1 =
    \frac1{\sigma}
    \int_1^M\frac{dz}{z(z-J)} =
    \frac1{\sigma J}
    \log\left(
        \frac{M-J}{M(1-J)}
    \right) =
    \frac1a
    \log\left(
        \frac{b}{(a+b)(1-J)}
    \right).
\]
Therefore, $J = 1-\frac{b}{a+b}e^{-a}$. Since $1/\sigma=J/a$, we get
\begin{align*}
    \I(G^\star)
    &=
    J+\frac{1-e^{-b}}{\sigma}\\
    &=
    \left(
        1-\frac{b}{a+b}e^{-a}
    \right)
    \left(
        1+\frac{1-e^{-b}}a
    \right)\\
    &=
    \frac1{\rho(a,b)}
    \le\frac1\Gamma.
\end{align*}

If $J=0$, then $\C(G^\star)=1$ gives
\(
    \sigma=1-\frac1M.
\)
Setting $b=\sigma M=M-1>0$, we get $\I(G^\star) = \frac{1-e^{-b}}{\sigma}
    =
    \frac{b+1}{b}(1-e^{-b})$.
Here, the parameter $a=\sigma J$ from the preceding case equals zero, which lies outside the domain $a>0$ in the definition of $\rho$. However, this case is obtained by taking $a\downarrow0$: $\I(G^\star)
    = \lim_{a\downarrow0}\frac1{\rho(a,b)}$; and hence again $\I(G^\star)\le\frac1\Gamma$.

Finally, the original function $G$ belongs to $\F_R$, so
\(
    \I(G)
    \le
    \I(G^\star)
    \le
    \frac1\Gamma.
\)
This proves \Cref{lem:variational-bound}. \qed

\subsection{Proof of \Cref{thm:main}}

\subsubsection{Lower bound: $\CR_n(-\infty) \ge \Gamma$}

Fix any distribution $\D$ with finite, positive mean, and $n \ge 2$. Scaling all
values by a positive constant scales both
$\ALG_n(-\infty;\D)$ and $\OPT_n(\D)$ by that constant, and therefore
does not change their ratio. We may thus assume that $\E[X]=1$.

By \Cref{lem:variational-reduction}, there is a function
$H\in\K_{\mathrm{fin}}$ satisfying
\[
\frac{\OPT_n(\D)}{\ALG_n(-\infty;\D)}
=
\I(H)
\qquad\text{and}\qquad
\C(H)\le1.
\]
Applying \Cref{lem:variational-bound} gives
\[
\frac{\OPT_n(\D)}{\ALG_n(-\infty;\D)}
=
\I(H)
\le
\frac1\Gamma.
\]
Equivalently,
\[
\frac{\ALG_n(-\infty;\D)}{\OPT_n(\D)}
\ge\Gamma.
\]
Since this holds for every distribution $\D$ with finite, positive
mean, taking the infimum over $\D$ proves
\(
\CR_n(-\infty)\ge\Gamma.
\)

\subsubsection{Upper bound: $\limsup_{n \to \infty} \CR_n(-\infty)\leq\Gamma$}\label{subsubsec: hard distribution}

Next, we prove our upper bound on $\CR_n(-\infty)$.

Fix arbitrary $a,b>0$. We construct a sequence of distributions whose competitive ratios converge to at most $\rho(a,b)$ (as defined in the beginning of~\Cref{sec: results}). Fix $n>b$, and pick a sufficiently small $\eps>0$, so that $\eps <1 - \frac bn  $ and $\frac{1}{\varepsilon} > \frac na$. Let $\D_{n,\eps}$ take value $1/\eps$ with probability $\eps$, value
$n/a$ with probability $b/n$, and value $0$ otherwise. Its mean is $\E[X] = \frac1\eps\eps+\frac na\frac bn = 1+\frac ba$. The constraint on $\varepsilon$ ensures that this is well-defined, and that $1/\varepsilon$ truly is the highest value. Abbreviating $S_{\D_{n,\eps}}$ to $S$, we have
\[
S(q)=
\begin{cases}
\displaystyle
1+\frac ba,
  &0\le q\le1-\eps-b/n,\\[3mm]
\displaystyle
1+\frac na(1-\eps-q),
  &1-\eps-b/n<q\le1-\eps,\\[3mm]
\displaystyle
\frac{1-q}{\eps},
  &1-\eps<q\le1.
\end{cases}
\]
We obtain the desired bound by first letting $\eps\downarrow0$ and then
letting $n\to\infty$. The argument below computes the online and
prophet values directly, although the construction can also be viewed
as realizing the maximizers from the optimization problem.\footnote{To see this, write $\mu=1+b/a$. After taking
$\eps\downarrow0$, the recurrence below becomes an equality and gives
$c_n=\left(1-\frac{b}{a+b}(1+a/n)^{-(n-1)}\right)^{-1}\to c_\infty$,
where
$c_\infty=\left(1-\frac{b}{a+b}e^{-a}\right)^{-1}$.
Moreover, $S^{-1}(y)=1$ for $y\le1$ and
$S^{-1}(y)=1-\frac an(y-1)$ for $1<y\le\mu$. Consequently, after
normalizing the distribution by $\mu$, the associated functions $H$
converge almost everywhere to the function that is zero
on $[0,J]$, has slope $\sigma=ac_\infty$ on $(J,M)$, and is infinite
for $z>M$, where $J=1/c_\infty$ and $M=\mu/c_\infty$. In particular,
$\sigma J=a$ and $\sigma(M-J)=b$. This exactly gives the objective value $1/\rho(a, b)$.}

\paragraph{Bounding $\ALG_n$.}
Let $c=\ALG_n(-\infty;\D_{n,\eps})$. Let $q_1, \ldots, q_{n-1}$ be the thresholds of the rule in \Cref{prop:equalizing-rule} and let $r_i$ be the probability that the item is still available
when agent $i$ arrives. Thus $r_1=1$, $r_{i+1}=r_iq_i$, and $r_iS(q_i)=c$ for every $i<n$.
The last agent's utility is $c=r_n\E[X] = r_n\left(1+\frac ba\right)$,
so $r_n=\frac{ac}{a+b}$.

We claim that
\[
  c
  \le
  \frac{1}{
    1-\dfrac{b}{a+b}
      \left(\dfrac n{n+a}\right)^{n-1}
  }.
\]
The right hand side is greater than $1$, so the claim is immediate if $c\le1$. It remains to consider $c>1$.

Since $r_i\le1$, we have $S(q_i)=\frac{c}{r_i}>1$ for every $i<n$. Moreover, the optimal thresholds satisfy $q_i>F(0)=1-\eps-\frac bn$. Since $S(q)\le1$ for $q\ge1-\eps$, the inequality $S(q_i)>1$ implies $q_i<1-\eps$. Together with $q_i>1-\eps-b/n$, this shows that $1-\eps-\frac bn<q_i<1-\eps$. Substituting $S(q_i)=1+\frac na(1-\eps-q_i)$ into $r_iS(q_i)=c$ gives $r_i(1-\eps-q_i)= \frac an(c-r_i)$. It follows that
\begin{align*}
  c-r_{i+1}
  &=c-r_iq_i\\
  &=\left(1+\frac an\right)(c-r_i)+\eps r_i\\
  &\ge
  \left(1+\frac an\right)(c-r_i).
\end{align*}
Since $r_1=1$, by induction we have $c-r_n \ge
  \left(1+\frac an\right)^{n-1}(c-1)$. On the other hand, $c-r_n = c-\frac{ac}{a+b} = c\frac{b}{a+b}$. Therefore,
\[
  \left(1+\frac an\right)^{n-1}(c-1)
  \le
  c\frac{b}{a+b}.
\]
Rearranging gives
\[
  c
  \le
  \frac{1}{
    1-\dfrac{b}{a+b}
      \left(1+\dfrac an\right)^{-(n-1)}
  }
  =
  \frac{1}{
    1-\dfrac{b}{a+b}
      \left(\dfrac n{n+a}\right)^{n-1}
  }.
\]

\paragraph{Bounding $\OPT_n$.}
The maximum equals $1/\eps$ if at least one draw takes the high value. This event has probability $1-(1-\eps)^n$. If no draw takes the high value but at least one draw takes the middle value, the maximum equals $n/a$. This event has probability $(1-\eps)^n - \left(1-\eps-\frac bn\right)^n$.
Therefore
\[
  \OPT_n(\D_{n,\eps})
  =
  \frac1n\left[
    \frac{1-(1-\eps)^n}{\eps}
    +
    \frac na
    \left(
      (1-\eps)^n
      -
      \left(1-\eps-\frac bn\right)^n
    \right)
  \right].
\]
Taking $\eps \rightarrow 0$ gives
\[
  \lim_{\eps \rightarrow 0^+}
  \OPT_n(\D_{n,\eps})
  =
  1+\frac1a
  \left[
    1-\left(1-\frac bn\right)^n
  \right].
\]

\paragraph{Putting it all together.}
Since $\CR_n(-\infty)$ is the infimum over distributions, combining the
online and offline bounds gives
\[
  \CR_n(-\infty)
  \le
  \frac{1}{
    \left(
      1-\dfrac{b}{a+b}
        \left(\dfrac n{n+a}\right)^{n-1}
    \right)
    \left(
      1+\dfrac1a
        \left[
          1-\left(1-\dfrac bn\right)^n
        \right]
    \right)
  }.
\]

Finally, $\left(\frac n{n+a}\right)^{n-1} = \left(1+\frac an\right)^{-(n-1)} \rightarrow e^{-a}$, 
and $\left(1-\frac bn\right)^n \rightarrow e^{-b}$. It follows that
\[
  \limsup_{n\to\infty}\CR_n(-\infty)
  \le
  \frac{1}{
    \left(
      1-\dfrac{b}{a+b}e^{-a}
    \right)
    \left(
      1+\dfrac{1-e^{-b}}a
    \right)
  }
  = \rho(a, b).
\]

Because $a,b>0$ were arbitrary,
\begin{align*}
  \limsup_{n\to\infty}\CR_n(-\infty) \le  \inf_{a, b > 0} \rho(a, b) = \Gamma.
\end{align*}
Combining this upper bound with
$\CR_n(-\infty)\ge\Gamma$ for every $n$ completes the proof of
\Cref{thm:main}. \qed

\section*{Acknowledgments}
Alexandros Psomas was supported in part by an NSF CAREER award CCF-2144208, and a grant by the AI Security Institute (AISI) via the Alignment Project. Abhiram Manohara was supported in part by an NSF CAREER award CCF-2144208.
\paragraph{AI Disclosure} 
We worked on this problem for several months and obtained weaker bounds before using GPT-5.6 Sol. Its initial attempts did not yield the tight bound, but over extended interactions it proposed recasting the remaining problem as an optimization problem, and suggested a variational inequality. In subsequent exchanges, it assisted in developing the statement and proof of the current version of this inequality (\Cref{lem:variational-bound}). We then verified the proof and developed the exposition, with particular emphasis on readability and intuition. The authors take full responsibility for the correctness of the results and all other content in the paper.


\bibliographystyle{alpha}
\bibliography{refs}

\appendix

\section{Proof of \Cref{prop:small-n}}\label{subsec: proof of gap for 3 agents}

First consider \(n=2\), and normalize so that \(\E[X]=1\). If the first agent uses quantile threshold \(q\), then the two agents receive expected utilities \(S(q)\) and \(q\). The optimal egalitarian threshold therefore satisfies \(S(q)=q\), and hence \(\ALG_2(-\infty;\D)=q\) and \(S^{-1}(q)=q\). Moreover, concavity of \(S\), together with \(S(0)=1\) and \(S(1)=0\), implies \(S(q)\ge 1-q\), so \(q\ge 1/2\).

The prophet's expected utility per agent is
\(\OPT_2(\D)=\int_0^1 S(t)\,dt=\int_0^1 S^{-1}(s)\,ds\), where the second equality follows by integrating the region under \(S\) in the opposite order. Let \(-a\) be a supergradient of the concave function \(S^{-1}\) at \(q\). Then \(S^{-1}(s)\le \min\{1,q-a(s-q)\}\). The endpoint conditions \(S^{-1}(0)=1\) and \(S^{-1}(1)\ge0\) further imply \((1-q)/q\le a\le q/(1-q)\). Therefore,
\begin{align*}
\OPT_2(\D)
&\le \int_0^1\min\{1,q-a(s-q)\}\,ds \\
&=1-\frac{(1-q)^2}{2}\left(a+2+\frac1a\right)
 \le 1-2(1-q)^2,
\end{align*}
where the last inequality follows from \(a+1/a\ge2\). Therefore,
\[
 \frac{\ALG_2(-\infty;\D)}{\OPT_2(\D)}
\ge \frac{q}{1-2(1-q)^2}
\ge \frac{2+\sqrt2}{4},
\]
where the last inequality follows by minimizing over \(q\in[1/2,1]\). Since $\D$ was arbitrary, $\CR_2(-\infty) \ge (2 + \sqrt2) / 4$. Hill and Kertz show that the classical two-agent ratio satisfies
\(\CR_2(1)=(2+\sqrt2)/4\)~\cite{hillkertz1982}. Monotonicity in \(p\) now gives the claimed equality for every \(p\in[-\infty,1]\).

For \(n=3\), apply the hard-distribution construction (\Cref{subsubsec: hard distribution}) with \(a=1\) and \(b=3/2\). Equivalently, for sufficiently small \(\eps>0\), consider the distribution that places probabilities \(\eps\), \(1/2\), and \(1/2-\eps\) on the values \(1/\eps\), \(3\), and \(0\), respectively. The bounds from that construction give
\[
\CR_3(-\infty)
\le
\frac{1}{
 \left(1-\frac35\left(\frac34\right)^2\right)
 \left(1+\left[1-\left(\frac12\right)^3\right]\right)}
=\frac{128}{159}
\approx 0.80503.
\]
By comparison, the sharp classical three-agent ratio is
\(\CR_3(1)\approx 1/1.22138\approx0.81875\)~\cite{hillkertz1982}. Thus
\(\CR_3(-\infty)<\CR_3(1)\), completing the proof. \qed


\section{Proof of \Cref{prop:pareto}}\label{app:pareto}

Let $R_i$ be the event that the item is still available when agent $i$ arrives, and let $r_i=\Pr(R_i)$ be the probability that this event occurs. For an online rule $\pi$, for each $i<n$ with $r_i>0$, let $\alpha_i=\Pr(\pi=i\mid R_i)$ and set $q_i=1-\alpha_i$. If $r_i=0$, then agent $i$ and all later agents have zero utility, and the remaining thresholds may be chosen arbitrarily.

Since $R_i$ depends only on previous values and the rule's internal randomness, it is independent of $U_i$. Thus, conditional on $R_i$, the quantile $U_i$ remains uniform on $[0,1]$. Averaging over the histories that lead to $R_i$, the rule accepts agent $i$ on a total quantile mass $\alpha_i=1-q_i$.
Among all such acceptance rules, accepting
the upper interval $[q_i,1]$ maximizes the expected utility of $i$. Therefore, $v_i(\pi,\D)\le r_i S(q_i)$.

The item reaches agent $i+1$ precisely when it reaches agent $i$ and $\pi$ does not allocate to agent $i$. Therefore, $r_{i+1}=r_i(1-\alpha_i)=r_iq_i$. A threshold rule $\pi'$ with thresholds $q_1,\ldots,q_{n-1}$ also rejects agent $i$, conditional on reaching them, with probability $q_i$. Since both $\pi$ and $\pi'$ reach agent $1$ with probability $1$, the preceding recurrence shows inductively that $\pi$ and $\pi'$ have the same reach probability $r_i$ for every agent. Therefore, $v_i(\pi',\D)=r_iS(q_i)\ge v_i(\pi,\D)$ for all $i < n$. Furthermore, since $\pi'$ always allocates to the final agent when reached, $v_n(\pi',\D) = r_n\E[X] \geq v_n(\pi,\D)$. This proves the first claim.

For the second claim, fix a threshold rule $\pi$ with thresholds
$q_i\ge F(0)$. Suppose that some online rule Pareto dominates $\pi$. By the first claim, we may assume that the dominating rule is also a
threshold rule, with thresholds $\widetilde q_1,\ldots,\widetilde q_{n-1}$. We may also assume that $\widetilde q_i\ge F(0)$: replacing a threshold below $F(0)$ by $F(0)$
only rejects zero-valued realizations and cannot decrease any agent's
utility. Let $r_i=\prod_{j<i}q_j$, and $\widetilde r_i=\prod_{j<i}\widetilde q_j$
be the two rules' reach probabilities. 

We will show by induction that $\widetilde r_i\le r_i$ for every
$i\in[n]$. The statement is trivial for $i=1$, since $\widetilde r_1=r_1=1$. Suppose that $\widetilde r_i\le r_i$ (and $i < n$). If $r_i=0$, then $\widetilde r_i=0$, and hence $\widetilde r_{i+1}=r_{i+1}=0$. Now suppose that $r_i>0$. If $q_i<1$, then $S(q_i)>0$, so Pareto
domination implies $\widetilde r_i>0$ and $S(\widetilde q_i) \ge \frac{r_i}{\widetilde r_i}S(q_i) \ge S(q_i)$. Since $S$ is strictly decreasing on $[F(0),1]$, it follows that $\widetilde q_i\le q_i$. If $q_i=1$, then $\widetilde q_i\le q_i$ holds trivially. In either case, $\widetilde r_{i+1} =\widetilde r_i\widetilde q_i \le r_iq_i  =r_{i+1}$, completing the induction.

We now show that Pareto domination forces the two threshold rules to
have identical utility profiles. First, suppose that $q_i>0$ for every $i\in[n-1]$. The preceding induction also gives $\widetilde q_i\le q_i$ for every $i\in[n-1]$. Thus, whenever $\widetilde r_i<r_i$, we have $\widetilde r_{i+1}\le\widetilde r_iq_i<r_iq_i=r_{i+1}$. Any strict inequality therefore persists until agent $n$, giving $v_n(\widetilde\pi,\D)=\widetilde r_n\E[X]<r_n\E[X]=v_n(\pi,\D)$, contrary to Pareto domination. Hence $ \widetilde r_i=r_i$ for all $i \in [n]$. It follows (by the definition of $\widetilde r_i$ and $r_i$) that $\widetilde q_i=q_i$ for every $i\in[n-1]$, and therefore the two threshold rules have the same utility profile.

Finally, suppose that $q_k=0$, and let $k$ be the first such index. Then $F(0)=0$ and $S(q_k)=S(0)=\E[X]$. Pareto domination requires that $\widetilde r_kS(\widetilde q_k) \ge r_k\E[X]$. Since $\widetilde r_k\le r_k$ and $S(\widetilde q_k)\le\E[X]$, equality must hold throughout. Hence $\widetilde r_k=r_k$ and $S(\widetilde q_k)=\E[X]$. Since $F(0)=0$ and $S$ is strictly decreasing on $[0,1]$, this
implies $\widetilde q_k=0=q_k$. Moreover, if any earlier inequality $\widetilde r_i\le r_i$ were strict, it would persist until agent $k$,
since $q_i>0$ for every $i<k$. This would contradict $\widetilde r_k=r_k$. Thus the two rules agree through agent $k$, and both have zero reach probability afterward. Their utility profiles again coincide. Therefore no online rule can strictly Pareto dominate $\pi$. \qed

\section{Proofs for the Fair Division Results}\label{app:fair-division}

\subsection{Proof of~\Cref{prop:fair-division-ex-ante}}
The key idea of this proof is showing that for every allocation rule for the optimization problem, there is a rule for the fair division problem that achieves exactly $m$ times the expected value for each agent, and for every allocation rule for the fair division problem, there is a rule for the optimization problem that achieves exactly $\frac{1}{m}$ times the expected value for each agent. Further, for any online allocation rule for either problem, the corresponding rule for the other problem is also online. 

Fix $n,m$, and $\D$. 
Given a single-item allocation rule $\pi$,
apply it independently to every item. Each item contributes
$v_i(\pi,\D)$ to agent $i$'s expected utility, so linearity of
expectation gives total expected utility $mv_i(\pi,\D)$.
This construction preserves online feasibility.

Conversely, fix a many-item allocation rule, and let $A_i$ denote
the random bundle it allocates to agent $i$. Construct a single-item
rule $\pi$ as follows. Choose an item $K$ uniformly from $[m]$,
independently of all values and internal randomness. Use the real
single-item values $X_1,\ldots,X_n$ as the values for item $K$, and
independently sample the values for all other items from $\D$.
Simulate the many-item rule on this valuation matrix, and allocate
the real item to whichever agent receives item $K$, leaving it
unallocated if the simulated rule does so.

Conditional on any choice of $K$, the simulated valuation matrix
has the same distribution as the original many-item instance.
Therefore,
\[
v_i(\pi,\D)
=
\frac1m\sum_{j=1}^m
\E\bigl[X_{ij}\ind_{j\in A_i}\bigr]
=
\frac{\E[U_i]}m.
\]
If the many-item rule is online, supply the simulated values as
the corresponding agents arrive. Its decision at agent $i$ then
uses only the real values observed so far and internally generated
values, so the resulting single-item rule is also online.
The construction applies to offline rules as well.

Thus, the online and offline expected-utility sets in the many-item
problem are exactly $m$ times their single-item counterparts.
Since $w_p$ is positively homogeneous (scaling all entries by $\lambda$ scales the output by $\lambda$), it follows that
\(
\ALG^{\mathrm{ea}}_{n,m}(p;\D)=m\ALG_n(p;\D)\) and $
\OPT^{\mathrm{ea}}_{n,m}(p;\D)=m\OPT_n(\D)$.
Taking ratios and then the infimum over $\D$ gives
$\CR^{\mathrm{ea}}_{n,m}(p)=\CR_n(p)$. \qed

\subsection{Proof of~\Cref{prop:fair-division-impossibility}}
Fix $p\le0$ and $n\ge 2$, and consider the distribution
$\D=\operatorname{Bernoulli}(\lambda)$, where
$\lambda=2\log(n)/n$. If any agent has zero utility, the $p$-mean
is zero. Thus, positive welfare requires every agent to receive
at least one value-one item. Since  $m=n$, this means that every agent receives
exactly one item, which they value at one. Consequently, the
$p$-mean is either zero or one, and its expectation equals the
probability that every agent receives a value-one item.

\paragraph{Prophet.}

The prophet can give every agent a value-one item if there is
a permutation $\pi$ of the items such that $X_{i,\pi(i)}=1$
for every agent $i$. Define a bipartite graph $G$ whose two parts
represent the agents and items, with an edge between agent $i$
and item $j$ if and only if $X_{ij}=1$. Finding such an allocation
is equivalent to finding a perfect matching in $G$.

Since the valuations are independent Bernoulli random variables,
$G$ is a random bipartite graph $G(n,n,\lambda)$. For
$\lambda=2\log(n)/n$, this graph contains a perfect matching
with probability $1-o(1)$~\cite{frieze2015introduction}.
Therefore, $\OPT^{\mathrm{ep}}_{n,n}(p;\D)=1-o(1)$.

\paragraph{Online algorithm.}

Fix any online allocation rule and condition on the history
before the final agent arrives. Positive welfare is possible
only if each of the first $n-1$ agents has received exactly
one value-one item and exactly one item remains. On any such
history, the remaining item's value to the final agent is an
independent Bernoulli random variable with parameter $\lambda$.
Thus, the conditional probability of positive welfare is at
most $\lambda$. On every other history, it is zero.

It follows that every online rule has expected welfare at most
$\lambda$, so $\ALG^{\mathrm{ep}}_{n,n}(p;\D)\le\lambda$.
Therefore,
\[
\CR^{\mathrm{ep}}_{n,n}(p)
\le
\frac{\ALG^{\mathrm{ep}}_{n,n}(p;\D)}
     {\OPT^{\mathrm{ep}}_{n,n}(p;\D)}
\le
\frac{2\log(n)}{n(1-o(1))}
=
O\left(\frac{\log n}{n}\right).
\]
In particular, $\CR^{\mathrm{ep}}_{n,n}(p)\to0$ as
$n\to\infty$ for every $p\le0$. \qed

\subsection{Proof of \Cref{prop:fair-division-limit}}
\label{app:prop:fair-division-limit}


Fix a distribution $\D$, $n \ge 2$ and $p \in [-\infty, 1]$.
Fix a single-item allocation rule $\pi$, and apply it independently
to every item, using independent internal randomness for each copy.
Write agent $i$'s total utility as $U_i=\sum_{j=1}^m Y_{ij}$, where
$Y_{ij}$ is their utility from item $j$. For each agent $i$, the
variables $Y_{ij}$ are i.i.d.\ across items, with expectation
$v_i(\pi,\D)$ and finite mean since $0\le Y_{ij}\le X_{ij}$.
By the strong law of large numbers,
$U_i/m\to v_i(\pi,\D)$ almost surely. Since there are finitely many
agents and the $p$-mean is continuous on nonnegative utility
profiles, this gives
\(
w_p(\mathbf U/m)\longrightarrow w_p(\mathbf v(\pi,\D))\) almost surely.

We next show that the expectations also converge. Since the
$p$-mean is nonnegative, Fatou's lemma gives
\[
w_p(\mathbf v(\pi,\D))
\le
\liminf_{m\to\infty}\E[w_p(\mathbf U/m)].
\]
On the other hand, concavity of the $p$-mean and Jensen's
inequality give
$\E[w_p(\mathbf U/m)]\le
w_p(\E[\mathbf U/m])=w_p(\mathbf v(\pi,\D))$
for every $m$. Combining these bounds and using positive
homogeneity, we obtain
\[
\lim_{m\to\infty}\frac{\E[w_p(\mathbf U)]}{m}
=
w_p(\mathbf v(\pi,\D)).
\]

This provides a lower bound on the optimal ex-post welfare.
Indeed, for every online single-item rule $\pi$, independent
repetition is a feasible online many-item rule, so
\[
\liminf_{m\to\infty}
\frac{\ALG^{\mathrm{ep}}_{n,m}(p;\D)}m
\ge w_p(\mathbf v(\pi,\D)).
\]
Taking the supremum over online single-item rules gives a lower
bound of $\ALG_n(p;\D)$. The same argument for offline rules
gives a lower bound of $\OPT_n(\D)$.

For the upper bounds, Jensen's inequality shows that the
ex-post welfare of any rule is at most its ex-ante welfare.
Applying \Cref{prop:fair-division-ex-ante}, we therefore have
\[
\ALG^{\mathrm{ep}}_{n,m}(p;\D)\le m\ALG_n(p;\D),
\qquad
\OPT^{\mathrm{ep}}_{n,m}(p;\D)\le m\OPT_n(\D).
\]
Together, the upper and lower bounds show that
\[
\lim_{m\to\infty}\frac{\ALG^{\mathrm{ep}}_{n,m}(p;\D)}m
=\ALG_n(p;\D),
\qquad
\lim_{m\to\infty}\frac{\OPT^{\mathrm{ep}}_{n,m}(p;\D)}m
=\OPT_n(\D).
\]
Since $\OPT_n(\D)>0$, taking the ratio gives
\[
\lim_{m\to\infty}
\frac{\ALG^{\mathrm{ep}}_{n,m}(p;\D)}
     {\OPT^{\mathrm{ep}}_{n,m}(p;\D)}
=
\frac{\ALG_n(p;\D)}{\OPT_n(\D)}.
\]
This completes the proof. \qed

\subsection{Finite Convergence Bounds for Restricted Distributions}\label{app:bounds}

We now give quantitative bounds on the number of items needed
for $p$-mean welfare to approach the single-item
guarantee.

\begin{proposition}
\label{prop:fair-division-finite-convergence}
Fix $n\ge2$, $p\in[-\infty,1]$, $\delta\in(0,1)$, and a
distribution $\D$ with mean $\mu$.
Each of the following conditions is sufficient to guarantee
\[
\frac{\ALG^{\mathrm{ep}}_{n,m}(p;\D)}
     {\OPT^{\mathrm{ep}}_{n,m}(p;\D)}
\ge
\frac{\ALG_n(p;\D)}{\OPT_n(\D)}-\delta.
\]
\begin{enumerate}[leftmargin=*]
\item $\D$ has finite variance $\sigma^2$, and
\[
m\ge
\frac{9n^3}{\delta^2}
\left(1+\frac{\sigma^2}{\mu^2}\right).
\]
\item $X\le B\mu$ almost surely, and
\[
m\ge
\frac{24nB}{\delta^2}\log\left(\frac{2n}{\delta}\right).
\]
\end{enumerate}
\end{proposition}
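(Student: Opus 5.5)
We compare the independent repetition of an optimal single-item online rule against the ex-ante upper bound for the prophet. Normalize $\mu=1$, which is harmless by homogeneity. Let $\pi$ be an optimal online single-item rule for $w_p$. By \Cref{prop:equalizing-rule} and monotonicity in $p$, we may take $\pi=\pi^\star$, the egalitarian-optimal rule. It gives every agent expected utility $c=\ALG_n(-\infty;\D)$, and $c\ge\Gamma\OPT_n(\D)$ by \Cref{thm:main}. If instead we want the exact $\ALG_n(p;\D)$, we use the $p$-optimal threshold rule from \Cref{prop:pareto}. Its profile $\mathbf v$ may have unequal entries, but each satisfies $v_i\ge$ some positive quantity. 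I would handle this by noting $w_p(\mathbf v)\le\min_i v_i\cdot n^{1/|p|}$-type bounds only if needed; more simply, for $p\le 1$ the relevant lower bound $v_i\ge w_{-\infty}(\mathbf v)$ will suffice through the multiplicative-error argument below.

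The upper side is immediate. Jensen and \Cref{prop:fair-division-ex-ante} give $\OPT^{\mathrm{ep}}_{n,m}\le m\OPT_n(\D)$. Hence it suffices to show
\[
\E[w_p(\mathbf U)]\ge m\,w_p(\mathbf v)\,(1-\delta'),
\]
where $\delta'$ is chosen so that $(1-\delta')\ALG_n/\OPT_n\ge\ALG_n/\OPT_n-\delta$. Since the ratio is at most $1$, taking $\delta'=\delta$ works.

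For the lower side we use homogeneity and monotonicity. On the event $E=\{U_i\ge(1-\eta)m v_i\ \forall i\}$ we have $w_p(\mathbf U)\ge(1-\eta)m\,w_p(\mathbf v)$. Therefore
\[
\E[w_p(\mathbf U)]\ge(1-\eta)(1-\Pr[E^c])\,m\,w_p(\mathbf v).
\]
We choose $\eta=\delta/2$ and aim for $\Pr[E^c]\le\delta/2$ via a union bound over the $n$ agents, so each agent's failure probability must be at most $\delta/(2n)$. The key quantitative input is a lower bound on each $v_i$. For the equalizing rule $v_i=c$, and $c\ge\Gamma\OPT_n\ge\Gamma/n$, since $\OPT_n\ge\E[X]/n$. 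This gives $v_i\gtrsim 1/n$, which I expect is the source of the factors of $n$.

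The two tail bounds are applied per agent to $U_i=\sum_j Y_{ij}$, where the $Y_{ij}\in[0,X_{ij}]$ are i.i.d. with mean $v_i$.

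Under finite variance, Chebyshev gives
\[
\Pr[U_i<(1-\eta)mv_i]\le\frac{\operatorname{Var}(Y)}{m\eta^2v_i^2}\le\frac{(1+\sigma^2)n^2}{\Gamma^2 m\eta^2},
\]
using $\operatorname{Var}(Y)\le\E[X^2]=1+\sigma^2$. Requiring this to be at most $\delta/(2n)$ gives $m\gtrsim n^3(1+\sigma^2)/\delta^3$. That is one factor of $\delta$ too many. To match $\delta^{-2}$, I would instead bound the expectation directly. Using $w_p(\mathbf U)\ge\min_i U_i$ and $\E[\min_i U_i]\ge m c-\sum_i\E|U_i-mc|\ge mc-n\sqrt{m(1+\sigma^2)}$, we need $n\sqrt{(1+\sigma^2)/m}\le\delta c$ with $c\ge 1/(3n)$-ish. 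This yields $m\ge 9n^4\ldots$; getting exactly $n^3$ requires using $\OPT_n$ itself rather than $1/n$ in the normalization. Since the target error is relative to $\OPT_n$, the condition is $n\sqrt{(1+\sigma^2)/m}\le\delta\,\OPT_n\cdot(\text{const})$, and $\OPT_n\ge 1/n$ gives $m\ge 9n^4$. Recovering $n^3$ is the main obstacle. I would try replacing $\sum_i\E|U_i-mc|$ by $\E\max_i|U_i-mc|\le\sqrt{n\,m(1+\sigma^2)}$, which gives exactly $m\ge n^3(1+\sigma^2)/(\delta^2\cdot\text{const})$.

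Under bounded support, $Y\le B$ and $\E[Y]=v_i$, so a multiplicative Chernoff bound gives
\[
\Pr[U_i<(1-\eta)mv_i]\le\exp(-\eta^2mv_i/(2B)).
\]
With $v_i\ge$ a constant times $1/n$, $\eta=\delta/2$, and failure at most $\delta/(2n)$, this gives $m\gtrsim (nB/\delta^2)\log(2n/\delta)$. This matches the stated bound up to the constant $24$.
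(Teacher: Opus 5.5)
\textbf{Gap: the argument covers only $p=-\infty$, not general $p$.}

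The reduction ``by the equalizing-rule proposition and monotonicity in $p$ we may take $\pi=\pi^\star$'' is not valid. Because $\pi^\star$ has a constant profile, $w_p(\mathbf v(\pi^\star,\D))=\ALG_n(-\infty;\D)$ for every $p$. Running $\pi^\star$ on every item therefore only yields $\ALG^{\mathrm{ep}}_{n,m}(p;\D)/\OPT^{\mathrm{ep}}_{n,m}(p;\D)\ge\ALG_n(-\infty;\D)/\OPT_n(\D)-\delta$. This is strictly weaker than the claim whenever $\ALG_n(p;\D)>\ALG_n(-\infty;\D)$, and monotonicity in $p$ does not close that gap. For the egalitarian benchmark your steps are fine and match the paper's small-$p$ case:
\begin{itemize}
\item In the finite-variance case, the bound $\E[\max_i(mc-U_i)_+]\le\sqrt{nm\operatorname{Var}(Y)}$ together with $c\ge\mu/n$ gives the $n^3/\delta^2$ condition.
\item In the bounded-support case, a multiplicative Chernoff bound plus a union bound gives the stated condition.
\end{itemize}
You do not need the main theorem for $c\ge\mu/n$; allocating to a uniformly random agent already gives it.

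\textbf{What is missing for general $p$.} You need a quantitative handle on the coordinates of the $p$-optimal profile $\mathbf v$. The remark $v_i\ge w_{-\infty}(\mathbf v)$ is vacuous without a lower bound on $\min_i v_i$. The paper gets one from first-order optimality. Mixing $\mathbf v$ toward the feasible profile $\mu\mathbf e_i$ (give the item to agent $i$ unconditionally) cannot increase $w_p$. Hence $\mu g_i\le a$, where $a=\ALG_n(p;\D)$ and $g_i=\partial w_p/\partial v_i=\frac1n a^{1-p}v_i^{p-1}$, i.e.\ $v_i^{1-p}\ge\mu/(na^p)$. This yields $v_i\ge\mu/(3n)$ only for $p\le p_0=1/(1+\log n)$, and there your multiplicative argument goes through.

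\textbf{Why the multiplicative approach cannot work for $p$ near $1$.} No bound of order $\mu/n$ exists there. Already for $n=2$ and $X\sim\operatorname{Bernoulli}(1-\theta)$ with $\theta$ of order $(1-p)\log\frac1{1-p}$, the $p$-optimal rule gives agent $2$ expected utility of order $\theta\mu$. Any argument requiring every $U_i$ to be within a factor $1-\eta$ of $mv_i$ then needs $m\gtrsim\mu/\min_i v_i$. That is not uniform in $p$, whereas the stated thresholds are.

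\textbf{How the paper handles $p_0<p<1$.} It controls the welfare loss in aggregate rather than coordinate-wise.
\begin{itemize}
\item Finite variance: it uses the linearization $a-w_p(\mathbf z)\le\frac1p\sum_i g_i(v_i-z_i)_+$, with $g_i\le a/\mu$ and $1/p\le2\sqrt n$.
\item Bounded support: it uses the H\"older bound $\E[Z^p]\ge v^p-(1-p)v^{p-2}\operatorname{Var}(Z)$, with $\operatorname{Var}(Z_i)\le B\mu v_i/m$ and $\sum_i g_i\le na/\mu$.
\end{itemize}
The case $p=1$ is trivial because the ex-ante and ex-post objectives coincide.
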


\begin{proof}
For $p=1$, the ex-ante and ex-post objectives coincide, so
the result follows from \Cref{prop:fair-division-ex-ante}.
Henceforth, assume $p<1$.

Fix an optimal single-item rule for the $p$-mean objective,
and let $\mathbf v=(v_1,\ldots,v_n)$ be its expected-utility
profile. Such a rule exists by \Cref{prop:pareto}, since
threshold profiles depend continuously on thresholds in
the compact set $[0,1]^{n-1}$. For $p=-\infty$, choose the
equalizing rule from \Cref{prop:equalizing-rule}.

Write $a=w_p(\mathbf v)=\ALG_n(p;\D)$ and
$V=\OPT_n(\D)$. Applying the rule independently to each item,
including independent internal randomness, gives total
utilities $U_i=\sum_{j=1}^m Y_{ij}$, where the $Y_{ij}$
are independent across items, satisfy $0\le Y_{ij}\le X_{ij}$,
and have expectation $v_i$. Let $Z_i=U_i/m$.

By Jensen's inequality and
\Cref{prop:fair-division-ex-ante},
$\OPT^{\mathrm{ep}}_{n,m}(p;\D)\le mV$. Consequently,
\[
\frac{\ALG^{\mathrm{ep}}_{n,m}(p;\D)}
     {\OPT^{\mathrm{ep}}_{n,m}(p;\D)}
\ge
\frac{\E[w_p(\mathbf Z)]}{V}.
\]
Since $a/V\le1$, it suffices to prove
$\E[w_p(\mathbf Z)]\ge(1-\delta)a$.

\paragraph{Properties of an optimal single-item rule.}

Allocating to a uniformly random agent gives every agent
expected utility $\mu/n$, while every agent's expected utility
is at most $\mu$. Thus, $\mu/n\le a\le\mu$.

For finite $p<1$, every $v_i$ is positive. For $p\le0$,
this follows from $a>0$. For $0<p<1$, maximizing $w_p$ is equivalent to maximizing $\sum_i v_i^p$. If some coordinate is zero, mixing toward $(\mu/n,\ldots,\mu/n)$ with probability $\varepsilon$ increases the contribution of the zero coordinates by $\Omega(\varepsilon^p)$, while changing the contribution of the positive coordinates by at most $O(\varepsilon)$. Since $\varepsilon^p/\varepsilon\to\infty$ as $\varepsilon\downarrow0$, this contradicts optimality.

Define
\[
g_i=\frac{\partial w_p}{\partial v_i}(\mathbf v)
=\frac1n a^{1-p}v_i^{p-1}.
\]
This formula also holds for $p=0$, and
$\sum_i g_iv_i=a$. Allocating the item to agent $i$
regardless of its value gives the feasible profile
$\mu\mathbf e_i$. By optimality of $\mathbf v$, mixing
toward this profile cannot have a positive directional
derivative. Therefore, $
\mu g_i\le a$ and hence $
v_i^{1-p}\ge\frac{\mu}{na^p}$.

Let $p_0=1/(1+\log n)$. If $p\le0$, the preceding bound
and $a\ge\mu/n$ give $v_i\ge\mu/n$. If $0<p\le p_0$,
using $a\le\mu$ instead gives
\[
v_i\ge\mu n^{-1/(1-p)}
\ge\frac{\mu}{en}
\ge\frac{\mu}{3n}.
\]
For $p=-\infty$, the equalizing rule also satisfies
$v_i=a\ge\mu/n$. Thus, whenever $p\le p_0$, every agent
has expected utility at least $\mu/(3n)$.

For $p_0<p<1$, we will instead use the following bound,
valid for every nonnegative profile $\mathbf z$:
\[
a-w_p(\mathbf z)
\le
\frac1p\sum_i g_i(v_i-z_i)_+.
\]
To verify it, let $d_i=(v_i-z_i)_+$ and
$t=\sum_i g_id_i/a\in[0,1]$. Since
$(1-s)^p\ge1-s$ for $s\in[0,1]$,
\[
w_p(\mathbf z)^p
\ge
\frac1n\sum_i(v_i-d_i)^p
\ge a^p(1-t).
\]
Taking the $p$th root and using
$(1-t)^{1/p}\ge1-t/p$ proves the bound.

\paragraph{Finite variance.}

Suppose that $\D$ has finite variance $\sigma^2$, and
write $M_2=\E[X^2]=\sigma^2+\mu^2$. Since
$0\le Y_{ij}\le X_{ij}$ and the contributions are
independent across items,
$\operatorname{Var}(Z_i)\le M_2/m$.

First suppose that $p\le p_0$. Define
$T=\max_i(v_i-Z_i)_+/v_i$. Then $T\in[0,1]$ and
$\mathbf Z\ge(1-T)\mathbf v$ coordinatewise, so
$w_p(\mathbf Z)\ge(1-T)a$. Moreover,
\[
\E[T]
\le
\sqrt{\sum_i
\frac{\E[(v_i-Z_i)_+^2]}{v_i^2}}
\le
\sqrt{\sum_i\frac{\operatorname{Var}(Z_i)}{v_i^2}}
\le
3\sqrt{\frac{n^3M_2}{m\mu^2}}.
\]
Here the first inequality follows from bounding the maximum
by the Euclidean norm and applying Jensen's inequality.

Now suppose that $p_0<p<1$. Using the bound from the
preceding paragraph, $g_i\le a/\mu$, and
$\E[(v_i-Z_i)_+]\le\sqrt{\operatorname{Var}(Z_i)}$,
we obtain
\[
a-\E[w_p(\mathbf Z)]
\le
\frac1p\sum_i g_i\E[(v_i-Z_i)_+]
\le
\frac{na}{p\mu}\sqrt{\frac{M_2}{m}}
\le
2a\sqrt{\frac{n^3M_2}{m\mu^2}},
\]
where the last inequality uses
$1/p<1+\log n\le2\sqrt n$.

Thus, in either case,
\[
\E[w_p(\mathbf Z)]
\ge
a\left(
1-3\sqrt{\frac{n^3}{m}
\left(1+\frac{\sigma^2}{\mu^2}\right)}
\right).
\]
The stated bound on $m$ makes the error at most $\delta$.

\paragraph{Bounded values.}

Now suppose that $X\le B\mu$ almost surely. First consider
$p\le p_0$. Since $Y_{ij}\in[0,B\mu]$ and
$v_i\ge\mu/(3n)$, a multiplicative Chernoff bound gives
\[
\Pr\left[Z_i<(1-\delta/2)v_i\right]
\le
\exp\left(-\frac{\delta^2mv_i}{8B\mu}\right)
\le
\exp\left(-\frac{\delta^2m}{24nB}\right).
\]
Under the stated bound on $m$, a union bound shows that
$\mathbf Z\ge(1-\delta/2)\mathbf v$ coordinate wise with
probability at least $1-\delta/2$. Nonnegativity,
monotonicity, and positive homogeneity therefore give
\[
\E[w_p(\mathbf Z)]
\ge(1-\delta/2)^2a
\ge(1-\delta)a.
\]

It remains to consider $p_0<p<1$. We use a moment bound
to control the expected welfare directly. For any
nonnegative random variable $Z$ with mean $v>0$ and
finite second moment, H\"older's inequality gives
\[
\E[Z^p]
\ge
\frac{v^{2-p}}{(\E[Z^2])^{1-p}}
=
v^p\left(1+\frac{\operatorname{Var}(Z)}{v^2}\right)^{-(1-p)}
\ge
v^p-(1-p)v^{p-2}\operatorname{Var}(Z).
\]
The last inequality follows from convexity of
$t\mapsto(1+t)^{-(1-p)}$.

Since $Y_{ij}^2\le B\mu Y_{ij}$,
$\operatorname{Var}(Z_i)\le B\mu v_i/m$.
Applying the moment bound and using
$\sum_i g_i\le na/\mu$, we obtain
\[
\begin{aligned}
\frac1n\sum_i\E[Z_i^p]
&\ge
a^p-\frac{(1-p)B\mu}{nm}\sum_i v_i^{p-1}\\
&=
a^p-\frac{(1-p)B\mu}{m}a^{p-1}\sum_i g_i\\
&\ge
a^p\left(1-\frac{(1-p)nB}{m}\right).
\end{aligned}
\]
The assumed bound on $m$ implies $m\ge nB$. Since
$x\mapsto x^{1/p}$ is convex, Jensen's inequality gives
\[
\begin{aligned}
\E[w_p(\mathbf Z)]
&\ge
\left(\frac1n\sum_i\E[Z_i^p]\right)^{1/p}\\
&\ge
a\left(1-\frac{(1-p)nB}{pm}\right)\\
&\ge
a\left(1-\frac{nB\log n}{m}\right),
\end{aligned}
\]
where the final inequality uses $(1-p)/p<\log n$.
The stated bound on $m$ makes this error at most $\delta$,
completing the proof.
\end{proof}

\section{Technical Details for the Proof of \Cref{lem:variational-bound}}
\subsection{Proof of \Cref{lem:maximizer-existence}}\label{app:maximizer-existence}
Observe that $\Lambda_R>1$. Indeed, define
\[
    L_R(z)=
    \begin{cases}
        \left(1-\dfrac1R\right)z, & 0\le z<R,\\[2mm]
        +\infty, & z\ge R.
    \end{cases}
\]
Then $\C(L_R) = \frac{1}{1-1/R}\int_1^R\frac{dz}{z^2} =1$, whereas
\[
    \I(L_R)
    =
    \int_0^R e^{-(1-1/R)z}\,dz
    =
    \frac{1-e^{-(R-1)}}{1-1/R}
    >1.
\]
The final inequality follows from $e^{R-1}>R$.

\paragraph{Existence of a maximizer.}

We first show that the supremum defining $\Lambda_R$ is attained.
Let $(G_i)$ be a sequence in $\F_R$ such that $\I(G_i)\longrightarrow\Lambda_R$, and define $\varphi\colon[0,\infty]\to[0,1]$, as $\varphi(x)=\frac{x}{1+x}$, with $\varphi(+\infty)=1$. Let
\[
    G'_i=\varphi\circ G_i.
\]
Each $G'_i$ is nondecreasing and takes values in $[0,1]$.  We use the following form of Helly's selection theorem.

\begin{lemma}[Helly's selection theorem {\cite[Lemma 13.15]{realanalysis}}]
\label{lem:helly}
Let $(f_i)$ be a sequence of nondecreasing functions
$f_i\colon [0, T]\to[0,1]$. Then there exists a nondecreasing function
$f\colon [0, T] \to[0,1]$ and a subsequence $(f_{i_j})$ such that $\lim_{j \to \infty} f_{i_j}(x) = f(x)$ for every $x\in[0, T]$.
\end{lemma}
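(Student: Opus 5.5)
The plan is the classical diagonal argument: first obtain convergence on a countable dense set, then extend by monotonicity to all points except possibly countably many, and finally handle those exceptional points with one more diagonal extraction. Throughout, the only facts used are boundedness in $[0,1]$ (for Bolzano--Weierstrass) and monotonicity.

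\textbf{Step 1 (convergence on a dense set).} Let $Q=(\mathbb Q\cap[0,T])\cup\{0,T\}$ and enumerate it as $x_1,x_2,\ldots$. Since $(f_i(x_1))$ lies in $[0,1]$, Bolzano--Weierstrass gives a subsequence converging at $x_1$. From it, extract a further subsequence converging at $x_2$, and continue. The diagonal subsequence $(g_j)$ then converges at every $x\in Q$; call the limit $\phi(x)$. Pointwise limits preserve weak inequalities, so $\phi$ is nondecreasing on $Q$ with values in $[0,1]$.

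\textbf{Step 2 (extension and convergence at continuity points).} Define $\bar f(x)\eqdef\inf\{\phi(y):y\in Q,\ y\ge x\}$ for $x\in[0,T]$. This function is nondecreasing, takes values in $[0,1]$, and agrees with $\phi$ on $Q$ (by monotonicity of $\phi$). Let $D$ be the set of discontinuity points of $\bar f$; as $\bar f$ is monotone, $D$ is countable. Fix $x\notin D$ and $\eta>0$. By continuity of $\bar f$ at $x$ and density of $Q$, choose $y<x<y'$ in $Q$ with $\bar f(y')-\bar f(y)<\eta$. Monotonicity of each $g_j$ gives $g_j(y)\le g_j(x)\le g_j(y')$. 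Letting $j\to\infty$,
\[
\bar f(y)\le\liminf_j g_j(x)\le\limsup_j g_j(x)\le\bar f(y'),
\]
so both the $\liminf$ and the $\limsup$ lie within $\eta$ of $\bar f(x)$. Since $\eta$ was arbitrary, $g_j(x)\to\bar f(x)$. This is the step I expect to need the most care: one must choose the extension correctly and use the sandwich so that convergence is genuinely pointwise at every continuity point. (At the endpoints $0,T$ only a one-sided sandwich is available, but these points already lie in $Q$.)

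\textbf{Step 3 (the countably many remaining points).} The set $D\setminus Q$ is countable. Apply the diagonal extraction of Step 1 again to $(g_j)$ along an enumeration of $D\setminus Q$, obtaining a subsequence $(f_{i_j})$ that converges at every point of $D$. Convergence at points outside $D$ is inherited, since subsequences of convergent sequences converge to the same limit. Set $f(x)\eqdef\lim_j f_{i_j}(x)$ for every $x\in[0,T]$.

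\textbf{Conclusion.} The function $f$ is a pointwise limit of nondecreasing $[0,1]$-valued functions, so it is itself nondecreasing and $[0,1]$-valued. Moreover $f_{i_j}(x)\to f(x)$ for every $x\in[0,T]$, which is the claim.
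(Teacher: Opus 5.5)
Your proposal is correct. It is the standard diagonal-extraction proof: convergence on a countable dense set, extension by monotonicity to every continuity point of the limit, and a second diagonal extraction over the countably many discontinuities. Every step checks out, including placing $0$ and $T$ in $Q$ to handle the endpoints. The paper does not prove this lemma itself but cites it from a standard real-analysis text, where it is proved in essentially this way.
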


Applying Helly's selection theorem (\Cref{lem:helly}) to $(G'_i)$ on $[0,R]$, there exist indices
$i_1<i_2<\cdots$ and a nondecreasing function $G'\colon[0,R]\to[0,1]$ such that
$G'_{i_j}(z)\longrightarrow G'(z)$ for every $z\in[0,R]$. Since $G'_i(z)=1$ for every $z\ge R$, we extend $G'$ by setting
$G'(z)=1$ for $z\ge R$.

Define
\[
    G^\star(z)=\varphi^{-1}(G'(z))
    =\frac{G'(z)}{1-G'(z)},
\]
where the right-hand side is interpreted as $+\infty$ when $G'(z)=1$.
Then $G_{i_j}(z)\longrightarrow G^\star(z)$
for every $z\ge0$. Pointwise limits preserve nonnegativity, monotonicity, and convexity. Moreover, $G^\star(0)=0$, and since $G_{i_j}(z)=+\infty$ for every
$z\ge R$, we also have $G^\star(z)=+\infty$ for every $z\ge R$.
Fatou’s lemma allows us to upper-bound the integral of the pointwise limit by the liminf of the integrals:
\[
    \C(G^\star) =\int_1^\infty
        \lim_{j \to\infty}\frac{1}{zG_{i_j}(z)}\,dz \le
        \liminf_{j \to\infty}
        \int_1^\infty\frac{dz}{zG_{i_j}(z)}
    \le 1.
\]
Thus \(G^\star\in\F_R\).

It remains to verify that the objective value is preserved in the
limit. We have that $0\le e^{-G_{i_j}(z)}\le \ind\{z<R\}$, and the function on the right is integrable. Therefore, the dominated
convergence theorem gives
\[
\I(G^\star) =\int_0^\infty e^{-G^\star(z)}\,dz =\lim_{j \to\infty}\int_0^\infty e^{-G_{i_j}(z)}\,dz =\Lambda_R.
\]
Hence \(G^\star\) is a maximizer over \(\F_R\), and since the original
function \(G\) belongs to \(\F_R\), $\I(G)\le\I(G^\star)$.

\paragraph{Basic properties of the maximizer.} Define $M\coloneqq\sup\{z\ge0:G^\star(z)<+\infty\}$ and $J\coloneqq\sup\{z\ge0:G^\star(z)=0\}$. Since $G^\star\in\F_R$, we have $M\le R$. 

First, $M>1$. Indeed, we showed above that $\I(G^\star)=\Lambda_R>1$, so if $M\le1$, then
\[
    \I(G^\star)
    =\int_0^M e^{-G^\star(z)}\,dz
    \le M
    \le1,
\]
which is a contradiction.

We next show that $J<1$. Suppose instead that $J\ge1$. Since
$G^\star$ is nonnegative and nondecreasing, this implies
$G^\star(1)=0$. Choose any $z_0\in(1,M)$. If
$G^\star(z_0)=0$, then $\C(G^\star)=+\infty$. Otherwise, convexity
gives, for $1<z<z_0$, $G^\star(z) \le \frac{z-1}{z_0-1}G^\star(z_0)$. Consequently,
\[
\C(G^\star) \ge \int_1^{z_0}\frac{dz}{zG^\star(z)} \ge \frac{z_0-1}{G^\star(z_0)} \int_1^{z_0}\frac{dz}{z(z-1)} =+\infty,
\]
contradicting $\C(G^\star)\le1$. Hence $0\le J<1<M\le R$.

We also claim that $\C(G^\star)=1$. Since $J<1<M$, the function $G^\star$ is finite and strictly
positive on $(1,M)$, so $\C(G^\star)>0$. Suppose that $c\coloneqq\C(G^\star)<1$. Then $cG^\star\in\F_R$ and $\C(cG^\star) =\frac{\C(G^\star)}{c} =1$. Moreover, since $0<c<1$ and $G^\star(z)>0$ for $1<z<M$, $e^{-cG^\star(z)}>e^{-G^\star(z)}$. It follows that $\I(cG^\star)>\I(G^\star)$, contradicting the maximality of $G^\star$ over $\F_R$. Therefore
$\C(G^\star)=1$. \qed

\subsection{Proof of \Cref{lem:first-variation}}\label{app:first-variation}

Fix \(a\in[0,M)\), abbreviate \(G^\star\) to \(G\), and write
\[
h=h_a,
\qquad
G_\varepsilon=G+\varepsilon h,
\qquad
c(\varepsilon)=\C(G_\varepsilon).
\]
Recall that \(J<1<M\), \(G\) is finite on \([0,M)\), and
\(\C(G)=1\). In particular, \(G(1)>0\), and monotonicity gives
\(G(z)\ge G(1)\) for \(z\ge1\).

We first justify the derivative of the constraint. Since
\(G_\varepsilon=+\infty\) on \((M,\infty)\), for every
\(\varepsilon>0\),
\[
\frac{c(\varepsilon)-c(0)}{\varepsilon}
=
-\int_1^M
\frac{h(z)}
{zG(z)\bigl(G(z)+\varepsilon h(z)\bigr)}
\,dz.
\]
The integrand converges pointwise as \(\varepsilon\downarrow0\) to
\(-h(z)/(zG(z)^2)\). Moreover,
\[
0\le
\frac{h(z)}
{zG(z)\bigl(G(z)+\varepsilon h(z)\bigr)}
\le
\frac{h(z)}{zG(z)^2}
\le
\frac{M}{G(1)}\frac{1}{zG(z)}.
\]
The final function is integrable because \(\C(G)=1\). Dominated
convergence therefore gives
\[
c'(0+)
=
-\int_1^M\frac{h(z)}{zG(z)^2}\,dz
=
-A(a).
\]

We next differentiate the unnormalized objective. For
\(\varepsilon>0\),
\[
\frac{\I(G_\varepsilon)-\I(G)}{\varepsilon}
=
-\int_0^M
e^{-G(z)}
\frac{1-e^{-\varepsilon h(z)}}{\varepsilon}
\,dz.
\]
Since \(0\le 1-e^{-t}\le t\) for \(t\ge0\), the absolute value of the
integrand is bounded by
\[
h(z)e^{-G(z)}\le M\ind\{z<M\},
\]
which is integrable. Dominated convergence thus yields
\[
\left.\frac{d}{d\varepsilon}\I(G+\varepsilon h)
\right|_{\varepsilon=0+}
=
-\int_0^M h(z)e^{-G(z)}\,dz
=
-B(a).
\]

The same argument applies to the scaling perturbation. For
\(|\varepsilon|\le1/2\), the derivative of
\(e^{-(1+\varepsilon)G(z)}\) is bounded in absolute value by
\(G(z)e^{-G(z)/2}\), which is uniformly bounded for \(G(z)\ge0\).
Hence differentiation under the integral sign is valid and gives
\[
\left.
\frac{d}{d\varepsilon}
\I\bigl((1+\varepsilon)G\bigr)
\right|_{\varepsilon=0}
=
-\int_0^M G(z)e^{-G(z)}\,dz
=
-\lambda.
\]

It remains to differentiate the normalized perturbation. By definition,
\[
\widetilde G_{a,\varepsilon}
=
c(\varepsilon)G_\varepsilon
=
c(\varepsilon)\bigl(G+\varepsilon h\bigr).
\]
Because \(c(0)=1\) and \(c'(0+)=-A(a)\), for every \(z<M\),
\[
\frac{\widetilde G_{a,\varepsilon}(z)-G(z)}{\varepsilon}
\longrightarrow
h(z)-A(a)G(z).
\]
We verify that this pointwise derivative may be passed through the
objective integral. Since \(c(\varepsilon)\to1\), for all sufficiently
small \(\varepsilon>0\) we have \(1/2\le c(\varepsilon)\le2\), and
\((c(\varepsilon)-1)/\varepsilon\) is bounded. The mean value theorem
therefore gives
\[
\left|
\frac{
e^{-\widetilde G_{a,\varepsilon}(z)}-e^{-G(z)}
}{\varepsilon}
\right|
\le
\bigl(KG(z)+2M\bigr)e^{-G(z)/2}
\]
for some constant \(K\) independent of \(z\) and \(\varepsilon\).
The right-hand side is bounded on \([0,M)\), since both
\(x e^{-x/2}\) and \(e^{-x/2}\) are bounded on \([0,\infty)\), and it
vanishes outside the finite interval \([0,M)\). It is therefore an
integrable dominating function. Dominated convergence now gives
\begin{align*}
\Psi(a)
&=
\left.
\frac{d}{d\varepsilon}
\I(\widetilde G_{a,\varepsilon})
\right|_{\varepsilon=0+} \\
&=
-\int_0^M
\bigl(h(z)-A(a)G(z)\bigr)e^{-G(z)}\,dz \\
&=
\lambda A(a)-B(a).
\end{align*}

Finally, all three quantities are finite and strictly positive. Indeed,
\[
A(a)
\le
\frac{M}{G(1)}
\int_1^M\frac{dz}{zG(z)}
=
\frac{M}{G(1)}
<\infty,
\qquad
B(a)\le M^2<\infty,
\]
and \(\lambda<\infty\) because \(xe^{-x}\le1/e\) for \(x\ge0\).
Because \(a<M\) and \(M>1\), the integrand defining \(A(a)\) is
strictly positive on
\((\max\{1,a\},M)\), while the integrand defining \(B(a)\) is strictly
positive on \((a,M)\). Finally, \(G(z)>0\) on \((J,M)\), so the
integrand defining \(\lambda\) is strictly positive there. Hence
\(A(a),B(a),\lambda>0\). \qed

\subsection{Proof of \Cref{lem:psi-derivatives}}\label{app:psi-derivatives}

Abbreviate \(G^\star\) to \(G\), and define
\[
f(z)\coloneqq\frac{1}{zG(z)^2}
\quad (1<z<M),
\qquad
g(z)\coloneqq e^{-G(z)}
\quad (0<z<M).
\]
Both functions are integrable on their respective domains. Indeed,
\(0\le g\le1\), while \(G(z)\ge G(1)>0\) for \(z\ge1\), and hence
\[
\int_1^M f(z)\,dz
\le
\frac{1}{G(1)}
\int_1^M\frac{dz}{zG(z)}
=
\frac{1}{G(1)}
<\infty.
\]
Moreover, because \(G\) is finite and convex on \([0,M)\), it is
continuous on \((0,M)\). Thus \(g\) is continuous on \((0,M)\), and
\(f\) is continuous on \((1,M)\).

We will repeatedly use the following elementary differentiation
formula. If \(w\) is integrable and nonnegative on \((\ell,M)\), then
\[
K_w(a)\coloneqq\int_a^M(z-a)w(z)\,dz
\]
satisfies
\[
K_w'(a)=-\int_a^M w(z)\,dz.
\]
To see this without invoking an unproved Leibniz rule, Tonelli's theorem
gives
\[
K_w(a)
=
\int_a^M\int_a^z w(z)\,dt\,dz
=
\int_a^M\left(\int_t^M w(z)\,dz\right)dt.
\]
The tail integral \(t\mapsto\int_t^M w(z)\,dz\) is continuous, so the
fundamental theorem of calculus gives the asserted derivative. If
\(w\) is continuous at \(a\), differentiating the tail integral once
more gives
\[
K_w''(a)=w(a).
\]

Apply this formula first to
\[
B(a)=\int_a^M(z-a)g(z)\,dz.
\]
For every \(a\in[0,M)\), where the derivative at \(a=0\) is understood
as a right derivative,
\[
B'(a)=-\int_a^M e^{-G(z)}\,dz.
\]
This derivative is continuous in \(a\). Since \(g\) is continuous on
\((0,M)\), we also have
\[
B''(a)=e^{-G(a)}
\qquad (0<a<M).
\]

We next differentiate \(A\). If \(0\le a\le1\), then \(z\ge1\) implies
\(h_a(z)=z-a\), so
\[
A(a)=\int_1^M(z-a)f(z)\,dz
\]
is affine in \(a\), with
\[
A'(a)=-\int_1^M f(z)\,dz
=
-\int_1^M\frac{dz}{zG(z)^2}.
\]
In particular, \(A''(a)=0\) for \(0<a<1\).

If \(1<a<M\), then
\[
A(a)=\int_a^M(z-a)f(z)\,dz.
\]
The preceding differentiation formula gives
\[
A'(a)
=
-\int_a^M f(z)\,dz
=
-\int_a^M\frac{dz}{zG(z)^2},
\]
and, since \(f\) is continuous on \((1,M)\),
\[
A''(a)=f(a)=\frac{1}{aG(a)^2}.
\]
The two formulas for \(A'\) agree at \(a=1\), because
\[
\lim_{a\downarrow1}
\left(-\int_a^M f(z)\,dz\right)
=
-\int_1^M f(z)\,dz.
\]
Thus \(A\) is continuously differentiable on \([0,M)\), with a
one-sided derivative at \(0\), and is twice continuously
differentiable separately on \((0,1)\) and \((1,M)\).

By \Cref{lem:first-variation},
\(\Psi(a)=\lambda A(a)-B(a)\). Combining the preceding derivative
formulas gives
\[
\Psi'(a)
=
\begin{cases}
\displaystyle
-\lambda\int_1^M\frac{dz}{zG(z)^2}
+\int_a^M e^{-G(z)}\,dz,
&0\le a\le1,\\[4mm]
\displaystyle
\int_a^M
\left(
e^{-G(z)}
-\frac{\lambda}{zG(z)^2}
\right)dz,
&1<a<M.
\end{cases}
\]
The expressions agree at \(a=1\), and the continuity of tail integrals
shows that \(\Psi'\) is continuous on \([0,M)\). Therefore \(\Psi\) is
continuously differentiable there.

Finally, differentiating once more on either side of \(a=1\) yields
\[
\Psi''(a)
=
\begin{cases}
-e^{-G(a)}, &0<a<1,\\[2mm]
\displaystyle
\frac{\lambda}{aG(a)^2}-e^{-G(a)},
&1<a<M.
\end{cases}
\]
Because \(G\) is continuous on \((0,M)\), these expressions are
continuous on \((0,1)\) and \((1,M)\), respectively. Hence \(\Psi\) is
twice continuously differentiable away from \(a=1\), as claimed. \qed

\subsection{Proof of~\Cref{lem:averaged-first-order-condition}}\label{app:lem:averaged-first-order-condition}

We average the first-order effect $\Psi(a)$ over all the hinges
appearing in the representation of $G^\star$. The key point is that
averaging these hinges against $\mu^\star$ reproduces $G^\star$ itself.

We begin with $A$. Recalling its definition and applying Tonelli's
theorem to the nonnegative integrand, we obtain
\begin{align*}
    \int_{[J,M)}A(a)\,\mu^\star(da)
    &=
    \int_{[J,M)}
    \left(
        \int_1^M
        \frac{h_a(z)}{zG^\star(z)^2}\,dz
    \right)\mu^\star(da)\\
    &=
    \int_1^M
    \frac{
        \int_{[J,M)}h_a(z)\,\mu^\star(da)
    }{zG^\star(z)^2}\,dz\\
    &=
    \int_1^M
    \frac{G^\star(z)}{zG^\star(z)^2}\,dz\\
    &=
    \int_1^M\frac{dz}{zG^\star(z)}\\
    &=
    \C(G^\star)
    =1.
\end{align*}

The analogous calculation for $B$ gives
\begin{align*}
    \int_{[J,M)}B(a)\,\mu^\star(da)
    &=
    \int_{[J,M)}
    \left(
        \int_0^M
        h_a(z)e^{-G^\star(z)}\,dz
    \right)\mu^\star(da)\\
    &=
    \int_0^M
    \left(
        \int_{[J,M)}h_a(z)\,\mu^\star(da)
    \right)
    e^{-G^\star(z)}\,dz\\
    &=
    \int_0^M
    G^\star(z)e^{-G^\star(z)}\,dz\\
    &=
    \lambda.
\end{align*}
Again, Tonelli's theorem applies because every integrand being
exchanged is nonnegative.

Since $\Psi(a)=\lambda A(a)-B(a)$, these two identities imply
\begin{align*}
    \int_{[J,M)}\Psi(a)\,\mu^\star(da)
    &=
    \lambda
    \int_{[J,M)}A(a)\,\mu^\star(da)
    -
    \int_{[J,M)}B(a)\,\mu^\star(da)\\
    &=\lambda\C(G^\star)-\lambda\\
    &=0. \qedhere
\end{align*}

\subsection{Proof of \Cref{lem:empty-support}}\label{app:empty-support}

Abbreviate $G^\star$ to $G$. It remains to exclude support in
$(1,M)$. Define $m(z)=zG(z)^2e^{-G(z)}$. By
\Cref{lem:first-variation,lem:psi-derivatives}, for $1<a,z<M$,
\[
\Psi''(z)=\frac{\lambda-m(z)}{zG(z)^2},
\qquad
\Psi(a)=\int_a^M(z-a)\Psi''(z)\,dz.
\]

We first obtain a boundary condition on $m$. Since $G$ is
nondecreasing, the limit $L=\lim_{z\uparrow M}G(z)$ exists
in $(0,\infty]$. Consequently,
$m_M=\lim_{z\uparrow M}m(z)$ also exists, with $m_M=0$
if $L=+\infty$. If $m_M<\lambda$, then $\Psi''(z)>0$
for all $z$ sufficiently close to $M$. The integral identity
above would therefore give $\Psi(a)>0$ for $a$ sufficiently
close to $M$, contradicting $\Psi\le0$. Hence
$m_M\ge\lambda$, and in particular $L<\infty$.

The key observation is that $m$ has at most one peak.
On the interval where $G(z)\le2$, it is strictly increasing,
since $x\mapsto x^2e^{-x}$ is nondecreasing on $(0,2]$.
On the interval where $G(z)\ge2$, we have
\[
\log m(z)=\log z+2\log G(z)-G(z).
\]
The function $x\mapsto2\log x-x$ is concave and
nonincreasing on $[2,\infty)$, so its composition with
the convex function $G$ is concave. Adding $\log z$
makes $\log m$ strictly concave on this interval.
Thus, if $m(a)\ge\lambda$ for some $a\in(1,M)$,
then $m(z)>\lambda$ for every $a<z<M$: this follows
from strict increase up to $G=2$ and strict log-concavity
thereafter, using $m_M\ge\lambda$ at the right endpoint.

Now suppose that $a\in\operatorname{supp}(\mu^\star)
\cap(1,M)$. By \Cref{eq:psi-zero-on-support},
$\Psi(a)=0$. Since $\Psi\le0$, this is a local maximum,
so $\Psi''(a)\le0$ and hence $m(a)\ge\lambda$.
The preceding observation gives $m(z)>\lambda$,
and therefore $\Psi''(z)<0$, for every $a<z<M$.
But the integral identity then gives $\Psi(a)<0$,
a contradiction. Thus
$\operatorname{supp}(\mu^\star)\cap(1,M)=\varnothing$,
completing the proof.
 \qed

\end{document}